\newcommand{\cV}{{\cal V}} %set of shared variables
\newcommand{\cL}{{\cal L}} %set of labels
\newcommand{\cM}{{\cal M}} %set of matrices
\newcommand{\cT}{{\cal T}} %set of matrices representing threads
\newcommand{\cS}{{\cal S}} %set of matrices representing semaphores
\newcommand{\cP}{{\cal P}} %program P
\newcommand\NUMSV[1]{\mbox{NSV}(#1)}
\newcommand\SV{\mbox{\scriptsize SV}} %index for set of lables (shared variables only)
\newcommand\NSV{\mbox{\scriptsize NSV}} %number shared variables on a edge or in a basic block
\newcommand\SP{\mbox{\scriptsize S}} %index for set of lables (sync points)
\newcommand\V{\mbox{\scriptsize V}}  %index for set of lables (all shared variable and non-shared variable labels, without sync points)
\newcommand\merry{\circ}
\newcommand{\bigo}[1]{ O\bigr( #1 \bigl) }
\newcommand\SumFromTo[3]{\overset{#2}{\underset{#1}{\sum}} #3}
\newcommand\SumFromToText[3]{{\sum}^{#2}_{#1} #3}
\newcommand\ProdFromToText[3]{\prod_{#1}^{#2}\, #3}
\newcounter{tbdcounter}
\begin{document}

    \pagestyle{plain} %'plain' displays page number in the footer
    \pagenumbering{arabic}

    \title{Shared Memory Concurrent System Verification using Kronecker Algebra}

    \subtitle{\small Technical Report 183/1-155}

    \author{Robert Mittermayr \and Johann Blieberger}

    \institute{Institute of Computer-Aided Automation, TU Vienna, Austria}

    \maketitle

    \begin{abstract}
        The verification of multithreaded software is still a challenge. This comes mainly from the fact that the number of thread interleavings grows exponentially in the number of threads. The idea that thread interleavings can be studied with a matrix calculus is a novel approach in this research area. Our sparse matrix representations of the program are manipulated using a lazy implementation of Kronecker algebra. One goal is the generation of a data structure called \emph{Concurrent Program Graph} (CPG) which describes all possible interleavings and incorporates synchronization while preserving completeness. We prove that CPGs in general can be represented by sparse adjacency matrices. Thus the number of entries in the matrices is linear in their number of lines. Hence efficient algorithms can be applied to CPGs. In addition, due to synchronization only very small parts of the resulting matrix are actually needed, whereas the rest is unreachable in terms of automata. Thanks to the lazy implementation of the matrix operations the unreachable parts are never calculated. This speeds up processing significantly and shows that this approach is very promising.

        Various applications including data flow analysis can be performed on CPGs. Furthermore, the structure of the matrices can be used to prove properties of the underlying program for an arbitrary number of threads. For example, deadlock freedom is proved for a large class of programs.
    \end{abstract}
%
    %\footnotetext{\today}
    %
    %for sigplan doc class
    %\category{D.2.4}{Software Engineering}{Software/Program Verification}%[Correctness proofs]
    %\category{F.3.1}{Logics and Meanings of Programs}{Specifying and Verifying and Reasoning about Programs}
    %\category{F.3.2}{Logics and Meanings of Programs}{Semantics of Programming Languages}[Program analysis]
%
    %\terms
    %Verification, Concurrency
%
    %\keywords
    %Shared Memory Concurrent Systems, Interleavings Semantics, Deadlocks, Synchronization, Sequentialization, Kronecker Algebra
%
    \begin{section}{Introduction}\label{section_introduction}
    %For safety-critical systems, dependable systems or robust embedded systems, software has to be provably correct. In particular, this is a very important issue in the fields of medical systems, aviation, rail, and automotive industries.
%
    %\TBD{should we move the client-server example to Subsect.~\ref{subsection_lazyImpl}?}\\
    %\TBD{should we cite Lipton?, No}\\
    %\TBD{should we cite TACAS'08 paper?, No}\\
    %\TBD{ignore this comment: ``Sync. over shared variables''?}\\
    %\TBD{TACAS: remove forward references}\\
    %\TBD{TACAS: explain false positives for deadlocks in a sentence, or should we write an example?}\\
    %\TBD{TACAS: concerning the different views, matrix, automata \dots, should we clarify somewhere?, where should we state that we only work on matrices}\\
    %\TBD{TACAS: explain lazy approach in more detail}\\
    With the advent of multi-core processors scientific and industrial interest focuses on the verification of multithreaded applications. The scientific challenge comes from the fact that the number of thread interleavings grows exponentially in a program's number of threads. All state-of-the-art methods, such as model checking, suffer from this so-called {\em state explosion problem}.
%
    %for reviewing self references in 3rd person!
    %In~\cite{MB:08} we started research in order to find where the explosion has its origin. Although conditionals and loops cannot be analyzed with this early approach it anyway indicated the high prospects in terms of interleaving reduction compared to the theoretic amount of interleavings.
%
    %With the matrix calculus based approach presented in this paper we are able to support conditionals, loops, and synchronization-sensitiveness.
    The idea that thread interleavings can be studied with a matrix calculus is new in this research area. We are immediately able to support conditionals, loops, and synchronization. Our sparse matrix representations of the program are manipulated using a lazy implementation of Kronecker algebra. Similar to~\cite{BK:02} we describe synchronization by Kronecker products and thread interleavings by Kronecker sums.
    One goal is the generation of a data structure called \emph{Concurrent Program Graph} (CPG) which describes all possible interleavings and incorporates synchronization while preserving completeness. Similar to CFGs for sequential programs, CPGs may serve as an analogous graph for concurrent systems. We prove that CPGs in general can be represented by sparse adjacency matrices. Thus the number of entries in the matrices is linear in their number of lines.

    In the worst-case the number of lines increases exponentially in the number of threads. Especially for concurrent programs containing synchronization this is very pessimistic. For this case we show that the matrix contains nodes and edges unreachable from the entry node.

    We propose two major optimizations. First, if the program contains a lot of synchronization, only a very small part of the CPG is reachable. Our lazy implementation of the matrix operations computes only this part (cf. Subsect.~\ref{subsection_lazyImpl}). Second, if the program has only little synchronization, many edges not accessing shared variables will be present, which are reduced during the output process of the CPG (cf. Subsect.~\ref{subsection_optimizationForNSV}). Both optimizations speed up processing significantly and show that this approach is very promising.

    We establish a framework for analyses of multithreaded shared memory concurrent systems which forms a basis for analyses of various properties. Different techniques including dataflow analysis (e.g.~\cite{RP:86,RP:88,SGL:98,KU:76})
    %symbolic analysis (e.g.~\cite{BSB:08}),
    and model checking (e.g.~\cite{CGP:99,GG:08} to name only a few) can be applied to the generated \emph{Concurrent Program Graphs} (CPGs) defined in Section~\ref{section_CPG}. Furthermore, the structure of the matrices can be used to prove properties of the underlying program for an arbitrary number of threads. For example in this paper, deadlock freedom is proved for p-v-symmetric programs.

    Theoretical results such as~\cite{Ram:00} state that synchronization-sensitive and con\-text-sensitive analysis is impossible even for the simplest analysis problems. Our system model differs in that it supports subprograms only via inlining and recursions are impossible.

    The outline of our paper is as follows. In Section~\ref{section_prelim} control flow graphs, edge splitting, and Kronecker algebra are introduced. Our model of concurrency, its properties, and important optimizations like our lazy approach are presented in Section~\ref{section_CPG}. In Section~\ref{section_exampleCS} we give a client-server example with 32 clients showing the efficiency of our approach. For a matrix with a potential order of $10^{15}$ our lazy approach delivers the result in 0.43s. Section~\ref{section_deadlock} demonstrates how deadlock freedom is proved for p-v-symmetric programs with an arbitrary number of threads. An example for detecting a data race is given in Section~\ref{section_example}. Section~\ref{section_empiricalData} is devoted to an empirical analysis. In Section~\ref{section_related} we survey related work. Finally, we draw our conclusion in Section~\ref{section_conclusion}.
\end{section} 
    \begin{section}{Preliminaries}\label{section_prelim}
    %we do this later \TBD{semantics for shared-memory concurrent systems (SMCS), need proof that what we generate is semantically equivalent to SMCS.}
%
    \begin{subsection}{Overview}\label{subsection_overview}
        %{\em static}, {\em sound}, {\em fully automated}, {\em scalable},
        We model shared memory concurrent systems by threads which use semaphores for synchronization.
        Threads and semaphores are represented by control flow graphs (CFGs). Edge Splitting has to be applied to the edges of thread CFGs that access more than one shared variable. Edge splitting is straight forward and is described in Subsect.~\ref{subsection_edgeSplitting}. The resulting Refined CFGs (RCFGs) are represented by adjacency matrices. These matrices are then manipulated by Kronecker algebra.
        We assume that the edges of CFGs are labeled by elements of a semiring. Details follow in this subsection. Similar definitions and further properties can be found in~\cite{KS:86}.
%
        %Let $\Sigma$ be a finite alphabet which consists of so-called \emph{letters} or \emph{symbols}. The set of all words constructed out of symbols is denoted by $\Sigma^*$. A subset of $\Sigma^*$ is being referred to as a formal language over $\Sigma$.

        Semiring $\langle \cL,+,\cdot,0,1 \rangle$ consists of a set of labels $\cL$, two binary operations $+$ and $\cdot$, and two constants $0$ and $1$ such that
        \begin{enumerate}
            \item $\langle \cL,+,0 \rangle$ is a commutative monoid,
            \item $\langle \cL,\cdot,1 \rangle$ is a monoid,
            \item $\forall l_1, l_2, l_3 \in \cL: l_1 \cdot (l_2 + l_3) = l_1 \cdot l_2 + l_1 \cdot l_3$ and $(l_1 + l_2) \cdot l_3 = l_1 \cdot l_3 + l_2 \cdot l_3$ hold and
            \item $\forall l \in \cL: 0 \cdot l = l \cdot 0 = 0$.
        \end{enumerate}

        Intuitively, our semiring is a unital ring without subtraction. %A semiring is called commutative if $\forall a,b \in S: a \cdot b=b \cdot a$. If $\forall a \in S: a+a=a$ then it is called idempotent.
        For each $l\in \cL$ the usual rules are valid, e.g., $l + 0 = 0 + l = l$ and $1 \cdot l=l \cdot 1 = l$.
        In addition we equip our semiring with the unary operation $*$. For each $l \in \cL$, $l^*$ is defined by $l^*=\SumFromToText{j \geq 0}{}{l^j}$, where $l^0=1$ and $l^{j+1}=l^j \cdot l = l \cdot l^j$ for $j \geq 0$.
        Our set of labels $\cL$ is defined by $\cL=\cL_{\V} \cup \cL_{\SP}$, where $\cL_{\V}$ is the set of non-synchronization labels and $\cL_{\SP}$ is the set of labels representing semaphore calls. The sets $\cL_{\V}$ and $\cL_{\SP}$ are disjoint.
        The set $\cL_{\SP}$ itself consists of two disjoint sets $\cL_{\SP_p}$ and $\cL_{\SP_v}$. The first denotes the set of labels referring to P-calls, whereas the latter refers to V-calls of semaphores.

        Examples for semirings include regular expressions (cf.~\cite{Tar:81}) which can be used for performing dataflow analysis.
%
        %The \emph{sum star identity} is valid for $a$ and $b$ if $(a+b)^*=(a^* \cdot b)^* \cdot a^*$. The \emph{product star identity} is valid for $a$ and $b$ if $(a \cdot b)^*=1+a \cdot (b \cdot a)^* \cdot b$. A star semiring in which sum star identity and product star identity are valid is called \emph{Conway semiring}. Further the \emph{star fixed point identity} $a^*=1+a \cdot a^*$ and the \emph{simplified product star identity} $a \cdot (b \cdot a)^*=(a \cdot b)^* \cdot a$ hold for all semiring elements $a$ and $b$.
    \end{subsection}

    \begin{subsection}{Control Flow Graphs}\label{subsection_controlFlowGraphs}
        A \emph{Control Flow Graph} (CFG) is a directed labeled graph defined by $G=\langle V,E,n_e \rangle$ with a set of nodes $V$, a set of directed edges $E~\subseteq~V~\times~V$, and a so-called \emph{entry} node $n_e \in V$. We require that each $n \in V$ is reachable through a sequence of edges from $n_e$. Nodes can have at most two outgoing edges. Thus the maximum number of edges in CFGs is $2 \, |V|$. We will use this property later.

        Usually CFG nodes represent basic blocks (cf.~\cite{ASU:86}). Because our matrix calculus manipulates the edges we need to have basic blocks on the edges.\footnote{We chose the incoming edges.}
        %Thus we have the \emph{basic blocks} situated on the incoming edges.
        Each edge $e \in E$ is assigned a basic block $b$. In this paper we refer to them as edge labels as defined in the previous subsection. To keep things simple we use edges, their labels and the corresponding entries of the adjacency matrices synonymously.

        In order to model synchronization we use semaphores. The corresponding edges typically have labels like $p_1$ and $v_1$, where $p_x$ and $v_x \in \cL_{\SP}$. Usually two or more distinct thread CFGs refer to the same semaphore to perform synchronization. The other labels are elements from $\cL_{\V}$. The operations on the basic blocks are $\cdot, +$, and $*$ from the semiring defined above (cf.~\cite{Tar:81}). Intuitively, $\cdot, +$, and $*$ model consecutive basic blocks, conditionals, and loops, respectively.

        \begin{figure}[t]
            \centering
            \subfigure[Binary Semaphore]{
                \hspace{13mm}\includegraphics[scale=0.4]{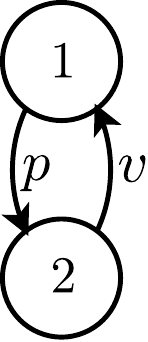}\hspace{13mm}
                \label{figure_semaphoreBinary}
            }
            \subfigure[Counting Semaphore]{
                \hspace{13mm}\includegraphics[scale=0.4]{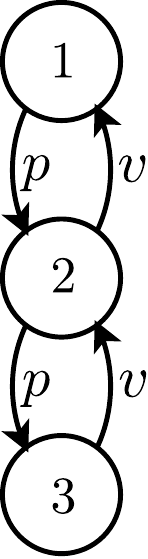}\hspace{13mm}
                \label{figure_semaphore2}
            }
            \caption{Semaphores}
            \label{figure_semaphores}
        \end{figure}

        In Fig.~\ref{figure_semaphoreBinary} and ~\ref{figure_semaphore2} a binary and a counting semaphore are depicted. The latter allows two threads to enter at the same time. In a similar way it is possible to construct semaphores allowing $n$ non-blocking P-calls.
    \end{subsection}

    \begin{subsection}{Edge Splitting}\label{subsection_edgeSplitting}
        A basic block consists of multiple consecutive statements without jumps. For our purpose we need a finer granularity as we would have with basic blocks alone. To achieve the required granularity we need to split edges. Shared variable accesses and semaphore calls may occur in basic blocks. For both it is necessary to split edges. This ensures a representation of possible context switches in a manner exact enough for our purposes.
        We say ``exact enough'' because by using basic blocks together with the above refinement, we already have coarsened the analysis compared to the possibilities on statement-level. Furthermore we do not lose any information required for the completeness of our approach.
        Anyway, applying this procedure to a CFG, i.e. splitting edges in a CFG, results in a \emph{Refined Control Flow Graph} (RCFG).

        Let $\cV$ be the set of shared variables. In addition, let a shared variable $v \in \cV$ be a volatile variable located in the shared memory which is accessed by two or more threads.
        Splitting an edge depends on the number of shared variables accessed in the corresponding basic block. For edge $e$ this number is being referred to as $\NUMSV{e}$. In the same way we refer to $\NUMSV{b}$ as the number of shared variables accessed in basic block $b$. If $\NUMSV{e}>1$, edge splitting has to be applied to edge $e$; the edge is used unchanged otherwise.
        %Similarly we define \SVPRED{e} and \SVPRED{b} to refer to the accessed shared variable itself.

        If edge splitting has to be applied to edge $e$ which has basic block $b$ assigned and $\NUMSV{b}=k$ then the basic blocks $b_1, \dots, b_k$ represent the subsequent parts of $b$ in such a way that $\forall b_i: \NUMSV{b_i}=1$, where $1\leq i \leq k$. Edges $e_j$ get assigned basic block $b_j$, where $1 \leq j \leq k$. In Fig.~\ref{figure_edgeSplitting} the splitting of an edge with basic block $b$ and $\NUMSV{b}=k$ is depicted.

        \begin{figure}[ht]
            \centering
            \includegraphics[scale=0.4]{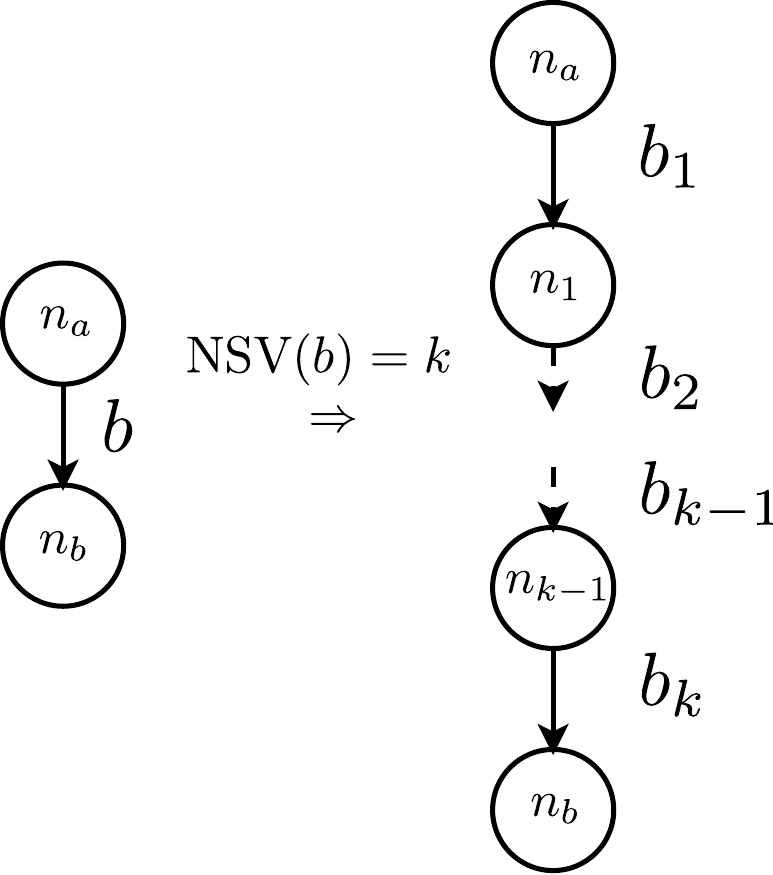}
            \caption{Edge Splitting for Shared Variable Accesses}
            \label{figure_edgeSplitting}
        \end{figure}

        For semaphore calls (e.g. $p_1$ and $v_1$) edge splitting is required in a similar fashion. In contrast to shared variable accesses we require that semaphore calls have to be the only statement on the corresponding edge. The remaining consecutive parts of the basic block are situated on the previous and succeeding edges, respectively.\footnote{Note that edges representing a call to a semaphore are not considered to access shared variables.}

        The effects of edge splitting for shared variables and semaphore calls can be seen in the data race example given in Section~\ref{section_example}. Each RCFG depicted in Fig.~\ref{figure_exampleRCFGs} is constructed out of one basic block (cf. Fig.~\ref{figure_exampleProgram}).

        Note that edge splitting ensures that we can model the minimal required context switches. The semantics of a concurrent programming language allows usually more. For example consider an edge in a RCFG containing two consecutive statements, where both do not access shared variables. A context switch may happen in between. However, this additional interleaving does not provide new information. Hence our approach provides the minimal number of context switches.
        %that it is semantically equivalent to more context switches.

        Without loss of generality we assume that the statements in each basic block are atomic. Thus, we assume while executing a statement, context switching is impossible. In RCFGs the finest possible granularity is at statement-level. If, according to the program's semantic,  atomic statements may access two or more shared variables, then we make an exception to the above rule and allow two or more shared variable accesses on a single edge. Such edges have at most one atomic statement in their basic block. The Kronecker sum (which is introduced in the next subsection) ensures that all interleavings are generated correctly.
    \end{subsection}

    \begin{subsection}{Synchronization and Generating Interleavings with Kronecker Algebra}\label{subsection_KroneckerAlgebraIntroduction}
        %Tensor product and tensor sum together form tensor algebra. We apply this algebra to RCFG adjacency matrices. Applied to matrices the tensor algebra is often called Kronecker algebra. The tensor product is then being referred to as \emph{Kronecker product} or \emph{Zehfuss product} and the tensor sum is called \emph{Kronecker sum}.

        Kronecker product and Kronecker sum form Kronecker algebra. In the following we define both operations, state properties, and give examples. In addition, for the Kronecker sum we prove a property which we call \emph{Mixed Sum Rule\/}.

        We define the set of matrices $\cM = \{ M = (m_{i,j}) \, | \, m_{i,j} \in \cL \}$. In the remaining parts of this paper only matrices $M \in \cM$ will be used, except where stated explicitly. Let $o(M)$ refer to the order\footnote{A k-by-k matrix is known as square matrix of order $k$.} of matrix $M \in \cM$. In addition we will use n-by-n zero matrices $Z_n=(z_{i,j})$, where $\forall i,j: z_{i,j}=0$.

        \begin{definition}[Kronecker product]
            Given a m-by-n matrix $A$ and a p-by-q matrix $B$, their \emph{Kronecker product} denoted by $A \otimes B$ is a mp-by-nq block matrix defined by
            \begin{equation*}
            A \otimes B =
                \begin{pmatrix}
                    a_{1,1}B & \cdots & a_{1,n}B \\
                     \vdots & \ddots & \vdots \\
                    a_{m,1}B & \cdots & a_{m,n}B
                \end{pmatrix}.
            \end{equation*}
        \end{definition}
        \begin{example}\label{example_kroneckerProduct}\mbox{}\\
            Let $A= \begin{pmatrix}
                        a_{1,1} & a_{1,2} \\
                        a_{2,1} & a_{2,2} \\
                    \end{pmatrix}$
            and
            $B= \begin{pmatrix}
                     b_{1,1} & b_{1,2} & b_{1,3} \\
                     b_{2,1} & b_{2,2} & b_{2,3} \\
                     b_{3,1} & b_{3,2} & b_{3,3} \\
                 \end{pmatrix}
            $.
            The Kronecker product $C = A \otimes B$ is given by\\

            $$
            \begin{pmatrix}
                \ a_{1,1}b_{1,1}\ & a_{1,1}b_{1,2}\ & a_{1,1}b_{1,3}\ & a_{1,2}b_{1,1}\ & a_{1,2}b_{1,2}\ & a_{1,2}b_{1,3}\ \\
                \ a_{1,1}b_{2,1}\ & a_{1,1}b_{2,2}\ & a_{1,1}b_{2,3}\ & a_{1,2}b_{2,1}\ & a_{1,2}b_{2,2}\ & a_{1,2}b_{2,3}\ \\
                \ a_{1,1}b_{3,1}\ & a_{1,1}b_{3,2}\ & a_{1,1}b_{3,3}\ & a_{1,2}b_{3,1}\ & a_{1,2}b_{3,2}\ & a_{1,2}b_{3,3}\ \\
                \ a_{2,1}b_{1,1}\ & a_{2,1}b_{1,2}\ & a_{2,1}b_{1,3}\ & a_{2,2}b_{1,1}\ & a_{2,2}b_{1,2}\ & a_{2,2}b_{1,3}\ \\
                \ a_{2,1}b_{2,1}\ & a_{2,1}b_{2,2}\ & a_{2,1}b_{2,3}\ & a_{2,2}b_{2,1}\ & a_{2,2}b_{2,2}\ & a_{2,2}b_{2,3}\ \\
                \ a_{2,1}b_{3,1}\ & a_{2,1}b_{3,2}\ & a_{2,1}b_{3,3}\ & a_{2,2}b_{3,1}\ & a_{2,2}b_{3,2}\ & a_{2,2}b_{3,3}\ \\
            \end{pmatrix}.
            $$

        \end{example}
        As stated in~\cite{EUWM:10} the Kronecker product is also being referred to as \emph{Zehfuss product\/} or \emph{direct product of matrices} or \emph{matrix direct product}.
        \footnote{
        Knuth notes in~\cite{Knu:11} that Kronecker never published anything about it. Zehfuss was actually the first publishing it in the $19$th century~\cite{Zeh:1858}. He proved that $\hbox{det}(A \otimes B)=\hbox{det}^n (A) \cdot \hbox{det}^m (B)$, if $A$ and $B$ are matrices of order $m$ and $n$ and entries from the domain of real numbers, respectively. % Although the main focus was on a determinant.
        }

        In the following we list some basic properties of the Kronecker product.
        Proofs and additional properties can be found in~\cite{Bel:97,Gra:81,Dav:81,Hur:1894}.
        %Let $k$ be a scalar and $A$, $B$, $C$ and $D$ matrices.
        Let $A$, $B$, $C$, and $D$ be matrices.
        The Kronecker product is noncommutative because in general $A \otimes B \neq B \otimes A$. It is permutation equivalent because there exist permutation matrices $P$ and $Q$ such that $A \otimes B = P (B \otimes A) Q$. If $A$ and $B$ are square matrices, then $A \otimes B$ and $B \otimes A$ are even permutation similar, i.e., $P = Q^T$. The product is associative as
        \begin{equation}\label{equation_kroneckerProduct_associativity}
            A \otimes (B \otimes C)=(A \otimes B) \otimes C.
        \end{equation}
        In addition, the Kronecker product distributes over $+$, i.e.,
        \begin{eqnarray}
             A \otimes (B+C) & = & A \otimes B + A \otimes C \label{equation_KroneckerProduct_bilinearity1},\\
             (A+B) \otimes C & = & A \otimes C + B \otimes C \label{equation_KroneckerProduct_bilinearity2}.%\\ % proof in Gra:81
        %     (kA) \otimes B  & = & A \otimes (kB) = k(A\otimes B). % proof in Gra:81
        \end{eqnarray}
        Hence for example $(A + B) \otimes (C + D) = A \otimes C + B\otimes C + A \otimes D + B\otimes D$.

        The Kronecker product allows to model synchronization (cf. Subsect.~\ref{subsection_proofSyncInOurModel}).

        %(info can be found in Hurwitz product /in die tiefe),
        %website: Earliest Known Uses of Some of the Words of Mathematics: (year of Hensel's Publication is wrong on the website!!!)
                %The KRONECKER, ZEHFUSS or DIRECT PRODUCT of matrices. In their article On the history of the kronecker product Linear and Multilinear Algebra, 14, (1983), 113  120, H. V. Jemderson, F. Pukelsheim & S. R. Searle find the origins of such products in a paper on determinants by Johann Georg Zehfuss Ueber eine gewisse Determinante, Zeitschrift für Mathematik und Physik, 3, (1858), 298-301. The link to Kronecker was made by K. Hensel, Ueber Gattungen, welche durch Composititon aus zwei anderen Gattungen entstehen Journal für die reine und angewandte Mathematik, 105, (1889), 329-344 who held that Kronecker discussed them in his lectures in the 1880s.
        %wapedia: "The Kronecker product of matrices corresponds to the abstract tensor product of linear maps."
        %         When V and W are Lie algebras, and S : V --> V and T : W --> W are Lie algebra homomorphisms,
        %         the Kronecker sum of A and B represents the induced Lie algebra homomorphisms V ? W --> V ? W.
        %         The Kronecker product of the adjacency matrices of two graphs is the adjacency matrix of the tensor product graph.
        \begin{definition}[Kronecker sum]\label{definition_kroneckerSum}
            Given a matrix $A$ of order $m$ and matrix $B$ of order $n$, their \emph{Kronecker sum} denoted by $A \oplus B$ is a matrix of order $mn$ defined by
            \begin{equation*}
                A \oplus B = A \otimes I_n + I_m \otimes B,
            \end{equation*}
            where $I_m$ and $I_n$ denote identity matrices\footnote{The identity matrix $I_n$ is a n-by-n matrix with ones on the main diagonal and zeros elsewhere.} of order $m$ and $n$, respectively.
        \end{definition}
        This operation must not be confused with the direct sum of matrices, group direct product or direct product of modules for which the symbol $\oplus$ is used too.
        By calculating the Kronecker sum of the adjacency matrices of two graphs the adjacency matrix of the Cartesian product graph~\cite{IKR:08} is computed (cf.\,~\cite{Knu:11}).
        \begin{example}\label{example_kroneckerSum}
            We use matrices $A$ and $B$ from Ex.~\ref{example_kroneckerProduct}. The Kronecker sum $A \oplus B$ is given by
            \begin{eqnarray*}
            %$$
                &&A \otimes I_3 + I_2 \otimes B =\\
                &&\begin{pmatrix}
                    \ a_{1,1}\ & a_{1,2}\ \\
                    \ a_{2,1}\ & a_{2,2}\ \\
                \end{pmatrix}
                \otimes
                \begin{pmatrix}
                    \ 1\ & 0\ & 0\ \\
                    \ 0\ & 1\ & 0\ \\
                    \ 0\ & 0\ & 1\ \\
                \end{pmatrix}
                +
                \begin{pmatrix}
                    \ 1\ & 0\ \\
                    \ 0\ & 1\ \\
                \end{pmatrix}
                \otimes
                \begin{pmatrix}
                         \ b_{1,1}\ & b_{1,2}\ & b_{1,3}\  \\
                         \ b_{2,1}\ & b_{2,2}\ & b_{2,3}\  \\
                         \ b_{3,1}\ & b_{3,2}\ & b_{3,3}\  \\
                \end{pmatrix}=\\
            %$$
            %\begin{eqnarray*}
                &&\begin{pmatrix}
                        %        1              2                 3                4                  5                  6
                        \ a_{1,1}          & 0\                & 0\               & a_{1,2}\         & 0\               & 0\         \\
                        \ 0\               & a_{1,1}           & 0\               & 0\               & a_{1,2}\         & 0\         \\
                        \ 0\               & 0\                & a_{1,1}\         & 0\               & 0\               & a_{1,2}\   \\
                        \ a_{2,1}\         & 0\                & 0\               & a_{2,2}\         & 0\               & 0\         \\
                        \ 0\               & a_{2,1}\          & 0\               & 0\               & a_{2,2}\         & 0\         \\
                        \ 0\               & 0\                & a_{2,1}\         & 0\               & 0\               & a_{2,2}\   \\
                \end{pmatrix}+
                \begin{pmatrix}
                        %        1              2                 3                4                  5                  6
                        \ b_{1,1}\         & b_{1,2}\          & b_{1,3}\         & 0\               & 0\               & 0\        \\
                        \ b_{2,1}\         & b_{2,2}\          & b_{2,3}\         & 0\               & 0\               & 0\        \\
                        \ b_{3,1}\         & b_{3,2}\          & b_{3,3}\         & 0\               & 0\               & 0\        \\
                        \ 0\               & 0\                & 0\               & b_{1,1}\         & b_{1,2}\         & b_{1,3}\  \\
                        \ 0\               & 0\                & 0\               & b_{2,1}\         & b_{2,2}\         & b_{2,3}\  \\
                        \ 0\               & 0\                & 0\               & b_{3,1}\         & b_{3,2}\         & b_{3,3}\  \\
                \end{pmatrix}=\\
            %$$
 %           {\scriptsize
 %           $$
                &&\begin{pmatrix}
                        %        1              2                 3                4                  5                  6
                        \ a_{1,1}+b_{1,1}\ & b_{1,2}\          & b_{1,3}\         & a_{1,2}\         & 0\               & 0\               \\
                        \ b_{2,1}\         & a_{1,1}+ b_{2,2}\ & b_{2,3}\         & 0\               & a_{1,2}\         & 0\               \\
                        \ b_{3,1}\         & b_{3,2}\          & a_{1,1}+b_{3,3}\ & 0\               & 0\               & a_{1,2}\         \\
                        \ a_{2,1}\         & 0\                & 0\               & a_{2,2}+b_{1,1}\ & b_{1,2}\         & b_{1,3}\         \\
                        \ 0\               & a_{2,1}\          & 0\               & b_{2,1}\         & a_{2,2}+b_{2,2}\ & b_{2,3}\         \\
                        \ 0\               & 0\                & a_{2,1}\         & b_{3,1}\         & b_{3,2}\         & a_{2,2}+b_{3,3}\ \\
                \end{pmatrix}.
 %           $$
 %           }
             \end{eqnarray*}
        \end{example}

        In the following we list basic properties of the Kronecker sum of matrices $A$, $B$, and $C$. Additional properties can be found in~\cite{PA:91} or are proved in this paper. The Kronecker sum is noncommutative because for element-wise comparison in general $A \oplus B \neq B \oplus A$. Anyway it essentially commutes because from a graph point of view, the graphs represented by matrices $A \oplus B$ and $B \oplus A$ are structurally isomorphic.

        Now we state a property of the Kronecker sum which we call \emph{Mixed Sum Rule}.
        \begin{lemma}{}\label{lemma_mixedSumRule}
        Let the matrices $A$ and $C$ have order $m$ and $B$ and $D$ have order $n$. Then we call $$(A \oplus B) + (C \oplus D) = (A + C) \oplus (B + D)$$ the \emph{Mixed Sum Rule}.
        %\footnote{Lemma~\ref{lemma_mixedSumRule} is used implicitly without proof in~\cite{BK:02}.}
        \end{lemma}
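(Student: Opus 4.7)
The plan is to reduce the Mixed Sum Rule directly to the definition of Kronecker sum combined with the bilinearity of the Kronecker product over matrix addition, which is already available as equations~(\ref{equation_KroneckerProduct_bilinearity1}) and~(\ref{equation_KroneckerProduct_bilinearity2}). Since the ambient label set is a semiring in which $\langle \cL, +, 0 \rangle$ is a commutative monoid, addition of matrices is associative and commutative, so we are free to reorder and regroup the four summands that will appear.

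First I would expand the left-hand side using Definition~\ref{definition_kroneckerSum}, obtaining
\begin{equation*}
(A \oplus B) + (C \oplus D) = (A \otimes I_n + I_m \otimes B) + (C \otimes I_n + I_m \otimes D).
\end{equation*}
Next I would use the commutativity of matrix addition to group the two summands having $I_n$ on the right and the two summands having $I_m$ on the left:
\begin{equation*}
= (A \otimes I_n + C \otimes I_n) + (I_m \otimes B + I_m \otimes D).
\end{equation*}
Then I would apply~(\ref{equation_KroneckerProduct_bilinearity2}) to the first parenthesised term and~(\ref{equation_KroneckerProduct_bilinearity1}) to the second to obtain
\begin{equation*}
= (A + C) \otimes I_n + I_m \otimes (B + D),
\end{equation*}
which is exactly $(A+C) \oplus (B+D)$ by Definition~\ref{definition_kroneckerSum}. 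The order constraints stated in the lemma ($A,C$ of order $m$ and $B,D$ of order $n$) are exactly what is needed so that $A+C$ and $B+D$ are well-defined and so that the identity factors $I_n$ and $I_m$ match on both sides.

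There is no real obstacle here; the only thing to double-check is that one never implicitly uses additive inverses (the semiring has none) and that the bilinearity rules cited were stated without commutativity assumptions on the matrix entries, which they are. The proof is then essentially a four-line calculation, and the lemma will be used later as an algebraic identity for manipulating expressions built from Kronecker sums without having to expand them into their block form.
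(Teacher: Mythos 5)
Your proof is correct and follows exactly the same route as the paper's: expand both Kronecker sums via Definition~\ref{definition_kroneckerSum}, regroup the four summands, and apply the distributivity identities~\eqref{equation_KroneckerProduct_bilinearity1} and~\eqref{equation_KroneckerProduct_bilinearity2} before folding back into a Kronecker sum. The only difference is that you spell out the regrouping step and the remark about not needing additive inverses, which the paper leaves implicit.
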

        \begin{proof}
            By using Eqs.~\eqref{equation_KroneckerProduct_bilinearity1}~and~\eqref{equation_KroneckerProduct_bilinearity2} and Def.~\ref{definition_kroneckerSum} we get $(A \oplus B) + (C \oplus D)= A \otimes I_n + I_m \otimes B + C \otimes I_n + I_m \otimes D=(A + C) \otimes I_n + I_m \otimes (B + D)=(A + C) \oplus (B + D)$.\qed
        \end{proof}

        For example let the matrices $A$ and $B$ be written as $A=\SumFromToText{i \in I}{}{A_i}$ and $B=\SumFromToText{j \in J}{}{B_j}$, respectively. In addition, let the sets $I$ and $J$ have the same number of elements, i.e., $|I| = |J|$. By using the mixed sum rule we can write $A \oplus B = \SumFromToText{i \in I, j \in J}{}{A_i \oplus B_j}$.

        We will frequently use the Mixed Sum Rule from now on without further notice.

        The Kronecker sum is also associative, as $(A \oplus B) \oplus C$ and
        $A \oplus (B \oplus C)$ are equal.

        \begin{lemma}{Kronecker sum is associative.
        %\footnote{Surprisingly we could not find a proof of Lemma~\ref{lemma_kroneckerSumIsAssociative}, although it is employed in literature many times.}
        }
        \label{lemma_kroneckerSumIsAssociative}
        \end{lemma}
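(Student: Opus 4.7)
The plan is to reduce the associativity of $\oplus$ to already-established properties of $\otimes$, namely associativity (Eq.~\eqref{equation_kroneckerProduct_associativity}) and bilinearity (Eqs.~\eqref{equation_KroneckerProduct_bilinearity1} and~\eqref{equation_KroneckerProduct_bilinearity2}), together with the elementary identity $I_m \otimes I_n = I_{mn}$ (which follows directly from the block definition of the Kronecker product applied to identity matrices). Let $A$, $B$, $C$ have orders $m$, $n$, $p$, respectively.

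First I would expand the left-hand side by applying Def.~\ref{definition_kroneckerSum} twice:
\begin{equation*}
(A \oplus B) \oplus C \;=\; (A \oplus B) \otimes I_p + I_{mn} \otimes C \;=\; (A \otimes I_n + I_m \otimes B) \otimes I_p + I_{mn} \otimes C.
\end{equation*}
Bilinearity of $\otimes$ over $+$ distributes $\otimes I_p$ across the sum, and associativity of $\otimes$ together with $I_m \otimes I_n = I_{mn}$ and $I_n \otimes I_p = I_{np}$ turns the expression into
\begin{equation*}
A \otimes I_{np} \;+\; I_m \otimes (B \otimes I_p) \;+\; I_m \otimes (I_n \otimes C).
\end{equation*}

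Next I would do the analogous expansion of the right-hand side, $A \oplus (B \oplus C) = A \otimes I_{np} + I_m \otimes (B \oplus C)$, and unfold $B \oplus C = B \otimes I_p + I_n \otimes C$ inside the second summand using bilinearity once more, obtaining
\begin{equation*}
A \otimes I_{np} \;+\; I_m \otimes (B \otimes I_p) \;+\; I_m \otimes (I_n \otimes C),
\end{equation*}
which matches the expansion of the left-hand side term by term. Hence $(A \oplus B) \oplus C = A \oplus (B \oplus C)$.

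The only mildly tricky step is being careful with the orders of the identity matrices that appear when re-bracketing, so that $I_{mn}$ gets correctly split as $I_m \otimes I_n$ (and similarly for $I_{np}$); everything else is a routine application of the Kronecker-product identities already listed in the paper. No further lemmas beyond those and Def.~\ref{definition_kroneckerSum} are needed.
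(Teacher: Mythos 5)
Your proof is correct, but it takes a different route from the paper's. The paper proves associativity by a one-directional chain of equalities starting from $A \oplus (B \oplus C)$, and its main device is the Mixed Sum Rule (Lemma~\ref{lemma_mixedSumRule}): it pads with zero matrices $Z_{o(A)}$ and $Z_{o(C)}$ so that the Mixed Sum Rule can be applied twice, once to break the nested sum apart and once to reassemble it into $(A \oplus B) \oplus C$. You avoid the Mixed Sum Rule entirely and instead expand \emph{both} sides down to the common normal form $A \otimes I_{np} + I_m \otimes (B \otimes I_p) + I_m \otimes (I_n \otimes C)$ using only Def.~\ref{definition_kroneckerSum}, the bilinearity Eqs.~\eqref{equation_KroneckerProduct_bilinearity1}--\eqref{equation_KroneckerProduct_bilinearity2}, associativity~\eqref{equation_kroneckerProduct_associativity}, and $I_m \otimes I_n = I_{mn}$ (which the paper also assumes). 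The two arguments pass through essentially the same three-term expression in the middle, so they are computationally equivalent; what your version buys is symmetry and independence from Lemma~\ref{lemma_mixedSumRule}, while the paper's version showcases the Mixed Sum Rule as a reusable manipulation tool (it is used ``frequently\dots without further notice'' elsewhere in the paper). Both are complete and correct.
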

        \begin{proof}
            In the following we will use $I_m\otimes I_n=I_{m.n}$. Note that $Z$ denotes zero matrices. We have
            { \setlength\arraycolsep{0.1em}
            \begin{eqnarray*}%missing spaces in comments!
                                                                                 A \oplus (B \oplus C)&=&A \oplus (B \otimes I_{o(C)}+I_{o(B)} \otimes C)\\
     \mbox{\{adding }Z_{o(A)}\}                                                                       &=&(A + Z_{o(A)}) \oplus (B \otimes I_{o(C)}+I_{o(B)} \otimes C)\\
     \mbox{\{Lemma~\ref{lemma_mixedSumRule}}\}                                                        &=&(A \oplus (B \otimes I_{o(C)}))+(Z_{o(A)} \oplus ( I_{o(B)} \otimes C))\\
     \mbox{\{Eq.\eqref{equation_kroneckerProduct_associativity}, Def.\ref{definition_kroneckerSum}}\} &=&(A \oplus (B \otimes I_{o(C)}))+I_{o(A)} \otimes I_{o(B)} \otimes C\\
     \mbox{\{ass.+, Def.\ref{definition_kroneckerSum}}\}                                              &=&A \otimes I_{o(B).o(C)}+I_{o(A)}\otimes B \otimes I_{o(C)}+\\
                                                                                                      & & I_{o(A).o(B)}\otimes C\\
     \mbox{\{comm. of }+\}                                                                            &=&A \otimes I_{o(B)} \otimes I_{o(C)}+I_{o(A).o(B)} \otimes C+\\
                                                                                                      & & I_{o(A)}\otimes B \otimes I_{o(C)}\\
     \mbox{\{Def.~\ref{definition_kroneckerSum}}\}                                                    &=&((A \otimes I_{o(B)}) \oplus C) + I_{o(A)} \otimes B \otimes I_{o(C)}\\
     \mbox{\{Def.~\ref{definition_kroneckerSum}}\}                                                    &=&((A \otimes I_{o(B)}) \oplus C) + (( I_{o(A)} \otimes B) \oplus Z_{o(C)}\\
     \mbox{\{Lemma~\ref{lemma_mixedSumRule}}\}                                                        &=&(A \otimes I_{o(B)} + I_{o(A)} \otimes B) \oplus (C + Z_{o(C)})\\
     \mbox{\{rm. }Z_{o(C)}\}                                                                          &=&(A \otimes I_{o(B)} + I_{o(A)} \otimes B) \oplus C\\
     \mbox{\{Def.~\ref{definition_kroneckerSum}}\}                                                    &=&(A \oplus B) \oplus C.
            \end{eqnarray*}
            }\qed
        \end{proof}

        \begin{comment}%full proof for associativity of Kronecker sum
                A \oplus (B \oplus C) &=& A \oplus (B \otimes I_{o(C)}+I_{o(B)} \otimes C)=\hfill\\
                && A \oplus (D+E)=(A+Z)\oplus(D+E)=\\
                && (A\oplus D)+(Z\oplus E)=\\
                && (A \oplus D) + (I_{o(A)} \otimes E)=\\
                && (A \oplus (B \otimes I_{o(C)}))+I_{o(A)} \otimes I_{o(B)} \otimes C=\\
                && A \otimes I_{o(D)}+I_{o(A)}\otimes D+I_{o(A)} \otimes I_{o(B)} \otimes C=\\
                && A \otimes I_{o(B).o(C)}+I_{o(A)}\otimes B \otimes I_{o(C)}+\\
                && I_{o(A).o(B)}\otimes C=\\
                && A \otimes I_{o(B)} \otimes I_{o(C)}+I_{o(A).o(B)} \otimes C+\\
                && I_{o(A)}\otimes B \otimes I_{o(C)}=\\
                && (A \otimes I_{o(B)} \oplus C) + I_{o(A)} \otimes B \otimes I_{o(C)}=\\
                && (F \oplus C)+ G \otimes I_{o(C)}=\\
                && (F \oplus C)+(G \oplus Z)=\\
                && (F+G)\oplus (C+Z)=(F+G)\oplus C\\
                && (A \otimes I_{o(B)} + I_{o(A)} \otimes B)\oplus C=\\
                (A \oplus B) \oplus C&&
        \end{comment}

        The associativity properties of the operations $\otimes$ and $\oplus$ imply that the k-fold operations $$\overset{k}{\underset{i=1}{\bigotimes}} A_i \ \ \ \hbox{and}\ \ \ \overset{k}{\underset{i=1}{\bigoplus}} A_i$$ are well defined.
        %folgender Block war drinnen
        %Let $n_i$ denote the order of matrix $A_i$ and $I_n$ the identity matrix of order $n$. Then we can write the n-fold Kronecker sum for $k$ matrices $A_i$, where $1 \leq i \leq k$ similar to Buchholz and Kemper~\cite{BK:02} and Ciardo et al.\,\cite{CM:99}

        % the following uses a product instead of exponent but the authors come from the same group as Buchholz
        % and Ciardo~\cite{CM:99}
        % and Plateau~\cite{}

        %folgender block war drinnen
        %as
        %$\overset{k}{\underset{i=1}{\bigoplus}} A_i = \SumFromTo{i=1}{k}{I_{n_1} \otimes \dots \otimes I_{n_{i-1}} %\otimes A_i \otimes I_{n_{i+1}} \otimes \dots \otimes I_{n_k}}=
        %\SumFromTo{i=1}{k}{I_{\overset{i-1}{\underset{j=1}{\prod}}n_j} \otimes A_i \otimes I_{\overset{k}{\underset{j=i+1}{\prod}}n_j}}.$

        %the short form on the right-hand side can be found wrong in literature e.g.~\cite{BCDK:97} uses $\SumFromTo{i=1}{k}{ I_{n^{i-1}_{1}} \otimes A^i \otimes I_{n^{k}_{i+1}} }$?}

        Note that Kronecker sum calculates all possible interleavings (see e.g.~\cite{Kue:91} for a proof). Note that this is true even for general CFGs including conditionals and loops. The following example illustrates interleaving of threads and how Kronecker sum handles it.

        \begin{example}\label{example_kronsumCD}
            \begin{figure}[t]
                \centering
                \subfigure[C]{
                    \includegraphics[scale=0.4]{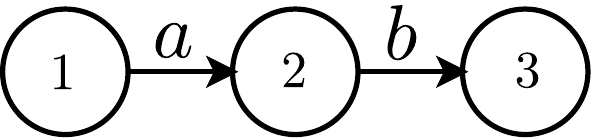}
                    \label{figure_C_interleavings}
                }
                \hspace{2mm}
                \subfigure[D]{
                    \includegraphics[scale=0.4]{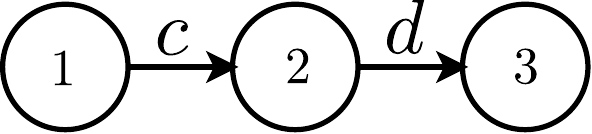}
                    \label{figure_D_interleavings}
                }
                \hspace{2mm} %\\
                \subfigure[Interleavings]{
                    %\begin{minipage}{3cm}
                    %\begin{center}
                    \begin{tabular}[b]{|c|}
                        \hline
                        \rule{0in}{3ex}\mbox{} Interleavings \mbox{}\rule{0in}{3ex} \\[0.5ex] %without mbox the label of the table is in the next line
                        \hline
                        $\rule{0in}{3ex}a \cdot b \cdot c \cdot d$\\
                        $a \cdot c \cdot b \cdot d$\\
                        $a \cdot c \cdot d \cdot b$\\
                        $c \cdot a \cdot b \cdot d$\\
                        $c \cdot a \cdot d \cdot b$\\
                        $c \cdot d \cdot a \cdot b$\\[0.1ex]
                        \hline
                    \end{tabular}
                    \label{table_interleavings}
                }
                \hspace{2mm}
                \subfigure[$C \oplus D$]{
                    \includegraphics[scale=0.4]{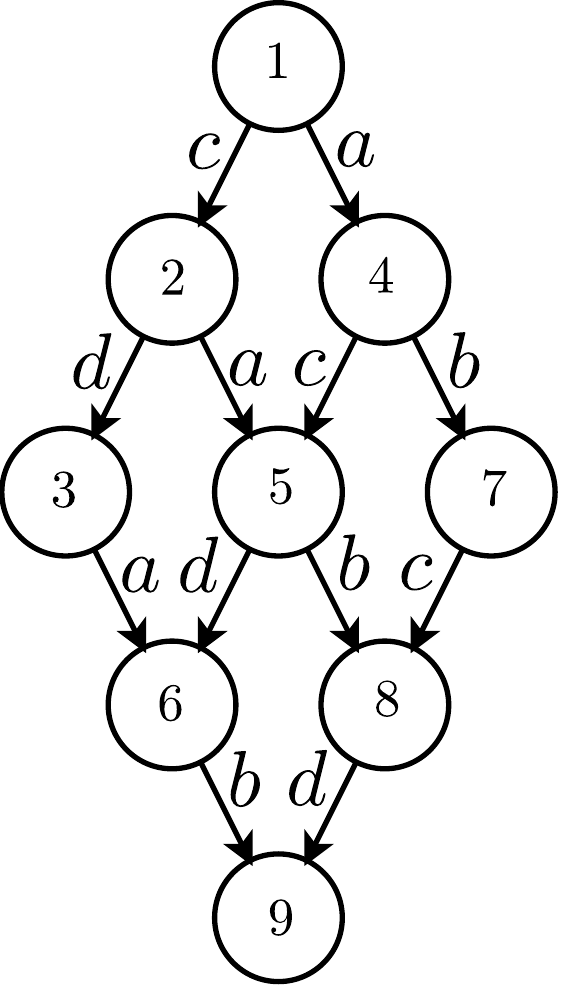}
                    \label{figure_CkronsumD}
                }
                \caption{A Simple Example}
                \label{figure_CDSystem}
            \end{figure}

            Let the matrices $C$ and $D$ be defined as follows: $$C=\begin{pmatrix}
                0 & a & 0\\
                0 & 0 & b\\
                0 & 0 & 0
            \end{pmatrix} \hskip5mm
            D=\begin{pmatrix}
                0 & c & 0\\
                0 & 0 & d\\
                0 & 0 & 0
            \end{pmatrix}.$$
            The graph corresponding to matrix $C$ is depicted in Fig.~\ref{figure_C_interleavings}, whereas the graph of matrix $D$ is shown in Fig.~\ref{figure_D_interleavings}.
            The regular expressions associated to the CFGs are $a \cdot b$ and $c \cdot d$, respectively. All possible interleavings by executing $C$ and $D$ in an interleavings semantics are shown in Fig.~\ref{table_interleavings}. In Fig.~\ref{figure_CkronsumD} the graph represented by the adjacency matrix $C \oplus D$ is depicted. It is easy to see that all possible interleavings are generated correctly.
        \end{example}

        \begin{comment}
        $C \oplus D =
        \begin{pmatrix}
         % 1  2   3   4   5   6   7   8   9
            & c & 0 & a & 0 & 0 & 0 & 0 & 0 \\ %1
            & 0 & d & 0 & a & 0 & 0 & 0 & 0 \\ %2
            & 0 & 0 & 0 & 0 & a & 0 & 0 & 0 \\ %3
            & 0 & 0 & 0 & c & 0 & b & 0 & 0 \\ %4
            & 0 & 0 & 0 & 0 & d & 0 & b & 0 \\ %5
            & 0 & 0 & 0 & 0 & 0 & 0 & 0 & b \\ %6
            & 0 & 0 & 0 & 0 & 0 & 0 & c & 0 \\ %7
            & 0 & 0 & 0 & 0 & 0 & 0 & 0 & d \\ %8
            & 0 & 0 & 0 & 0 & 0 & 0 & 0 & 0 \\ %9
        \end{pmatrix}$
        is depicted.
        \end{comment}
    \end{subsection}
\end{section} 
    \begin{section}{Concurrent Program Graphs}\label{section_CPG}
    Our system model consists of a finite number of threads and a finite number of semaphores. Both, threads and semaphores, are represented by CFGs. The CFGs are stored in form of adjacency matrices. The matrices have entries which are referred to as labels $l \in \cL$ as defined in Subsect.~\ref{subsection_overview}. Let $\cS$ and $\cT$ be the sets of adjacency matrices representing semaphores and threads, respectively. The matrices are manipulated by using Kronecker algebra. Similar to~\cite{BK:02} we describe synchronization by Kronecker products and thread interleavings by Kronecker sums. Note that higher synchronization features of programming languages such as Ada's rendezvous %or synchronization mechanisms of C\# and Java
    can be simulated by our system model as the runtime system
    %s of these languages
    uses semaphores provided by the operating systems to implement them.

    Formally, the system model consists of the tuple $\langle\cT, \cS, \cL \rangle$, where
    \begin{itemize}
        \item $\cT$ is the set of RCFG adjacency matrices describing threads,
        \item $\cS$ is the set of CFG adjacency matrices describing semaphores, and
        \item $\cL$ is the set of labels out of the semiring defined in Subsect.~\ref{subsection_overview}. The labels in $T \in \cT$ are elements of $\cL$, whereas the labels in $S \in \cS$ are elements of $\cL_{\SP}$.
    \end{itemize}

    A {\emph Concurrent Program Graph} (CPG) is a graph $C=\langle V,E,n_e \rangle$ with a set of nodes $V$, a set of directed edges $E~\subseteq~V~\times~V$, and a so-called \emph{entry} node $n_e \in V$. The sets $V$ and $E$ are constructed out of the elements of $\langle\cT, \cS, \cL \rangle$. Details on how we generate the sets $V$ and $E$ follow in the next subsections. Similar to RCFGs the edges of CPGs are labeled by $l \in \cL$.
    Assuming without loss of generality that each thread has an entry node with index $1$ in its adjacency matrix $t \in \cT$, then the entry node of the generated CPG has index $1$, too.

    In Fig.~\ref{figure_overview} an overview of our approach is given. As described in Subsect.~\ref{subsection_edgeSplitting} the set of shared variables $\cV$ is used to generate $\cT$.

    \begin{figure}[t]
        \centering
        \includegraphics[scale=0.6]{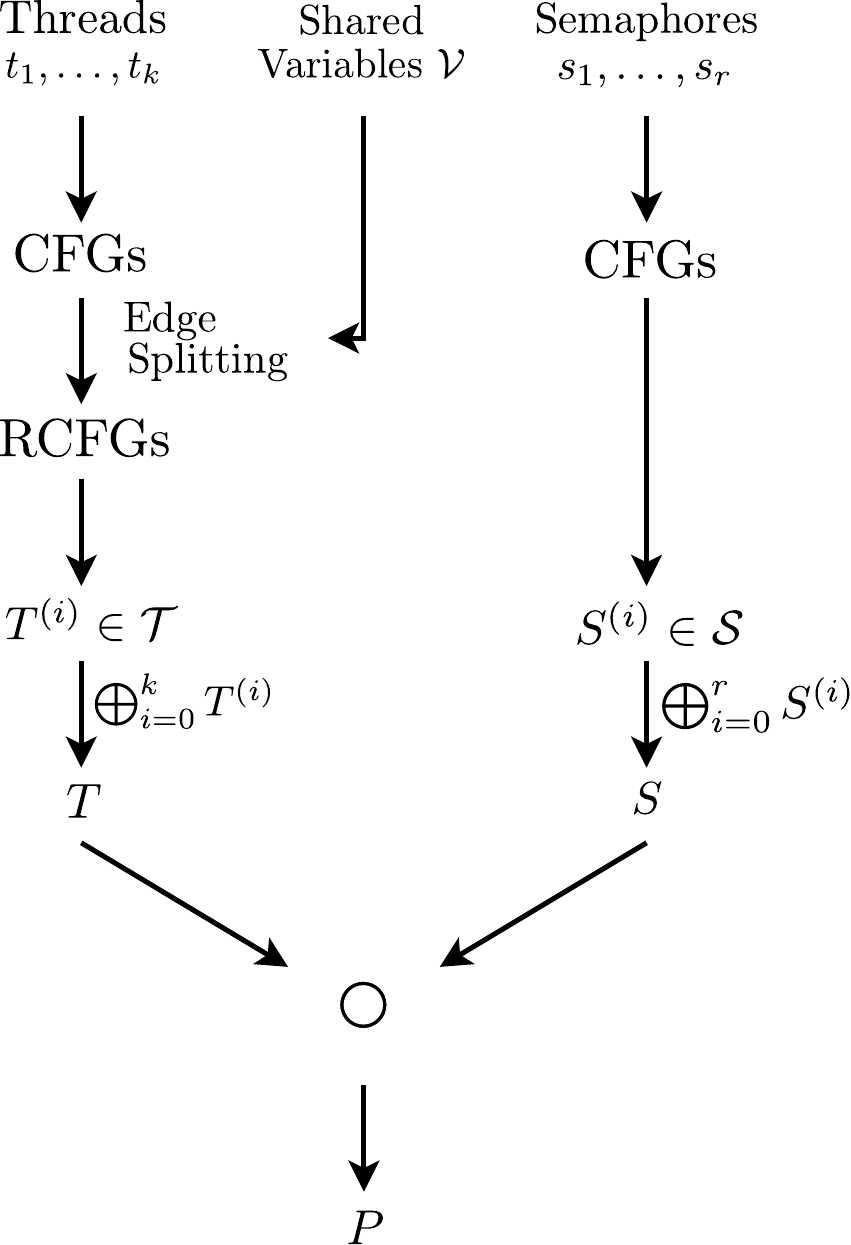}
        \caption{Overview}
        \label{figure_overview}
    \end{figure}

    \begin{subsection}{Generating a Concurrent Program's Matrix}
        Let $T^{(i)} \in \cT$ and $S^{(i)} \in \cS$ refer to the matrices representing thread $i$ and semaphore $i$, respectively. Let $M=(m_{i,j}) \in \cM$. In addition, we define the matrix $M_l$ as the matrix with entries of $M$ equal to $l$ and zeros elsewhere:
        $$M_l=(m_{l;i,j})\mbox{, where } m_{l;i,j}=
                \left \{ \begin{array}{ll}
                            l & \mbox{if }m_{i,j}=l,\\
                            0 & \mbox{otherwise.}
                        \end{array}
                \right.
        $$

        We obtain the matrix representing the $k$ interleaved threads as
        $$T=\overset{k}{\underset{i=1}{\bigoplus}} \ T^{(i)}\mbox{, where } T^{(i)} \in \cT.$$

        According to Fig.~\ref{figure_semaphores} we have for the binary and the counting semaphore an adjacency matrix of order two and three, respectively. If we assume that the $i$th and the $j$th semaphore, where $1 \leq i,j \leq r$, are a  binary and a counting semaphore, respectively, then we get the following adjacency matrices.
        $$  S^{(i)}=
            \begin{pmatrix}
                0   & p_i\\
                v_i & 0\\
            \end{pmatrix}
            \hbox{ and } S^{(j)}=
            \begin{pmatrix}
                0   & p_j & 0\\
                v_j & 0   & p_j\\
                0   & v_j & 0\\
            \end{pmatrix}
        $$
        In a similar fashion we can model counting semaphores of higher order.

        The matrix representing the $r$ interleaved semaphores is given by
        $$S=\overset{r}{\underset{i=1}{\bigoplus}} \ S^{(i)}\mbox{, where } S^{(i)} \in \cS.$$

%        With the above prerequisites we are now able to give the formula for of a concurrent programs operator $\merry$ (similar to~\cite{BK:02}):

        %\begin{eqnarray}
        %    M \merry N
        %& = & \SumFromTo{l \in \cL_{\SP_p}}{}{\left ( M_{l} \otimes N_{l} \right )}+ %sync points p
        %      \SumFromTo{l \in \cL_{\SP_v}}{}{\left ( M_{l} \otimes N_{l} \right )}+\\ %sync points v
        %&   & \SumFromTo{l \in \cL_{\V}}{}{\left ( M_{l} \oplus N_{l} \right )}.
        %\end{eqnarray}

        The adjacency matrix representing program $\cP$ referred to as $P$ is defined as
        \begin{equation}\label{definition_merryOperator}
            P=T \merry S =
              %\SumFromTo{l \in \cL_{\SP_p}}{}{\left ( M_{l} \otimes N_{l} \right )}+ %sync points p
              %\SumFromTo{l \in \cL_{\SP_v}}{}{\left ( M_{l} \otimes N_{l} \right )}+\\ %sync points v
              \SumFromTo{l \in \cL_{\SP}}{}{\left ( T_{l} \otimes S_{l} \right )}+\\ %sync points
              \SumFromTo{l \in \cL_{\V}}{}{\left ( T_{l} \oplus S_{l} \right )}.
        \end{equation}

        When applying the Kronecker product to semaphore calls we follow the rules $v_x \cdot v_x=v_x$ and $p_x \cdot p_x=p_x$.

        In Subsect.~\ref{subsection_efficientImplOfMerryOperation} we describe how the $\merry$-operation can be implemented efficiently.

        %\begin{eqnarray*}%n-fold version of \merry-Operator
        %       \overset{j}{\underset{i=1}{\merryBig}} M^{(i)}
        % & = & \SumFromTo{l \in \cL_{\SP_p}}{}{\left ( \overset{j}{\underset{i=1}{\bigotimes}} M^{(i)}_{l}} \right )+ %sync points p
        %       \SumFromTo{l \in \cL_{\SP_v}}{}{\left ( \overset{j}{\underset{i=1}{\bigotimes}} M^{(i)}_{l}} \right )+\\ %sync points v
        % &   & \SumFromTo{l \in \cL_{\V}}{}{\left ( \overset{j}{\underset{i=1}{\bigoplus}} M^{(i)}_{l}} \right ).
        %\end{eqnarray*}
    \end{subsection}

    \begin{subsection}{$\merry$-Operation and Synchronization}\label{subsection_proofSyncInOurModel}
        \begin{lemma}\label{semsync}
            Let $T=\bigoplus_{i=1}^k T^{(i)}$ be the matrix representing $k$ interleaved threads and let $S$ be a binary semaphore.
            Then $T \merry S$ correctly models synchronization of $T$ with semaphore $S$.\footnote{Note that we do not make assumptions concerning the structure of $T$.}
        \end{lemma}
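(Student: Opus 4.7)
The plan is to unfold $T \merry S$ using its definition and exploit the very simple $2 \times 2$ structure of the binary semaphore. Only $S_p$ and $S_v$ are nonzero among the $S_l$, so the terms $T_l \otimes S_l$ vanish for every $l \in \cL_{\SP}$ other than the two labels carried by this semaphore, while for every $l \in \cL_{\V}$ we have $T_l \oplus S_l = T_l \otimes I_2 + I_{o(T)} \otimes S_l = T_l \otimes I_2$, since $S_l = 0$. Collecting the surviving terms gives
$$T \merry S = T_p \otimes S_p + T_v \otimes S_v + \Bigl(\sum_{l \in \cL_{\V}} T_l\Bigr) \otimes I_2.$$

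Next I would view $T \merry S$ as a $2 \times 2$ block matrix whose nodes are pairs $(t,s)$, with $t$ a joint state of the interleaved threads and $s \in \{1,2\}$ the state of $S$ ($1 =$ free, $2 =$ taken). Reading the sparsity pattern of $S_p$ and $S_v$ off block by block gives three edge-translation rules: every non-synchronization edge $t_1 \to t_2$ of $T$ lifts to edges $(t_1,s) \to (t_2,s)$ for both values of $s$; every P-edge $t_1 \to t_2$ of $T$ lifts to the unique edge $(t_1,1) \to (t_2,2)$; and every V-edge $t_1 \to t_2$ of $T$ lifts to the unique edge $(t_1,2) \to (t_2,1)$.

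From these three rules the synchronization semantics drops out immediately. A P-call is enabled only when $s=1$ and moves the semaphore to $s=2$; a V-call is enabled only when $s=2$ and moves it back; non-synchronization statements are oblivious to $s$. In particular no P-edge leaves any node with $s=2$, which yields mutual exclusion: once a thread has completed its P-call, no competing thread can pass its own P-call before a matching V is performed. Since $T$ already encodes all thread interleavings via Kronecker sum (as recalled in Ex.~\ref{example_kronsumCD}), the reachable paths of $T \merry S$ starting at $(n_e,1)$ are exactly those thread interleavings that respect the semaphore invariant.

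The main obstacle is not the algebra, which reduces as above, but pinning down what ``correctly models'' formally means. I expect the cleanest formulation is a bijection between paths in $T \merry S$ from $(n_e,1)$ and legal serializations of the $k$ threads under mutual exclusion by $S$, proved by induction on path length, with the three edge-translation rules furnishing both directions of the inductive step. The Mixed Sum Rule (Lemma~\ref{lemma_mixedSumRule}) together with associativity of $\oplus$ (Lemma~\ref{lemma_kroneckerSumIsAssociative}) justify any regrouping of summands used during the reduction, so no structural assumption on $T$ is required, as promised in the footnote.
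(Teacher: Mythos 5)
Your proposal is correct and takes essentially the same route as the paper: your decomposition $T \merry S = T_p \otimes S_p + T_v \otimes S_v + \bigl(\sum_{l \in \cL_{\V}} T_l\bigr) \otimes I_2$ and the three edge-translation rules are precisely the paper's opening block-replacement observations, and your ``P enabled only in the free state, V only in the taken state'' argument is the paper's conclusion that every surviving path matches the regular expression $\bigl(\sum m\bigr)^{*}\bigl(p\,(\sum m)^{*}\,v\,(\sum m)^{*}\bigr)^{*}$. The only cosmetic difference is that the paper packages the set of legal paths as that regular expression, whereas you propose a path-length induction establishing a bijection with legal serializations.
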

        \begin{proof}
            First we observe that
            \begin{enumerate}
            \item
            the first term in the definition of Eq.~\eqref{definition_merryOperator} replaces
            \begin{itemize}
                \item each $p$ in matrix $T$ with $\begin{pmatrix}0&p\\0&0\end{pmatrix}$ and
                \item each $v$ in matrix $T$ with $\begin{pmatrix}0&0\\v&0\end{pmatrix}$,
            \end{itemize}
            \item
            the second term replaces each $m \in \cL_{\cV}$ with $\begin{pmatrix}m&0\\0&m\end{pmatrix}$, and
            \item
            both terms replace each $0$ by $\begin{pmatrix}0 & 0 \\ 0 & 0\end{pmatrix}$.
            \end{enumerate}
            According to the replacements above the order of matrix $T \merry S$ has doubled compared to $T$.

            Now, consider the paths in the automaton underlying $T$ described by the regular expression
            $$\pi=\left( \SumFromToText{m\in \cL_{\cV}}{}{m} \right)^{*} \left (p \left(  \SumFromToText{m\in \cL_{\cV}}{}{m}  \right)^{*} v \left ( \SumFromToText{m\in \cL_{\cV}}{}{m} \right)^{*} \right )^{*}.$$

            By the observations above it is easy to see that paths containing $\pi$ are present in $T \merry S$.
            On the other hand, paths not containing $\pi$ are no more present in $T \merry S$.
            Thus the semaphore operations always occur in $(p,v)$ pairs in all paths in $T \merry S$.
            This, however, exactly mirrors the semantics of synchronization via a semaphore. \qed
        \end{proof}

        Generalizing Lemma~\ref{semsync}, it is easy to see that the synchronization property is also correctly modeled if we replace the binary semaphore by one which allows more than one thread to enter it.
        In addition, the synchronization property is correctly modeled even if more than one semaphore is present on the right-hand side of $T \merry S$.

        As a byproduct the proof of Lemma~\ref{semsync} shows the following corollary.
        \begin{corollary}\label{corollary_zeroline}
            If the program modeled by $T \merry S$ contains a deadlock, then the matrix $T \merry S$ will contain a zero line $\ell$.
            Node $\ell$ in the corresponding automaton is no final node and does not have successors.
        \end{corollary}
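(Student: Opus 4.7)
The plan is to unpack what a deadlock means at the matrix level and then read the two required properties of row $\ell$ directly from the block structure used to define $\merry$. Every reachable CPG node corresponds, by construction of $T \merry S$, to a pair $\ell = (t,s)$ where $t$ is a state of the interleaved thread matrix $T$ and $s$ a state of the interleaved semaphore matrix $S$. I would take as the operational definition of deadlock that $\ell$ is reachable, some thread still has pending work (so $t$ is not a terminal node of $T$), and yet no thread can take any step from $\ell$.

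The core of the argument would then classify the possible non-zero outgoing entries of row $\ell$ via Eq.~\eqref{definition_merryOperator}. The non-synchronization summand $T_l \oplus S_l$ with $l \in \cL_{\V}$ reduces to $T_l \otimes I$ because $S$ carries no labels from $\cL_{\V}$; hence it contributes to row $\ell$ exactly when $t$ has an outgoing non-synchronization edge in $T$. If such an edge were present, the corresponding thread could advance independently of the semaphore, contradicting the deadlock hypothesis. Therefore every outgoing edge of $t$ in $T$ must be a $p$- or $v$-call. The synchronization summand $T_l \otimes S_l$ with $l \in \cL_{\SP}$, by the block substitutions exhibited in the proof of Lemma~\ref{semsync}, contributes to row $\ell$ only when $t$ offers an $l$-call and the current semaphore state $s$ accepts the matching transition. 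The very meaning of deadlock is that no such simultaneous offer-and-acceptance exists, so this summand vanishes along row $\ell$ as well. Combining the two classifications, row $\ell$ of $T \merry S$ is identically zero, which yields both the zero-line claim and the absence of successors in the automaton.

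Non-finality then follows immediately from the chosen definition, since the projected thread state $t$ is not terminal in $T$ and the CPG inherits its final nodes from $T$. The main obstacle I foresee is definitional rather than technical: the paper never pins down a formal notion of deadlock at the automaton level, so the first task is to commit to the natural reading sketched above and check that it really captures ``some thread wants a semaphore it cannot get, and no other thread can proceed.'' Once this is accepted, the rest is a direct reading of the block substitutions displayed inside the proof of Lemma~\ref{semsync}, generalized from binary to arbitrary semaphores as remarked immediately after that lemma.
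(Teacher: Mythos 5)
Your proposal is correct and follows essentially the same route as the paper, which derives the corollary directly from the block substitutions exhibited in the proof of Lemma~\ref{semsync} (the paper offers no further argument beyond calling it a byproduct of that proof). You merely make explicit what the paper leaves implicit --- the case split on $\cL_{\V}$ versus $\cL_{\SP}$ entries of row $\ell$ and the need for an operational definition of deadlock --- which is a faithful elaboration rather than a different approach.
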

        Thus deadlocks show up in CPGs as a pure structural property of the underlying graphs. Nevertheless, false positives may occur. From a static point of view, a deadlock is possible while conditions exclude this case at runtime. Our approach delivers a path to a deadlock in any case. Nevertheless, our approach of finding deadlocks is complete. If it states deadlock freedom, then the program under test is certainly deadlock free.

        A further consequence of Lemma~\ref{semsync} is that after applying the $\merry$-operation only a small part of the underlying automata can be reached from its entry node. This allows for optimizations discussed later.
    \end{subsection}

    \begin{subsection}{Unreachable Parts Caused by Synchronization}\label{subsection_unreachableExample}
        In this subsection we show that synchronization causes unreachable parts. As an example consider Fig.~\ref{figure_exampleMutualExclusion}.
        The program consists of two threads, namely $T_1$ and $T_2$. The RCFGs of the threads are shown in Fig.~\ref{figure_exampleMutualExclusionT1} and Fig.~\ref{figure_exampleMutualExclusionT2}. The used semaphore is a binary semaphore similar to Fig.~\ref{figure_semaphoreBinary}. Its operations are referred to as $p_1$ and $v_1$. We denote a P and V-call to semaphore $x$ of thread $t$ as $t.p_x$ and $t.v_x$, respectively. $T_1$ and $T_2$ access the same shared variable in $a$ and $b$, respectively. The semaphore is used to ensure that $a$ and $b$ are accessed mutually exclusively. Note that $a$ and $b$ may actually be subgraphs consisting of multiple nodes and edges.

        For the example we have the matrices
        $$
        T_1=\begin{pmatrix}
                0  & p_1 & 0 & 0\\
                0  & 0  & a & 0\\
                0  & 0  & 0 & v_1\\
                0  & 0  & 0 & 0\\
            \end{pmatrix}
        \mbox{, }
        T_2=\begin{pmatrix}
                0  & p_1 & 0 & 0\\
                0  & 0  & b & 0\\
                0  & 0  & 0 & v_1\\
                0  & 0  & 0 & 0\\
            \end{pmatrix}
        \mbox{, and } S=\begin{pmatrix}
                0  & p_1\\
                v_1 & 0\\
            \end{pmatrix}.
        $$
        %In addition let $N=\begin{pmatrix}
        %        0 & p_1 & 0 & 0\\
        %        0 & 0 & b & 0\\
        %        0 & 0 & 0 & v_1\\
        %        0 & 0 & 0 & 0\\
        %    \end{pmatrix}.$

        \noindent
        Then we obtain the matrix $T=T_1\oplus T_2$, a matrix of order 16, consisting of the submatrices
        defined above and zero matrices of order four (instead of $Z_4$ simply denoted by 0) as follows.
        $$T=\begin{pmatrix}
                \ T_2\ & p_1 \cdot I_4 & 0           & 0            \\
                0      & T_2           & a \cdot I_4 & 0            \\
                0      & 0             & T_2         & v_1 \cdot I_4\\
                0      & 0             & 0           & T_2          \\
            \end{pmatrix}
        $$
        In order to enable a concise presentation of $T \circ S$ we define the matrices
        \begin{eqnarray*}
        U&=&\begin{pmatrix}
                0 & 0 & 0 & p_1 & 0 & 0 & 0 & 0\\
                0 & 0 & 0 & 0 & 0 & 0 & 0 & 0\\
                0 & 0 & 0 & 0 & b & 0 & 0 & 0\\
                0 & 0 & 0 & 0 & 0 & b & 0 & 0\\
                0 & 0 & 0 & 0 & 0 & 0 & 0 & 0\\
                0 & 0 & 0 & 0 & 0 & 0 & v_1 & 0\\
                0 & 0 & 0 & 0 & 0 & 0 & 0 & 0\\
                0 & 0 & 0 & 0 & 0 & 0 & 0 & 0\\
            \end{pmatrix},
          V=\begin{pmatrix}
                0 & p_1 & 0 & 0 & 0 & 0 & 0 & 0\\
                0 & 0 & 0 & 0 & 0 & 0 & 0 & 0\\
                0 & 0 & 0 & p_1 & 0 & 0 & 0 & 0\\
                0 & 0 & 0 & 0 & 0 & 0 & 0 & 0\\
                0 & 0 & 0 & 0 & 0 & p_1 & 0 & 0\\
                0 & 0 & 0 & 0 & 0 & 0 & 0 & 0\\
                0 & 0 & 0 & 0 & 0 & 0 & 0 & p_1\\
                0 & 0 & 0 & 0 & 0 & 0 & 0 & 0\\
            \end{pmatrix},\\
          W&=&a \cdot I_8\mbox{, and }
          X=\begin{pmatrix}
                0 & 0 & 0 & 0 & 0 & 0 & 0 & 0\\
                v_1 & 0 & 0 & 0 & 0 & 0 & 0 & 0\\
                0 & 0 & 0 & 0 & 0 & 0 & 0 & 0\\
                0 & 0 & v_1 & 0 & 0 & 0 & 0 & 0\\
                0 & 0 & 0 & 0 & 0 & 0 & 0 & 0\\
                0 & 0 & 0 & 0 & v_1 & 0 & 0 & 0\\
                0 & 0 & 0 & 0 & 0 & 0 & 0 & 0\\
                0 & 0 & 0 & 0 & 0 & 0 & v_1 & 0\\
            \end{pmatrix}\mbox{ of order 8.}
        \end{eqnarray*}
        Then we obtain the matrix $T \circ S$, a matrix of order 32, consisting of the submatrices
        defined above and zero matrices of order eight (instead of $Z_8$ simply denoted by 0) as follows.

        $$T \circ S=
            \begin{pmatrix}
                U & V & 0 & 0 \\
                0 & U & W & 0 \\
                0 & 0 & U & X \\
                0 & 0 & 0 & U \\
            \end{pmatrix}.
        $$

        The generated CPG is depicted in Fig.~\ref{figure_exampleMutualExclusionCPG}. The resulting adjacency matrix has order 32, whereas the resulting CPG consists only of 12 nodes and 12 edges. Large parts (20 nodes and 20 edges) are unreachable from the entry node. In Fig.~\ref{figure_exampleMutualExclusionUnreachableParts} these unreachable parts are depicted.

        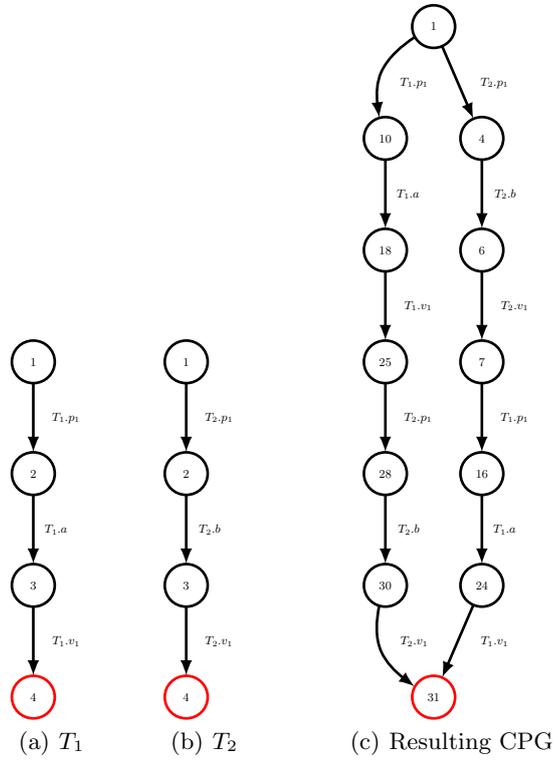
\begin{figure}%[tbh]
            \centering
            \subfigure[$T_1$]{
                \begin{tikzpicture}[scale=0.5,transform shape,>=latex,line join=bevel,]
  \pgfsetlinewidth{1bp}
\pgfsetcolor{black}
  % Edge: 1 -> 2
  \draw [->] (16bp,251.8bp) .. controls (16bp,240.07bp) and (16bp,223.94bp)  .. (16bp,200.15bp);
  \definecolor{strokecol}{rgb}{0.0,0.0,0.0};
  \pgfsetstrokecolor{strokecol}
  \draw (40.5bp,226bp) node {$T_1.p_1$};
  % Edge: 2 -> 3
  \draw [->] (16bp,167.8bp) .. controls (16bp,156.07bp) and (16bp,139.94bp)  .. (16bp,116.15bp);
  \draw (33bp,142bp) node {$T_1.a$};
  % Edge: 3 -> 4
  \draw [->] (16bp,83.804bp) .. controls (16bp,72.067bp) and (16bp,55.941bp)  .. (16bp,32.146bp);
  \draw (40.5bp,58bp) node {$T_1.v_1$};
  % Node: 1
\begin{scope}
  \definecolor{strokecol}{rgb}{0.0,0.0,0.0};
  \pgfsetstrokecolor{strokecol}
  \draw (16bp,268bp) ellipse (16bp and 16bp);
  \draw (16bp,268bp) node {$1$};
\end{scope}
  % Node: 3
\begin{scope}
  \definecolor{strokecol}{rgb}{0.0,0.0,0.0};
  \pgfsetstrokecolor{strokecol}
  \draw (16bp,100bp) ellipse (16bp and 16bp);
  \draw (16bp,100bp) node {$3$};
\end{scope}
  % Node: 2
\begin{scope}
  \definecolor{strokecol}{rgb}{0.0,0.0,0.0};
  \pgfsetstrokecolor{strokecol}
  \draw (16bp,184bp) ellipse (16bp and 16bp);
  \draw (16bp,184bp) node {$2$};
\end{scope}
  % Node: 4
\begin{scope}
  \definecolor{strokecol}{rgb}{1.0,0.0,0.0};
  \pgfsetstrokecolor{strokecol}
  \draw (16bp,16bp) ellipse (16bp and 16bp);
  \definecolor{strokecol}{rgb}{0.0,0.0,0.0};
  \pgfsetstrokecolor{strokecol}
  \draw (16bp,16bp) node {$4$};
\end{scope}
\end{tikzpicture}
                \label{figure_exampleMutualExclusionT1}
            }
            \hspace{5mm}
            \subfigure[$T_2$]{
                %\hspace{5mm}
                \begin{tikzpicture}[scale=0.5,transform shape,>=latex,line join=bevel,]
  \pgfsetlinewidth{1bp}
\pgfsetcolor{black}
  % Edge: 1 -> 2
  \draw [->] (16bp,251.8bp) .. controls (16bp,240.07bp) and (16bp,223.94bp)  .. (16bp,200.15bp);
  \definecolor{strokecol}{rgb}{0.0,0.0,0.0};
  \pgfsetstrokecolor{strokecol}
  \draw (40.5bp,226bp) node {$T_2.p_1$};
  % Edge: 2 -> 3
  \draw [->] (16bp,167.8bp) .. controls (16bp,156.07bp) and (16bp,139.94bp)  .. (16bp,116.15bp);
  \draw (33.5bp,142bp) node {$T_2.b$};
  % Edge: 3 -> 4
  \draw [->] (16bp,83.804bp) .. controls (16bp,72.067bp) and (16bp,55.941bp)  .. (16bp,32.146bp);
  \draw (40.5bp,58bp) node {$T_2.v_1$};
  % Node: 1
\begin{scope}
  \definecolor{strokecol}{rgb}{0.0,0.0,0.0};
  \pgfsetstrokecolor{strokecol}
  \draw (16bp,268bp) ellipse (16bp and 16bp);
  \draw (16bp,268bp) node {$1$};
\end{scope}
  % Node: 3
\begin{scope}
  \definecolor{strokecol}{rgb}{0.0,0.0,0.0};
  \pgfsetstrokecolor{strokecol}
  \draw (16bp,100bp) ellipse (16bp and 16bp);
  \draw (16bp,100bp) node {$3$};
\end{scope}
  % Node: 2
\begin{scope}
  \definecolor{strokecol}{rgb}{0.0,0.0,0.0};
  \pgfsetstrokecolor{strokecol}
  \draw (16bp,184bp) ellipse (16bp and 16bp);
  \draw (16bp,184bp) node {$2$};
\end{scope}
  % Node: 4
\begin{scope}
  \definecolor{strokecol}{rgb}{1.0,0.0,0.0};
  \pgfsetstrokecolor{strokecol}
  \draw (16bp,16bp) ellipse (16bp and 16bp);
  \definecolor{strokecol}{rgb}{0.0,0.0,0.0};
  \pgfsetstrokecolor{strokecol}
  \draw (16bp,16bp) node {$4$};
\end{scope}
\end{tikzpicture}
                %\hspace{5mm}
                \label{figure_exampleMutualExclusionT2}
            }
            \hspace{5mm}
            \subfigure[Resulting CPG]{
                \hspace{5mm}
                \begin{tikzpicture}[scale=0.5,transform shape,>=latex,line join=bevel,]
  \pgfsetlinewidth{1bp}
\pgfsetcolor{black}
  % Edge: 30 -> 31
  \draw [->] (11.413bp,84.437bp) .. controls (9.2559bp,74.204bp) and (8.1032bp,60.765bp)  .. (13bp,50bp) .. controls (16.556bp,42.181bp) and (22.898bp,35.46bp)  .. (37.66bp,24.078bp);
  \definecolor{strokecol}{rgb}{0.0,0.0,0.0};
  \pgfsetstrokecolor{strokecol}
  \draw (37.5bp,58bp) node {$T_2.v_1$};
  % Edge: 16 -> 24
  \draw [->] (88bp,167.8bp) .. controls (88bp,156.07bp) and (88bp,139.94bp)  .. (88bp,116.15bp);
  \draw (105bp,142bp) node {$T_1.a$};
  % Edge: 18 -> 25
  \draw [->] (16bp,335.8bp) .. controls (16bp,324.07bp) and (16bp,307.94bp)  .. (16bp,284.15bp);
  \draw (40.5bp,310bp) node {$T_1.v_1$};
  % Edge: 25 -> 28
  \draw [->] (16bp,251.8bp) .. controls (16bp,240.07bp) and (16bp,223.94bp)  .. (16bp,200.15bp);
  \draw (40.5bp,226bp) node {$T_2.p_1$};
  % Edge: 6 -> 7
  \draw [->] (88bp,335.8bp) .. controls (88bp,324.07bp) and (88bp,307.94bp)  .. (88bp,284.15bp);
  \draw (112.5bp,310bp) node {$T_2.v_1$};
  % Edge: 4 -> 6
  \draw [->] (88bp,419.8bp) .. controls (88bp,408.07bp) and (88bp,391.94bp)  .. (88bp,368.15bp);
  \draw (105.5bp,394bp) node {$T_2.b$};
  % Edge: 1 -> 10
  \draw [->] (37.66bp,511.92bp) .. controls (28.749bp,506.02bp) and (18.058bp,497.12bp)  .. (13bp,486bp) .. controls (9.5952bp,478.51bp) and (9.1151bp,469.74bp)  .. (11.413bp,451.56bp);
  \draw (37.5bp,478bp) node {$T_1.p_1$};
  % Edge: 7 -> 16
  \draw [->] (88bp,251.8bp) .. controls (88bp,240.07bp) and (88bp,223.94bp)  .. (88bp,200.15bp);
  \draw (112.5bp,226bp) node {$T_1.p_1$};
  % Edge: 10 -> 18
  \draw [->] (16bp,419.8bp) .. controls (16bp,408.07bp) and (16bp,391.94bp)  .. (16bp,368.15bp);
  \draw (33bp,394bp) node {$T_1.a$};
  % Edge: 28 -> 30
  \draw [->] (16bp,167.8bp) .. controls (16bp,156.07bp) and (16bp,139.94bp)  .. (16bp,116.15bp);
  \draw (33.5bp,142bp) node {$T_2.b$};
  % Edge: 1 -> 4
  \draw [->] (58.436bp,504.98bp) .. controls (63.76bp,492.56bp) and (71.425bp,474.68bp)  .. (81.62bp,450.89bp);
  \draw (97.5bp,478bp) node {$T_2.p_1$};
  % Edge: 24 -> 31
  \draw [->] (81.564bp,84.982bp) .. controls (76.24bp,72.56bp) and (68.575bp,54.675bp)  .. (58.38bp,30.888bp);
  \draw (97.5bp,58bp) node {$T_1.v_1$};
  % Node: 24
\begin{scope}
  \definecolor{strokecol}{rgb}{0.0,0.0,0.0};
  \pgfsetstrokecolor{strokecol}
  \draw (88bp,100bp) ellipse (16bp and 16bp);
  \draw (88bp,100bp) node {$24$};
\end{scope}
  % Node: 10
\begin{scope}
  \definecolor{strokecol}{rgb}{0.0,0.0,0.0};
  \pgfsetstrokecolor{strokecol}
  \draw (16bp,436bp) ellipse (16bp and 16bp);
  \draw (16bp,436bp) node {$10$};
\end{scope}
  % Node: 16
\begin{scope}
  \definecolor{strokecol}{rgb}{0.0,0.0,0.0};
  \pgfsetstrokecolor{strokecol}
  \draw (88bp,184bp) ellipse (16bp and 16bp);
  \draw (88bp,184bp) node {$16$};
\end{scope}
  % Node: 18
\begin{scope}
  \definecolor{strokecol}{rgb}{0.0,0.0,0.0};
  \pgfsetstrokecolor{strokecol}
  \draw (16bp,352bp) ellipse (16bp and 16bp);
  \draw (16bp,352bp) node {$18$};
\end{scope}
  % Node: 31
\begin{scope}
  \definecolor{strokecol}{rgb}{1.0,0.0,0.0};
  \pgfsetstrokecolor{strokecol}
  \draw (52bp,16bp) ellipse (16bp and 16bp);
  \definecolor{strokecol}{rgb}{0.0,0.0,0.0};
  \pgfsetstrokecolor{strokecol}
  \draw (52bp,16bp) node {$31$};
\end{scope}
  % Node: 30
\begin{scope}
  \definecolor{strokecol}{rgb}{0.0,0.0,0.0};
  \pgfsetstrokecolor{strokecol}
  \draw (16bp,100bp) ellipse (16bp and 16bp);
  \draw (16bp,100bp) node {$30$};
\end{scope}
  % Node: 28
\begin{scope}
  \definecolor{strokecol}{rgb}{0.0,0.0,0.0};
  \pgfsetstrokecolor{strokecol}
  \draw (16bp,184bp) ellipse (16bp and 16bp);
  \draw (16bp,184bp) node {$28$};
\end{scope}
  % Node: 1
\begin{scope}
  \definecolor{strokecol}{rgb}{0.0,0.0,0.0};
  \pgfsetstrokecolor{strokecol}
  \draw (52bp,520bp) ellipse (16bp and 16bp);
  \draw (52bp,520bp) node {$1$};
\end{scope}
  % Node: 4
\begin{scope}
  \definecolor{strokecol}{rgb}{0.0,0.0,0.0};
  \pgfsetstrokecolor{strokecol}
  \draw (88bp,436bp) ellipse (16bp and 16bp);
  \draw (88bp,436bp) node {$4$};
\end{scope}
  % Node: 7
\begin{scope}
  \definecolor{strokecol}{rgb}{0.0,0.0,0.0};
  \pgfsetstrokecolor{strokecol}
  \draw (88bp,268bp) ellipse (16bp and 16bp);
  \draw (88bp,268bp) node {$7$};
\end{scope}
  % Node: 6
\begin{scope}
  \definecolor{strokecol}{rgb}{0.0,0.0,0.0};
  \pgfsetstrokecolor{strokecol}
  \draw (88bp,352bp) ellipse (16bp and 16bp);
  \draw (88bp,352bp) node {$6$};
\end{scope}
  % Node: 25
\begin{scope}
  \definecolor{strokecol}{rgb}{0.0,0.0,0.0};
  \pgfsetstrokecolor{strokecol}
  \draw (16bp,268bp) ellipse (16bp and 16bp);
  \draw (16bp,268bp) node {$25$};
\end{scope}
\end{tikzpicture}
                \hspace{5mm}
                \label{figure_exampleMutualExclusionCPG}
            }
            \caption{Mutual Exclusion Example}
            \label{figure_exampleMutualExclusion}
        \end{figure}
        \begin{figure}%[tbh]
            \centering
            \begin{tikzpicture}[scale=0.5,transform shape,>=latex,line join=bevel,]
  \pgfsetlinewidth{1bp}
\begin{scope}
  \pgfsetstrokecolor{black}
  \definecolor{strokecol}{rgb}{1.0,1.0,1.0};
  \pgfsetstrokecolor{strokecol}
  \definecolor{fillcol}{rgb}{1.0,1.0,1.0};
  \pgfsetfillcolor{fillcol}
  \filldraw (30bp,15bp) -- (30bp,319bp) -- (417bp,319bp) -- (417bp,15bp) -- cycle;
\end{scope}
\begin{scope}
  \pgfsetstrokecolor{black}
  \definecolor{strokecol}{rgb}{1.0,1.0,1.0};
  \pgfsetstrokecolor{strokecol}
  \definecolor{fillcol}{rgb}{1.0,1.0,1.0};
  \pgfsetfillcolor{fillcol}
  \filldraw (0bp,0bp) -- (0bp,294bp) -- (294bp,294bp) -- (294bp,0bp) -- cycle;
\end{scope}
  \pgfsetcolor{black}
  % Edge: 17 -> 20
  \draw [->] (230.08bp,228.34bp) .. controls (225.56bp,217.89bp) and (219.43bp,203.7bp)  .. (210.13bp,182.16bp);
  \definecolor{strokecol}{rgb}{0.0,0.0,0.0};
  \pgfsetstrokecolor{strokecol}
  \draw (204.15bp,217bp) node {$T_2.p_1$};
  % Edge: 14 -> 15
  \draw [->] (84.268bp,111.68bp) .. controls (79.833bp,101.17bp) and (73.817bp,86.911bp)  .. (64.686bp,65.265bp);
  \draw (58.488bp,100.24bp) node {$T_2.v_1$};
  % Edge: 3 -> 12
  \draw [->] (65.754bp,229.52bp) .. controls (76.251bp,225.06bp) and (90.49bp,219.03bp)  .. (112.1bp,209.86bp);
  \draw (89.168bp,235.71bp) node {$T_1.p_1$};
  % Edge: 12 -> 20
  \draw [->] (141.69bp,196.64bp) .. controls (152.5bp,191.55bp) and (167.39bp,184.56bp)  .. (189.25bp,174.29bp);
  \draw (154.78bp,173.67bp) node {$T_1.a$};
  % Edge: 27 -> 29
  \draw [->] (270.08bp,114.26bp) .. controls (265.63bp,104.87bp) and (259.78bp,92.514bp)  .. (250.26bp,72.418bp);
  \draw (244.38bp,106bp) node {$T_2.b$};
  % Edge: 14 -> 22
  \draw [->] (105.44bp,119.82bp) .. controls (116.24bp,114.76bp) and (131.11bp,107.81bp)  .. (152.94bp,97.591bp);
  \draw (130.51bp,124.9bp) node {$T_1.a$};
  % Edge: 9 -> 17
  \draw [->] (179.45bp,270.2bp) .. controls (188.93bp,265.74bp) and (201.54bp,259.81bp)  .. (221.59bp,250.37bp);
  \draw (189.98bp,248.43bp) node {$T_1.a$};
  % Edge: 20 -> 22
  \draw [->] (196.77bp,152.68bp) .. controls (191.66bp,141.9bp) and (184.63bp,127.07bp)  .. (174.31bp,105.28bp);
  \draw (205.75bp,125.65bp) node {$T_2.b$};
  % Edge: 15 -> 23
  \draw [->] (73.281bp,43.55bp) .. controls (82.685bp,39.15bp) and (95.069bp,33.354bp)  .. (115.21bp,23.93bp);
  \draw (95.574bp,49.925bp) node {$T_1.a$};
  % Edge: 3 -> 5
  \draw [->] (43.887bp,221.41bp) .. controls (39.424bp,211.93bp) and (33.484bp,199.32bp)  .. (24.043bp,179.27bp);
  \draw (18.105bp,212.88bp) node {$T_2.b$};
  % Edge: 9 -> 12
  \draw [->] (157.43bp,262.52bp) .. controls (152.21bp,252.46bp) and (145.15bp,238.84bp)  .. (134.37bp,218.08bp);
  \draw (165.28bp,235.88bp) node {$T_2.p_1$};
  % Edge: 22 -> 29
  \draw [->] (182.43bp,84.251bp) .. controls (192.88bp,79.679bp) and (207.05bp,73.479bp)  .. (228.57bp,64.066bp);
  \draw (194.76bp,62.232bp) node {$T_1.v_1$};
  % Edge: 20 -> 27
  \draw [->] (218.26bp,159.88bp) .. controls (228.3bp,154.61bp) and (241.9bp,147.49bp)  .. (262.63bp,136.63bp);
  \draw (228.87bp,136.65bp) node {$T_1.v_1$};
  % Edge: 5 -> 14
  \draw [->] (31.546bp,156.89bp) .. controls (41.629bp,151.75bp) and (55.284bp,144.78bp)  .. (76.108bp,134.16bp);
  \draw (56.235bp,161.87bp) node {$T_1.p_1$};
  % Edge: 22 -> 23
  \draw [->] (160.05bp,76.222bp) .. controls (154.89bp,66.048bp) and (147.89bp,52.248bp)  .. (137.27bp,31.3bp);
  \draw (134bp,67.375bp) node {$T_2.v_1$};
  % Edge: 13 -> 21
  \draw [->] (382.91bp,217.32bp) .. controls (382.41bp,227.55bp) and (381.74bp,241.03bp)  .. (380.65bp,262.94bp);
  \draw (400.03bp,235.05bp) node {$T_1.a$};
  % Edge: 11 -> 19
  \draw [->] (305.06bp,213.13bp) .. controls (304.55bp,223.37bp) and (303.88bp,236.84bp)  .. (302.8bp,258.76bp);
  \draw (322.18bp,230.86bp) node {$T_1.a$};
  % Edge: 19 -> 21
  \draw [->] (318.13bp,275.74bp) .. controls (328.36bp,276.29bp) and (341.83bp,277.01bp)  .. (363.74bp,278.19bp);
  \draw (335.86bp,267.69bp) node {$T_2.b$};
  % Edge: 11 -> 13
  \draw [->] (321.99bp,197.87bp) .. controls (332.22bp,198.42bp) and (345.69bp,199.14bp)  .. (367.6bp,200.32bp);
  \draw (339.72bp,207.82bp) node {$T_2.b$};
  % Edge: 12 -> 14
  \draw [->] (119.87bp,188.79bp) .. controls (114.78bp,177.98bp) and (107.76bp,163.1bp)  .. (97.466bp,141.26bp);
  \draw (92.877bp,177.71bp) node {$T_2.b$};
  % Node: 11
\begin{scope}
  \definecolor{strokecol}{rgb}{0.0,0.0,0.0};
  \pgfsetstrokecolor{strokecol}
  \draw (306bp,197bp) ellipse (16bp and 16bp);
  \draw (305.86bp,197bp) node {$11$};
\end{scope}
  % Node: 13
\begin{scope}
  \definecolor{strokecol}{rgb}{0.0,0.0,0.0};
  \pgfsetstrokecolor{strokecol}
  \draw (384bp,201bp) ellipse (16bp and 16bp);
  \draw (383.71bp,201.19bp) node {$13$};
\end{scope}
  % Node: 12
\begin{scope}
  \definecolor{strokecol}{rgb}{0.0,0.0,0.0};
  \pgfsetstrokecolor{strokecol}
  \draw (127bp,204bp) ellipse (16bp and 16bp);
  \draw (126.86bp,203.61bp) node {$12$};
\end{scope}
  % Node: 20
\begin{scope}
  \definecolor{strokecol}{rgb}{0.0,0.0,0.0};
  \pgfsetstrokecolor{strokecol}
  \draw (204bp,167bp) ellipse (16bp and 16bp);
  \draw (203.77bp,167.46bp) node {$20$};
\end{scope}
  % Node: 21
\begin{scope}
  \definecolor{strokecol}{rgb}{0.0,0.0,0.0};
  \pgfsetstrokecolor{strokecol}
  \draw (380bp,279bp) ellipse (16bp and 16bp);
  \draw (379.85bp,279.06bp) node {$21$};
\end{scope}
  % Node: 17
\begin{scope}
  \definecolor{strokecol}{rgb}{0.0,0.0,0.0};
  \pgfsetstrokecolor{strokecol}
  \draw (237bp,243bp) ellipse (16bp and 16bp);
  \draw (236.55bp,243.33bp) node {$17$};
\end{scope}
  % Node: 23
\begin{scope}
  \definecolor{strokecol}{rgb}{0.0,0.0,0.0};
  \pgfsetstrokecolor{strokecol}
  \draw (130bp,17bp) ellipse (16bp and 16bp);
  \draw (130.02bp,17bp) node {$23$};
\end{scope}
  % Node: 19
\begin{scope}
  \definecolor{strokecol}{rgb}{0.0,0.0,0.0};
  \pgfsetstrokecolor{strokecol}
  \draw (302bp,275bp) ellipse (16bp and 16bp);
  \draw (302bp,274.87bp) node {$19$};
\end{scope}
  % Node: 32
\begin{scope}
  \definecolor{strokecol}{rgb}{0.0,0.0,0.0};
  \pgfsetstrokecolor{strokecol}
  \draw (107bp,302bp) ellipse (16bp and 16bp);
  \draw (107bp,302bp) node {$32$};
\end{scope}
  % Node: 22
\begin{scope}
  \definecolor{strokecol}{rgb}{0.0,0.0,0.0};
  \pgfsetstrokecolor{strokecol}
  \draw (167bp,91bp) ellipse (16bp and 16bp);
  \draw (167.45bp,90.803bp) node {$22$};
\end{scope}
  % Node: 26
\begin{scope}
  \definecolor{strokecol}{rgb}{0.0,0.0,0.0};
  \pgfsetstrokecolor{strokecol}
  \draw (47bp,302bp) ellipse (16bp and 16bp);
  \draw (47bp,302bp) node {$26$};
\end{scope}
  % Node: 29
\begin{scope}
  \definecolor{strokecol}{rgb}{0.0,0.0,0.0};
  \pgfsetstrokecolor{strokecol}
  \draw (243bp,58bp) ellipse (16bp and 16bp);
  \draw (243.26bp,57.641bp) node {$29$};
\end{scope}
  % Node: 15
\begin{scope}
  \definecolor{strokecol}{rgb}{0.0,0.0,0.0};
  \pgfsetstrokecolor{strokecol}
  \draw (58bp,50bp) ellipse (16bp and 16bp);
  \draw (58.453bp,50.489bp) node {$15$};
\end{scope}
  % Node: 27
\begin{scope}
  \definecolor{strokecol}{rgb}{0.0,0.0,0.0};
  \pgfsetstrokecolor{strokecol}
  \draw (277bp,129bp) ellipse (16bp and 16bp);
  \draw (277.09bp,129.05bp) node {$27$};
\end{scope}
  % Node: 3
\begin{scope}
  \definecolor{strokecol}{rgb}{0.0,0.0,0.0};
  \pgfsetstrokecolor{strokecol}
  \draw (51bp,236bp) ellipse (16bp and 16bp);
  \draw (50.709bp,235.9bp) node {$3$};
\end{scope}
  % Node: 2
\begin{scope}
  \definecolor{strokecol}{rgb}{0.0,0.0,0.0};
  \pgfsetstrokecolor{strokecol}
  \draw (302bp,32bp) ellipse (16bp and 16bp);
  \draw (302bp,32bp) node {$2$};
\end{scope}
  % Node: 5
\begin{scope}
  \definecolor{strokecol}{rgb}{0.0,0.0,0.0};
  \pgfsetstrokecolor{strokecol}
  \draw (17bp,164bp) ellipse (16bp and 16bp);
  \draw (17bp,164.32bp) node {$5$};
\end{scope}
  % Node: 9
\begin{scope}
  \definecolor{strokecol}{rgb}{0.0,0.0,0.0};
  \pgfsetstrokecolor{strokecol}
  \draw (165bp,277bp) ellipse (16bp and 16bp);
  \draw (164.96bp,277.02bp) node {$9$};
\end{scope}
  % Node: 8
\begin{scope}
  \definecolor{strokecol}{rgb}{0.0,0.0,0.0};
  \pgfsetstrokecolor{strokecol}
  \draw (242bp,302bp) ellipse (16bp and 16bp);
  \draw (242bp,302bp) node {$8$};
\end{scope}
  % Node: 14
\begin{scope}
  \definecolor{strokecol}{rgb}{0.0,0.0,0.0};
  \pgfsetstrokecolor{strokecol}
  \draw (91bp,127bp) ellipse (16bp and 16bp);
  \draw (90.623bp,126.75bp) node {$14$};
\end{scope}
\end{tikzpicture}
            \caption{Unreachable Parts of the Mutual Exclusion Example}
            \label{figure_exampleMutualExclusionUnreachableParts}
        \end{figure}
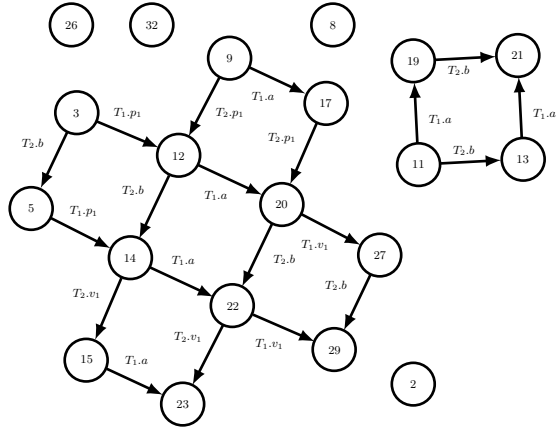
%
        \begin{comment}
        \begin{figure}%[tbh]
            \centering
            \subfigure[$T_1$]{
                \includegraphics[scale=0.5]{./dot2tex_figs/MutualExclusionOverOneSemaphoreSmall_T1_afterInkscape}
                \label{figure_exampleMutualExclusionT1}
            }
            \hspace{5mm}
            \subfigure[$T_2$]{
                %\hspace{5mm}
                \includegraphics[scale=0.5]{./dot2tex_figs/MutualExclusionOverOneSemaphoreSmall_T2_afterInkscape}
                %\hspace{5mm}
                \label{figure_exampleMutualExclusionT2}
            }
            \hspace{5mm}
            \subfigure[Resulting CPG]{
                \hspace{5mm}
                \includegraphics[scale=0.4]{./dot2tex_figs/MutualExclusionOverOneSemaphoreSmall_CPG_afterInkscape}
                \hspace{5mm}
                \label{figure_exampleMutualExclusionCPG}
            }
            \caption{Mutual Exclusion Example}
            \label{figure_exampleMutualExclusion}
        \end{figure}

        \begin{figure}%[tbh]
            \centering
            \includegraphics[scale=0.5]{./dot2tex_figs/MutualExclusionOverOneSemaphoreSmallUnreachable_afterInkscape}
            \caption{Unreachable Parts of the Mutual Exclusion Example}
            \label{figure_exampleMutualExclusionUnreachableParts}
        \end{figure}
        \end{comment}

        In general, unreachable parts exist if a concurrent program contains synchronization. If a program contains a lot of synchronization the reachable parts may be very small. This observation motivates the lazy implementation described in Subsect.~\ref{subsection_lazyImpl}.
    \end{subsection}

    \begin{subsection}{Properties of the Resulting Adjacency Matrix}%\mbox{}\\
        In this subsection we prove interesting properties of the resulting matrices.

        %In case matrix $A$ has order m and $B$ has order $n$ it is easy to justify that $A \oplus B$ has order $m n$ (cf. Def.~\ref{definition_kroneckerSum}).
        A short calculation shows that the Kronecker sum in general generates at most $m n^2 + n m^2 - nm$ non-zero entries.\footnote{Assuming the corresponding matrices have an order of $m$ and $n$, respectively.} Stated the other way, at least $(m n)^2 - m n^2 - n m^2 + m n$ entries are zero. We will see that CFGs and RCFGs contain even more zero entries. We will prove that for this case the number of edges is in $\bigo{mn}$. Thus, the number of edges is linear in the order of the resulting adjacency matrix.

        %Given a program $\cP$ consisting of $k>0$ threads $(t_1, t_2, \dots , t_k)$, where each $t_i$ has $n$ nodes in its RCFG, the maximum order of $P$'s adjacency matrix is given by $n^k$. This property is given by the definition of the Kronecker sum.

        \begin{lemma}[Maximum Number of Nodes]\label{lemma_maxNumOfNodesInCPG}
            Given a program $\cP$ consisting of $k>0$ threads $(t_1, t_2, \dots , t_k)$, where each $t_i$ has $n$ nodes in its RCFG, the number of nodes in $\cP$'s adjacency matrix $P$ is bounded from above by $n^k$.
        \end{lemma}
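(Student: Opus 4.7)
The plan is to reduce the lemma to computing the order of the iterated Kronecker sum $T = \bigoplus_{i=1}^k T^{(i)}$, and then to argue that the node count of the CPG cannot exceed that quantity. The key ingredient is the dimensional behaviour of the Kronecker sum from Definition~\ref{definition_kroneckerSum}: if $A$ has order $m$ and $B$ has order $n$, then $A \oplus B = A \otimes I_n + I_m \otimes B$ has order $mn$, because both summands are $mn$-by-$mn$ matrices. The associativity established in Lemma~\ref{lemma_kroneckerSumIsAssociative} guarantees that the $k$-fold sum is well defined irrespective of how the $T^{(i)}$ are grouped.

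With that in hand, I would run a short induction on $k$. The base case $k=1$ is immediate, since $T^{(1)}$ has order $n = n^1$. For the inductive step, assuming $\bigoplus_{i=1}^{k-1} T^{(i)}$ has order $n^{k-1}$, Kronecker-summing with $T^{(k)}$ of order $n$ gives a matrix of order $n^{k-1} \cdot n = n^k$. Equivalently, the rows and columns of $T$ can be canonically indexed by $k$-tuples $(s_1,\ldots,s_k) \in \{1,\ldots,n\}^k$ of per-thread states, so $T$ itself has exactly $n^k$ nodes.

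To lift the bound from $T$ to $P = T \merry S$ defined in Eq.~\eqref{definition_merryOperator}, I would observe that the $\merry$-operation can only filter the global states of $T$ through the constraints imposed by the semaphores. The replacement pattern worked out in the proof of Lemma~\ref{semsync} shows that synchronization cannot create genuinely new thread-state combinations; it merely restricts which of the $n^k$ combinations are reachable from the entry node (the rest show up as unreachable parts, as in the example of Subsection~\ref{subsection_unreachableExample}). Hence the reachable nodes of the CPG inject into $\{1,\ldots,n\}^k$, giving the desired upper bound $n^k$. The main obstacle will be stating this projection precisely when the formal order of $P$ exceeds $n^k$ owing to the semaphore dimensions; but since the claim is only an upper bound, the argument reduces to routine index bookkeeping on top of the observations already collected.
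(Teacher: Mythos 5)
The first two-thirds of your argument is exactly the paper's proof, just spelled out: the paper disposes of the lemma in one line by noting that for both $\otimes$ and $\oplus$ the order of the result is the product of the orders of the operands, so $\bigoplus_{i=1}^k T^{(i)}$ has order $n^k$. Your induction and your indexing of rows by $k$-tuples of per-thread states are a correct elaboration of that observation, and this is how the paper actually uses the lemma later (in Lemma~\ref{lemma_CPGsAreSparseGraphs} it reads $N=n^k$ as the order of $T$, treating $S$ and $P=T\merry S$ separately).

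Where you part company with the paper is the final step, and that is where you have a genuine gap rather than ``routine index bookkeeping.'' If the lemma is read literally as counting the nodes of the adjacency matrix $P=T\merry S$, the bound $n^k$ is false whenever a semaphore is present: $P$ has order $n^k\cdot o(S)>n^k$ by the very dimension argument you invoke. Your repair --- that the \emph{reachable} nodes of $P$ inject into $\{1,\dots,n\}^k$ --- amounts to claiming that on every node reachable from the entry, the semaphore-state component is a function of the thread-state tuple. That is not immediate: it requires that all paths in a thread's RCFG from the entry to a given node carry the same $p$/$v$ balance for each semaphore, i.e.\ a well-nestedness assumption on the threads, and the paper explicitly makes no structural assumption on $T$ in Lemma~\ref{semsync}. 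Without such an assumption two different semaphore states can in principle be reachable at the same thread-state tuple, and the injection fails. The clean resolution is the paper's implicit one: state and prove the lemma for the interleaved-threads matrix $T=\bigoplus_{i=1}^k T^{(i)}$ (where your induction is complete and correct), and account for the factor $o(S)$ separately when passing to $P$, rather than trying to absorb it into the $n^k$ bound.
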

        \begin{proof}This follows immediately from the definitions of $\otimes$ and $\oplus$. For both the order of the resulting matrix is given by the multiplication of the orders of the input matrices.\qed
        \end{proof}

        \begin{definition}
            Let $M=(m_{i,j}) \in \cM$. We denote the number of non-zero entries by $||M||=|\{m_{i,j} \, | \, m_{i,j} \neq 0\}|$.
        \end{definition}

        For a RCFG with $n$ nodes it is easy to see that it contains at most $2n$ edges.
        %For sake of simplicity we use $2n$ as an upper bound.

        \begin{lemma}[Maximum Number of Entries]\label{lemma_maxNumOfEdgesInCPGS}
            %Given a program $\cP$ consisting of $k>0$ threads $(t_1, t_2, \dots , t_k)$, where each $t_i$ has $n$ nodes in its RCFG, the maximum number of entries in $\cP$'s adjacency matrix $P$ is bounded from above by $2 k \, n^k$.
            Let a program represented by $M_k \in \cM$ consisting of $k>0$ threads be represented by the matrices $T^{(i)} \in \cT$, where each $T^{(i)}$ has order $n$. Then $||M_k||$ is bounded from above by $2 k \, n^k$.
        \end{lemma}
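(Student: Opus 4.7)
The plan is to proceed by induction on $k$, exploiting two elementary facts: (i) each RCFG has at most two outgoing edges per node, so $\|T^{(i)}\|\leq 2n$; and (ii) the number of non-zero entries is multiplicative under Kronecker product, i.e.\ $\|A\otimes B\|=\|A\|\cdot\|B\|$, which follows directly from the block definition of $\otimes$, together with sub-additive under $+$, i.e.\ $\|A+B\|\leq\|A\|+\|B\|$, which is safe because our semiring has no subtraction and hence no cancellation can remove a non-zero entry.

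For the base case $k=1$, we simply have $M_1=T^{(1)}$ and thus $\|M_1\|\leq 2n=2\cdot 1\cdot n^1$. For the inductive step, I would write $M_k=M_{k-1}\oplus T^{(k)}$ (using associativity of $\oplus$, Lemma~\ref{lemma_kroneckerSumIsAssociative}), and then apply Definition~\ref{definition_kroneckerSum} to expand
\begin{equation*}
M_k \;=\; M_{k-1}\otimes I_n \;+\; I_{n^{k-1}}\otimes T^{(k)}.
\end{equation*}
Using fact (i) combined with the induction hypothesis $\|M_{k-1}\|\leq 2(k-1)n^{k-1}$, and the trivial observation $\|I_m\|=m$, fact (ii) yields
\begin{equation*}
\|M_{k-1}\otimes I_n\|\leq 2(k-1)n^{k-1}\cdot n=2(k-1)n^k
\end{equation*}
and
\begin{equation*}
\|I_{n^{k-1}}\otimes T^{(k)}\|\leq n^{k-1}\cdot 2n=2n^k.
\end{equation*}
Adding both bounds gives $\|M_k\|\leq 2(k-1)n^k+2n^k=2kn^k$, as required.

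There is no real obstacle here; the only subtlety worth flagging is to justify that addition cannot reduce the non-zero count. This is immediate from the semiring structure of Subsect.~\ref{subsection_overview}: sums of labels never evaluate to $0$, so positions that were non-zero in either summand remain non-zero in the sum. An alternative, non-inductive presentation would expand $\bigoplus_{i=1}^k T^{(i)}=\sum_{i=1}^k I_{n^{i-1}}\otimes T^{(i)}\otimes I_{n^{k-i}}$, bound each of the $k$ summands by $n^{i-1}\cdot 2n\cdot n^{k-i}=2n^k$, and sum; I would mention this only if space permits, as it gives slightly more geometric intuition for why the bound is linear in the order $n^k$ of $M_k$.
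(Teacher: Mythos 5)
Your proof is correct and follows essentially the same route as the paper: induction on $k$ via the Kronecker sum decomposition, bounding the two terms by $2(k-1)n^{k-1}\cdot n$ and $n^{k-1}\cdot 2n$ respectively and summing to $2kn^k$, which is exactly the paper's calculation. Your additional justifications (multiplicativity of the non-zero count under $\otimes$, no cancellation under $+$ in a semiring without subtraction) are sound and merely make explicit what the paper leaves implicit.
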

        \begin{proof} We prove this lemma by induction on the definition of the Kronecker sum.
            For $k=1$ the lemma is true. If we assume that for $m$ threads $||M_m|| \leq 2m \, n^m$, then for $m+1$ threads $||M_{m+1}|| \leq 2 m \, n^m \cdot n + n^m \cdot 2 n = 2 (m+1) \, n^{m+1}$. Thus, we have proved Lemma~\ref{lemma_maxNumOfEdgesInCPGS}.\qed
            %$k=1$: $2n$ (this is the assumption for each thread),\\
            %$k=2$: $2n \cdot n + n \cdot 2n = 4n^2$,\\
            %$k=3$: $4n^2*n+n^2 \cdot 2n = 6n^3$,\\
            %$m$ threads: $2m \, n^m$,\\
            %$m+1$ threads: $2 m \, n^m \cdot n + n^m \cdot 2 n = 2 (m+1) \, n^{m+1}$.\\
        \end{proof}

        Compared to the full matrix of order $n^k$ with $n^{2k}$ entries the resulting matrix has significantly fewer non-zero entries, namely $2 k \, n^k$. %For example the maximum number of edges in $A \oplus B$ is given by $4nm$.
        By using the following definition we will prove that the matrices are sparse.

        \begin{definition}[Sparse Matrix]\label{definition_SparseMatrix}
            We call a n-by-n matrix $M$ sparse if and only if $||M||=\bigo{n}$.
        \end{definition}

        \begin{lemma}\label{lemma_CFGsAreSparseGraphs}
            CFGs and RCFGs have Sparse Adjacency Matrices.
        \end{lemma}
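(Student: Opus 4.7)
The plan is to invoke directly the structural bound on out-degrees already established for CFGs in Subsect.~\ref{subsection_controlFlowGraphs}, and to verify that edge splitting preserves it. Recall that by definition a CFG $G = \langle V, E, n_e\rangle$ has the property that every node has at most two outgoing edges, from which the paper already observed $|E| \leq 2|V|$.

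First I would translate this directly into the matrix world: if $G$ has $n = |V|$ nodes, its adjacency matrix $M$ has order $n$, and each non-zero entry of $M$ corresponds to exactly one edge of $G$ (since we identify edges with their labels/entries, cf. the convention in Subsect.~\ref{subsection_controlFlowGraphs}). Hence $\|M\| = |E| \leq 2n$, which gives $\|M\| = \bigo{n}$, exactly the condition in Def.~\ref{definition_SparseMatrix}.

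Second, I would argue the same bound for RCFGs. Edge splitting replaces a single edge $e$ carrying a basic block $b$ with $\NUMSV{b} = k$ by a chain of $k$ edges through $k-1$ fresh intermediate nodes (see Fig.~\ref{figure_edgeSplitting}); the analogous construction is used for semaphore calls. Each freshly introduced node lies on such a chain and therefore has exactly one outgoing edge, while the original nodes retain their out-degree of at most two. Thus the invariant ``every node has at most two outgoing edges'' is preserved under edge splitting, so if the RCFG has $n'$ nodes its adjacency matrix $M'$ satisfies $\|M'\| \leq 2n' = \bigo{n'}$.

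I do not foresee a serious obstacle; the only thing to be careful about is making the identification between non-zero matrix entries and edges precise, and confirming that the case of semaphore-call splitting behaves the same way as the shared-variable case with respect to out-degrees. Once these are observed, the conclusion follows immediately by the definition of sparsity. \qed
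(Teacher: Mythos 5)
Your proposal is correct and follows the same route as the paper, which simply cites the out-degree bound from Subsect.~\ref{subsection_controlFlowGraphs} together with Def.~\ref{definition_SparseMatrix}; you merely spell out the resulting inequality $\|M\|\leq 2n$ and explicitly check that edge splitting preserves the bound (the fresh chain nodes have out-degree one), which the paper leaves implicit. No gap.
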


        \begin{proof}
              %Let $G = \langle V, E, n_e \rangle$ be a CFG consisting of $|V|$ nodes. From
              %Subsect.~\ref{subsection_controlFlowGraphs}
              %we know that the number of edges $|E|$ in $G$ is bounded from above by $2 \, |V|$. By using Definition~\ref{definition_SparseMatrix} we get for the maximum number of %entries in the adjacency matrix $M$:
              %$||M|| = 2 \, |V| = \bigo{|V|}$.
              Follows from Subsect.~\ref{subsection_controlFlowGraphs} and Def.~\ref{definition_SparseMatrix}.\qed
        \end{proof}

        \begin{lemma}\label{lemma_CPGsAreSparseGraphs}
            The Matrix $P$ of a Program $\cP$ is Sparse.
        \end{lemma}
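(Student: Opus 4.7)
My plan is to bound $||P||$ term-by-term in the definition of the $\merry$-operation and then invoke the linear bound on $||T||$ provided by Lemma~\ref{lemma_maxNumOfEdgesInCPGS}, together with its analogue for semaphores.

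First, I would expand $P = T \merry S = \sum_{l \in \cL_{\SP}} (T_l \otimes S_l) + \sum_{l \in \cL_{\V}} (T_l \oplus S_l)$ and note that every summand has order $o(T) \cdot o(S) = o(P)$. From the definition of the Kronecker product, $||T_l \otimes S_l|| = ||T_l|| \cdot ||S_l||$. From $T_l \oplus S_l = T_l \otimes I_{o(S)} + I_{o(T)} \otimes S_l$, I would argue $||T_l \oplus S_l|| \leq ||T_l|| \cdot o(S) + o(T) \cdot ||S_l||$, since tensoring a matrix with an identity of order $n$ multiplies its number of non-zeros by $n$.

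Summing over all labels and using $\sum_l ||T_l|| = ||T||$, $\sum_l ||S_l|| = ||S||$, together with the crude non-negativity bound $\sum_l ||T_l|| \cdot ||S_l|| \leq ||T|| \cdot ||S||$, I obtain
$$||P|| \leq ||T|| \cdot ||S|| + ||T|| \cdot o(S) + o(T) \cdot ||S||.$$
Lemma~\ref{lemma_maxNumOfEdgesInCPGS} gives $||T|| \leq 2k \cdot o(T)$, and the identical induction applied to the $r$ semaphore matrices (each being a sparse CFG by Lemma~\ref{lemma_CFGsAreSparseGraphs}) gives $||S|| = O(o(S))$, both with constants depending only on the program-fixed quantities $k$ and $r$. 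Substituting yields $||P|| = O(o(T) \cdot o(S)) = O(o(P))$, which is exactly Def.~\ref{definition_SparseMatrix}.

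The main obstacle I foresee is the cross term $\sum_{l \in \cL_{\SP}} ||T_l|| \cdot ||S_l||$: one has to convince oneself that even the loose upper bound $||T|| \cdot ||S||$ remains $O(o(P))$, which works precisely because both factors are linear rather than quadratic in their orders, so their product is $O(o(T) \cdot o(S))$. A related subtlety is that the constants absorbed into the $O(\cdot)$ legitimately depend on the fixed numbers $k$ and $r$ of threads and semaphores, so ``sparse'' here is to be read as linear growth in $o(P)$ for a given program structure, with everything else treated as a constant factor.
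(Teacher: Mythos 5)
Your proposal is correct and follows essentially the same route as the paper: both rest on Lemma~\ref{lemma_maxNumOfNodesInCPG} and Lemma~\ref{lemma_maxNumOfEdgesInCPGS} to get $||T||=\bigo{o(T)}$ (and analogously $||S||=\bigo{o(S)}$), and then transfer sparsity to $P=T\merry S$. In fact your write-up is more complete than the paper's, which simply asserts ``a similar result holds for $S$ and $P=T\merry S$''; your term-by-term bounds $||T_l\otimes S_l||\leq||T_l||\cdot||S_l||$ and $||T_l\oplus S_l||\leq||T_l||\cdot o(S)+o(T)\cdot||S_l||$, summed over labels, supply exactly the missing justification for that final step.
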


        \begin{proof}
              Let $T=\bigoplus_{i=1}^{k}{T^{(i)}}\in \cM$ be a N-by-N adjacency matrix of a program. We require that each of the $k$ threads has order $n$ in its adjacency matrix $T^{(i)}$. From Lemma~\ref{lemma_maxNumOfEdgesInCPGS} we know $||T||=\bigo{2 k \, n^k}$. In addition, $N=n^k$ is given by Lemma~\ref{lemma_maxNumOfNodesInCPG}. Hence, for $k$ threads and by using Definition~\ref{definition_SparseMatrix} we get $||T|| \leq 2 k \, n^k=2 k \, N=\bigo{N}$. A similar result holds for S and $P = T \merry S$.\qed
        \end{proof}

        Lemma~\ref{lemma_CPGsAreSparseGraphs} enables the application of memory saving data structures and efficient algorithms. Algorithms may for example work on adjacency lists. Clearly, the space requirements for the adjacency lists are linear in the number of nodes. In the worst-case the number of nodes increases exponentially in the number of threads.
    \end{subsection}

    \begin{subsection}{Efficient Implementation of the $\merry$-Operation}\label{subsection_efficientImplOfMerryOperation}
        This subsection is devoted to an efficient implementation of the $\merry$-operation.
        First we define the Selective Kronecker product which we denote by $\oslash$. This operator synchronizes only identical labels $l \in \cL_{\SP}$ of the two input matrices.
        \begin{definition}[Selective Kronecker product]
            Given two matrices $A$ and $B$ we call $A \oslash_L B$ their Selective Kronecker product. For all $l \in L \subseteq \cL$ let $A \oslash_L B = (a_{i,j}) \oslash_L (b_{p,q}) = (c_{i.p,j.q})$, where
            $$c_{i.p,j.q} =
                \left \{ \begin{array}{ll}
                    l & \mbox{\qquad if }a_{i,j}=b_{p,q}=l \wedge l \in L,\\
                    0 & \mbox{\qquad otherwise.}
                \end{array}
                \right.
            $$
        \end{definition}

        \begin{definition}[Filtered Matrix]
            We call $M_L$ a \emph{Filtered Matrix} and define it as a matrix of order $o(M)$ containing entries $l \in L \subseteq \cL$ of $M$ and zeros elsewhere as follows.
            $$M_L=(m_{L;i,j})\mbox{, where } m_{L;i,j}=
                \left \{
                \begin{array}{ll}
                    l & \mbox{\qquad if }m_{i,j}=l \wedge l \in L,\\
                    0 & \mbox{\qquad otherwise.}
                \end{array}
                \right.
            $$
        \end{definition}

        Note that
        \begin{equation}\label{equation_equalityOfSelectiveKroneckerProduct}
            \SumFromTo{l \in \cL_{\SP}}{}{\left ( T_{l} \otimes S_{l} \right )}=T \oslash_{\cL_{\SP}} S.
        \end{equation}

        In the following we use $o(S_{\cL_{\V}})=\ProdFromToText{i=1}{r}{o(S^{(i)})}=o(S)$. Note that S contains only labels $l \in \cL_{\SP}$. Hence, when the $\merry$-operator is applied for a label $l \in \cL_{\V}$, we get $S_l=Z_{o(S)}$, i.e. a zero matrix of order $o(S)$. Thus we obtain $\SumFromToText{l \in \cL_{\V}}{}{\left ( T_{l} \oplus S_{l} \right )}=T_{\cL_{\V}} \otimes I_{o(S)}$. We will prove this below.

        Finally, we can refine Eq.~\eqref{definition_merryOperator} by stating the following lemma.

        \begin{lemma}{}\label{lemma_merryOperatorCanBeComputedEfficiently}
            The $\merry$-operation can be computed efficiently by\\
            $$P=T \merry S=T \oslash_{\cL_{\SP}} S+T_{\cL_{\V}} \otimes I_{o(S)}.$$
        \end{lemma}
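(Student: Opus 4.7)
The plan is to start from the definition of $\merry$ in Eq.~\eqref{definition_merryOperator} and massage each of the two summands into the desired form.

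For the first summand $\SumFromToText{l \in \cL_{\SP}}{}{(T_l \otimes S_l)}$, I would simply appeal to Eq.~\eqref{equation_equalityOfSelectiveKroneckerProduct}, which already identifies this sum with the Selective Kronecker product $T \oslash_{\cL_{\SP}} S$. So no real work is needed here.

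The second summand $\SumFromToText{l \in \cL_{\V}}{}{(T_l \oplus S_l)}$ is where the argument lies, though it is short. First I would observe that, since by the definition of the system model $S$ contains only labels from $\cL_{\SP}$, for every $l \in \cL_{\V}$ the filtered matrix $S_l$ is $Z_{o(S)}$. Applying Def.~\ref{definition_kroneckerSum} then gives
\begin{equation*}
T_l \oplus S_l = T_l \otimes I_{o(S)} + I_{o(T)} \otimes Z_{o(S)} = T_l \otimes I_{o(S)}.
\end{equation*}
Summing over $l \in \cL_{\V}$ and using bilinearity of $\otimes$ (Eq.~\eqref{equation_KroneckerProduct_bilinearity2}) yields
\begin{equation*}
\SumFromTo{l \in \cL_{\V}}{}{(T_l \oplus S_l)} = \left( \SumFromTo{l \in \cL_{\V}}{}{T_l} \right) \otimes I_{o(S)} = T_{\cL_{\V}} \otimes I_{o(S)},
\end{equation*}
where the last equality is the definition of the filtered matrix $T_{\cL_{\V}}$ (the entries of $T$ carrying a label in $\cL_{\V}$ coincide with the non-zero entries of $\sum_{l \in \cL_{\V}} T_l$, since the sets $\{l\}$ are disjoint).

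Adding the two reshaped summands reproduces the claimed expression for $P$. The whole argument is essentially bookkeeping; the only substantive step is the observation $S_l = Z_{o(S)}$ for $l \in \cL_{\V}$, which collapses the Kronecker sum into a single Kronecker product, and this is the place where the disjointness assumption $\cL_{\V} \cap \cL_{\SP} = \emptyset$ from Subsect.~\ref{subsection_overview} is used. There is no real obstacle, just care in tracking indices and invoking the already-established identities.
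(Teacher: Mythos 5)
Your proposal is correct and follows essentially the same route as the paper: identify the first summand with $T \oslash_{\cL_{\SP}} S$ via Eq.~\eqref{equation_equalityOfSelectiveKroneckerProduct}, observe that $S_l = Z_{o(S)}$ for $l \in \cL_{\V}$, and collapse the second summand to $T_{\cL_{\V}} \otimes I_{o(S)}$. The only (immaterial) difference is that the paper first merges the sum of Kronecker sums via the Mixed Sum Rule (Lemma~\ref{lemma_mixedSumRule}) and then expands once by Def.~\ref{definition_kroneckerSum}, whereas you expand each $T_l \oplus Z_{o(S)}$ first and then combine via bilinearity of $\otimes$ --- these are the same computation in a different order.
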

        \begin{proof}
            Using Eq.~\eqref{definition_merryOperator} $P=T \merry S$ is given by
            $\SumFromTo{l \in \cL_{\SP}}{}{\left ( T_{l} \otimes S_{l} \right )}+ %sync points
                        \SumFromTo{l \in \cL_{\V}}{}{\left ( T_{l} \oplus S_{l} \right )}.$
            According to Eq.~\eqref{equation_equalityOfSelectiveKroneckerProduct} the first term is equal to $T \oslash_{\cL_{\SP}} S$.
            By mentioning $S_{l}=Z_{o(S)}$ for $l\in \cL_{\V}$, Lemma~\ref{lemma_mixedSumRule}, and Def.~\ref{definition_kroneckerSum}, the second term fulfills.
            $$\SumFromTo{l \in \cL_{\V}}{}{\left ( T_{l} \oplus S_{l} \right )}= \SumFromTo{l \in \cL_{\V}}{}{\left ( T_{l} \oplus Z_{o(S)} \right )}=
            T_{\cL_{\V}} \oplus Z_{o(S)}=T_{\cL_{\V}} \otimes I_{o(S)}.$$
            Note that $S$ contains only $l \in \cL_{\SP}$. It is obvious that the non-zero entries of the first and the second term are $l \in \cL_{\SP}$ and $l \in \cL_{\V}$, respectively. Both terms can be computed by iterating once through the corresponding sparse adjacency matrices, namely $T$ and $S$.\qed
        \end{proof}
    \end{subsection}

    \begin{subsection}{Lazy Implementation of Kronecker Algebra}\label{subsection_lazyImpl}
        Until now we have primarily focused on a pure mathematical model for shared memory concurrent systems.
        An alert reader will have noticed that the order of the matrices in our CPG increases exponentially in the number of threads.
        On the other hand, we have seen that the $\merry$-operation results in parts of the matrix $T \merry S$ that cannot be reached from the entry node of the underlying automaton (cf. Subsect.~\ref{subsection_unreachableExample}). This comes solely from the fact that synchronization excludes some interleavings.

        Choosing a lazy implementation for the matrix operations, however, ensures that,
        when extracting the reachable parts of the underlying automaton, the overall effort is reduced to exactly these parts.
        By starting from the entry node and calculating all reachable successor nodes our lazy implementation exactly does this.
        Thus, for example, if the resulting automaton's size is linear in terms of the involved threads, only linear effort will be necessary to generate the resulting automaton.

        Our implementation distinguishes between two kind of matrices: Sparse matrices are used for representing threads and semaphores.
        Lazy matrices are employed for representing all the other matrices, e.g. those resulting from the operations of the Kronecker algebra and our $\merry$-operation.
        Besides the employed operation, a lazy matrix simply keeps track of its operands.
        Whenever an entry of a lazy matrix is retrieved, depending on the operation recorded in the lazy matrix, entries of the operands are retrieved and the recorded operation is performed on these
        entries to calculate the result.
        In the course of this computation, even the successors of nodes are evaluated lazily.
        Retrieving entries of operands is done recursively if the operands are again lazy matrices, or is done by retrieving the entries from the sparse matrices, where the actual data resides.

        In addition, our lazy implementation allows for simple parallelizing.
        For example, retrieving the entries of left and right operands can be done concurrently.
        Exploiting this, we expect further performance improvements for our implementation if run on multi-core architectures.
    \end{subsection}

    \begin{subsection}{Optimization for NSV}\label{subsection_optimizationForNSV}
        Our approach already works fine for practical settings. In this subsection we present additional optimizations which are optional.

        As already mentioned in Subsect.~\ref{subsection_KroneckerAlgebraIntroduction} the Kronecker sum interleaves all entries.
        Sometimes this is disadvantageous because irrelevant interleavings will be generated if some basic blocks do not access shared variables.
        Such basic blocks can be placed freely as long as other constraints do not prohibit it.

        For example consider the CFGs in Fig.~\ref{figure_CDSystem}.
        Assume for a moment that $a$, $b$, $c$, and $d$ do not access shared variables.
        Then the overall behavior of the $C$-$D$-system can be described correctly by choosing {\em one} of the six interleavings depicted in Fig.~\ref{figure_CkronsumD}, e.g., by $a \cdot b \cdot c \cdot d$.
        Hence the size of the CPG is reduced from nine nodes to five.

        From now on we divide set ${\cL}_{\V}$ into two disjoint sets ${\cL}_{\SV}$ and ${\cL}_{\NSV}$ depending on whether the corresponding basic blocks access shared variables or not.

        The following example shows that NSV-edges cannot always be eliminated.

        \begin{example}\label{example_oneNSVEdgeIsNotSufficient}
            In this example we use the graphs depicted in Fig.~\ref{figure_EFSystem}. The graphs $E$ and $F$ form the input graphs. It is assumed that $a$ is the only edge not accessing a shared variable. All graphs have Node $1$ as entry node. We show that it is not sufficient to chose exactly one NSV-edge.
            The matrix $E\oplus F$ is given by \\
            $$ { %\scriptsize
            \begin{pmatrix}
                % 1   2   3   4   5   6   7   8
                  0 & c & a & 0 & 0 & 0 & 0 & 0 \\ %1
                  0 & 0 & 0 & a & 0 & 0 & 0 & 0 \\ %2
                  0 & 0 & 0 & c & p & 0 & 0 & 0 \\ %3
                  0 & 0 & 0 & 0 & 0 & p & 0 & 0 \\ %4
                  0 & 0 & 0 & 0 & 0 & c & b & 0 \\ %5
                  0 & 0 & 0 & 0 & 0 & 0 & 0 & b \\ %6
                  v & 0 & 0 & 0 & 0 & 0 & 0 & c \\ %7
                  0 & v & 0 & 0 & 0 & 0 & 0 & 0 \\ %8
            \end{pmatrix}.
            }
            $$

            The graph represented by $E \oplus F$ which is structurally isomorph to $(E \oplus F) \merry S$ is depicted in Fig.\ref{figure_EF}. Both loops in the CPG must be preserved. Otherwise the program would be modeled incorrectly. By removing an edge labeled by $a$, we would change the program behavior. Thus it is not sufficient to use only one edge labeled by $a$.

            \begin{figure}[t]
                \centering
                \subfigure[E]{
                    \includegraphics[scale=0.5]{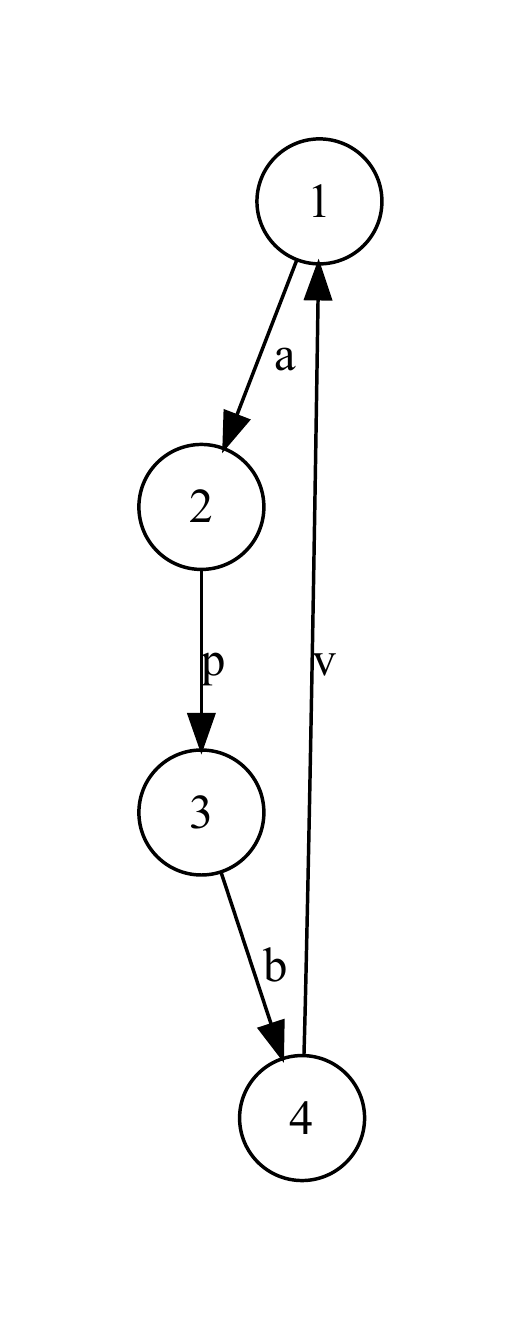}
                    \label{figure_E}
                }
                \hspace{8mm}
                \subfigure[F]{
                    \includegraphics[scale=0.5]{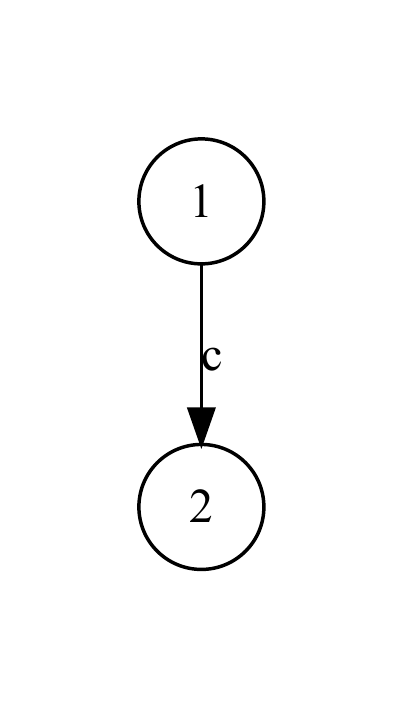}
                    \label{figure_F}
                }
                \hspace{8mm}
                \subfigure[$E \oplus F$]{
                    \includegraphics[scale=0.5]{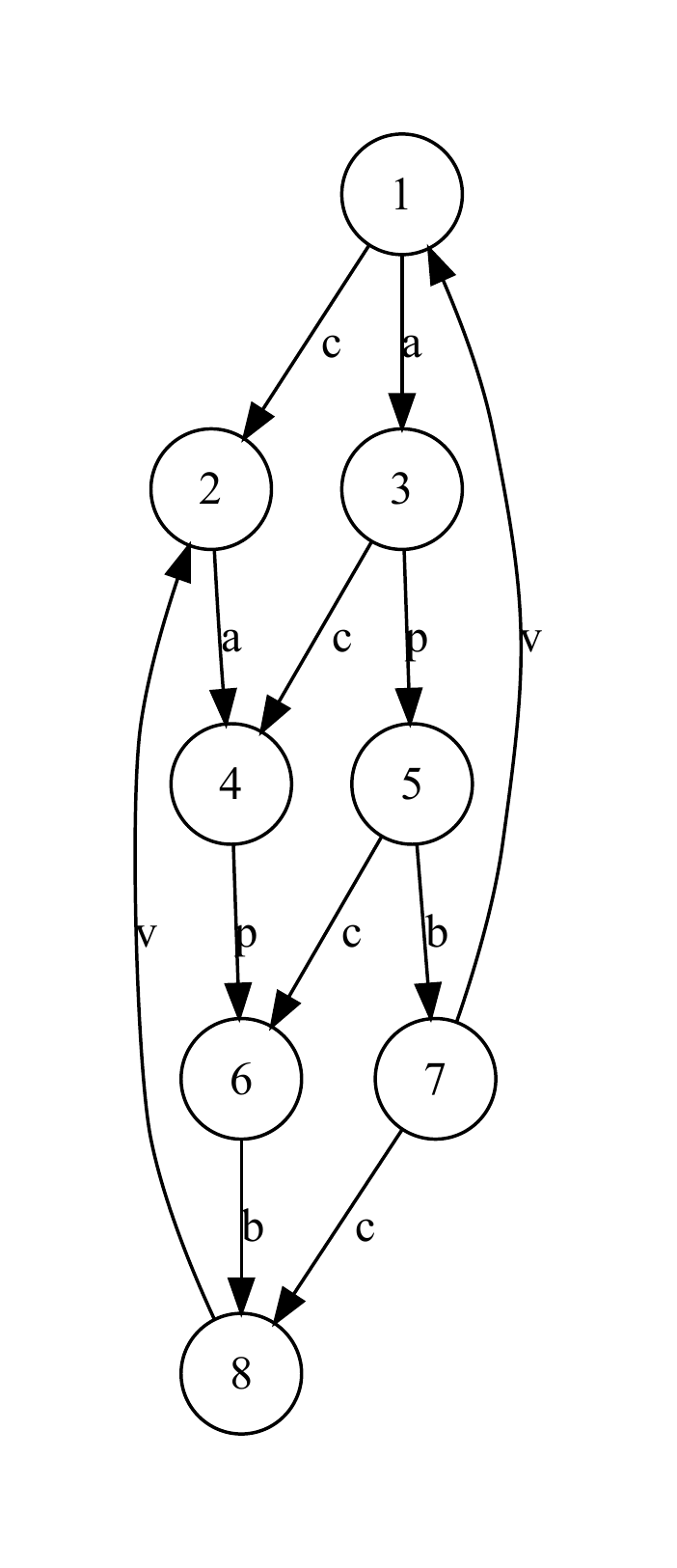}
                    \label{figure_EF}
                }
                \caption{A Counterexample}
                \label{figure_EFSystem}
            \end{figure}
        \end{example}

        In general, the only way to reduce the size of the resulting CPG is by studying the matrix $T\merry S$.
        One way would be to output the automaton from $T\merry S$ and try to find reductions afterwards.
        We decided to perform such reductions during the output process such that a unnecessarily large automaton is not generated.
        It turned out that the problems to be solved to perform these reductions are hard.
        This will be discussed in detail below.

        \newcommand{\TBDA}{\text{\sf TBD}}%
        \newcommand{\TBDNSV}{\text{\sf TBDNSV}}%
        \newcommand{\DONE}{\text{\sf DONE}}%
        \newcommand{\SUBSET}{\text{\sf SUBSET}}%
        \newcommand{\RECONSIDER}{\text{\sf RECONSIDER}}%
        \newcommand{\LNSV}{{\cal L}_{\NSV}}%
        \begin{figure}[tbp]%p
        %\centering
        \begin{minipage}{0.5\textwidth}
        \begin{algo}{OutputCPG}{}
        \TBDA \: \{startnode\}
        \TBDNSV(\ell\in \LNSV) \text{ \{array of sets; all sets initialized to $\emptyset$\}}
        \DONE \: \emptyset
         \WHILE {\TBDA \ne \emptyset\ \text{or}\ \exists \ell: \TBDNSV(\ell) \ne \emptyset}
           \IF {\TBDA \ne \emptyset}
             n \: Element(\TBDA)\quad\text{\{choose one element of set $\TBDA$\}}
             \text{print}\ n
             \FOR {\text{all edges}\ (n\to i)}
                \IF {\ell(n\to i) \in \LNSV}
                  \TBDNSV(\ell(n\to i)) \: \TBDNSV(\ell(n\to i)) \cup
                                           \qquad \qquad \qquad \qquad \qquad \ \ \{(n\to i)\}
                \ELSE
                  \TBDA \: \TBDA \cup \{i\}
                  \text{print}\ (n\to i)
                \ENDIF
                \WHILE {\exists {\cal R}: i \in {\cal R}\ \text{and}\ \exists {\cal D}: \{({\cal D}, {\cal R})\} \in \RECONSIDER}
                \quad\text{\{we have found a path back to a set of nodes}
                \quad\text{which we have used to eliminate NSV edges;}
                \quad\text{all these edges have now to be reconsidered\}}
                  \FOR {(m\to j) \in {\cal R}}
                     \TBDNSV(\ell(m\to j)) \: \TBDNSV(\ell(m\to j)) \cup
                                              \qquad \qquad \qquad \qquad \qquad \quad \{(m\to j)\}
                  \ENDFOR
                  \RECONSIDER \: \RECONSIDER \setminus \{({\cal D}, {\cal R})\}
                \ENDWHILE
             \ENDFOR
             \DONE \: \DONE \cup \{n\}; \TBDA \: \TBDA \setminus \DONE
           \ELSE\ \{ \TBDA = \emptyset\ \}
             \ell \: NonEmptyElement(\TBDNSV)\label{choose}
             \quad\text{\{choose one label with non-empty set in \TBDNSV\}}
             \SUBSET \: SmallestSubset(\TBDNSV(\ell), \DONE)
             \quad\text{\{choose smallest subset of $\TBDNSV(\ell)$ such that}
             \quad\text{subset can be reached from all nodes in set $\DONE$\}}
        %     \RECONSIDER(\DONE) \: \TBDNSV(\ell) \setminus \SUBSET
             \IF {\TBDNSV(\ell) \setminus \SUBSET \ne \emptyset}
               \RECONSIDER \: \RECONSIDER\ \cup
                              \qquad \qquad \qquad \qquad \{(\DONE,\TBDNSV(\ell) \setminus \SUBSET)\}
               \quad\text{\{remember eliminated edges;}
               \quad\text{in case we find a path back to nodes in \DONE,}
               \quad\text{we have to reconsider these edges\}}
             \ENDIF
             \FOR {(n\to i)\in \SUBSET}
               \text{print}\ (n\to i)
               \TBDA \: \TBDA \cup \{i\}
             \ENDFOR
             \TBDNSV(\ell) \: \emptyset
           \ENDIF
         \ENDWHILE
        \end{algo}
        \end{minipage}
        \caption{Output CPG}\label{outputCPG}
        \end{figure}

        Fig.~\ref{outputCPG} shows the algorithm employed for the output process in pseudo code.
        By $\ell(n\to i)$ we denote the label assigned to edge $(n\to i)$.
        In short, the algorithm records all NSV-edges and proceeds until no other edges can be processed.
        Then it chooses one label of the NSV-edges.
        From the set of all recorded edges with this label a subset is determined such that all the edges in the subset can be reached from all nodes that have been processed so long.
        This is a necessary condition, if we want to eliminate the edges outside the subset.
        Determining a minimal subset under this constraint, however, is known as the {\em Set Covering Problem\/} which is NP-hard.
        We decided to implement a greedy algorithm.
        However, it turned out that in most cases we encountered a subset of size one, which trivially is optimal.

        If no subset can be found, no edges can be eliminated.

        Concerning Ex.~\ref{example_oneNSVEdgeIsNotSufficient} we note that the reason why none of the NSV-edges can be eliminated, can be found in the presence of the loop in $E$.
        Our output algorithm traverses the CPG in such a way that we do not know in advance if a loop will be constructed later on.
        Hence our algorithm has to be aware of loops that will be constructed in the future.
        This is done by remembering eliminated edges which will be reconsidered if a suitable loop is encountered.

        In detail, if edges can be eliminated, we remember the set of eliminated edges $\cal R$ in set \RECONSIDER\ together with a copy of the current set \DONE.
        If later on we encounter a path in the CPG that reaches some nodes in this set \DONE,
        we have to reconsider our decision.
        In this case all edges in $\cal R$ are reconsidered for being present in the CPG.
        Note that several \RECONSIDER-sets can be affected if such a ``backedge'' is found.
        Note also that this reconsider mechanism handles Ex.~\ref{example_oneNSVEdgeIsNotSufficient} correctly.

        Our implementation showed that the decision which label is chosen in Line~\ref{choose} is also crucial.
        The number of edges (and nodes) being eliminated heavily depends on this choice.
        We are currently working on heuristics for this choice.

        In the following we execute the algorithm on the example of Fig.~\ref{figure_CDSystem} under the above conditions, i.e., $a$, $b$, $c$, and $d$ do not access shared variables.
        At the beginning we have $\TBDA=\{1\}$ and $\TBDNSV(a)=\TBDNSV(b)=\TBDNSV(c)=\TBDNSV(d)=\DONE=\emptyset$.
        Since \RECONSIDER-sets are not necessary in this example, we do not consider them in the following to keep things simple.

        The 1st iteration finds NSV-edges only.
        So: $\TBDNSV(a)=\{(1\to4)\}$, $\TBDNSV(c)=\{(1\to2)\}$, $\DONE=\{1\}$ and the other sets are empty.

        The 2nd iteration chooses label $a$ in Line~\ref{choose}.
        \SUBSET\ clearly is $\{(1\to4)\}$, $\TBDA=\{4\}$, and $\TBDNSV(a)=\emptyset$.

        The 3rd iteration processes Node~4 and again finds NSV-edges only.
        So: $\TBDNSV(c)=\{(1\to2), (4\to5)\}$ and $\TBDNSV(b)=\{(4\to7)\}$. $\DONE$ becomes $\{1, 4\}$.

        The 4th iteration chooses label $b$ in Line~\ref{choose}.
        Thus \SUBSET\ clearly is $\{(4\to7)\}$, $\TBDA=\{7\}$, and $\TBDNSV(b)=\emptyset$.

        The 5th iteration processes Node~7 and finds one NSV-edge labeled $c$.
        So: $\TBDNSV(c)=\{(1\to2), (4\to5), (7\to8)\}$. $\DONE$ becomes $\{1, 4, 7\}$.

        The 6th iteration handles label $c$.
        The smallest subset is found to be $\{(7\to8)\}$ since Node~7 can be reached from each of the nodes in set $\DONE=\{1, 4, 7\}$.
        Hence, edges $(1\to2)$ and $(4\to5)$ can be eliminated, i.e., they are not printed.
        So: $\TBDNSV(c)=\emptyset$ and $\TBDA=\{8\}$.

        The 7th iteration finds one NSV-edge labeled $d$.
        Thus we continue with $\TBDNSV(d)=\{(8\to9)\}$. $\DONE$ becomes $\{1, 4, 7, 8\}$.

        The 8th iteration handles label $d$.
        We obtain $\TBDNSV(d)=\emptyset$ and $\TBDA=\{9\}$.

        The 9th iteration prints Node~$9$, sets $\DONE=\{1, 4, 7, 8, 9\}$ and $\TBDA=\emptyset$.
        The algorithm terminates and the result is depicted in Fig.~\ref{figure_CDSequentialized}.

        \begin{figure}[tbh]
            \centering
            \includegraphics[scale=0.4]{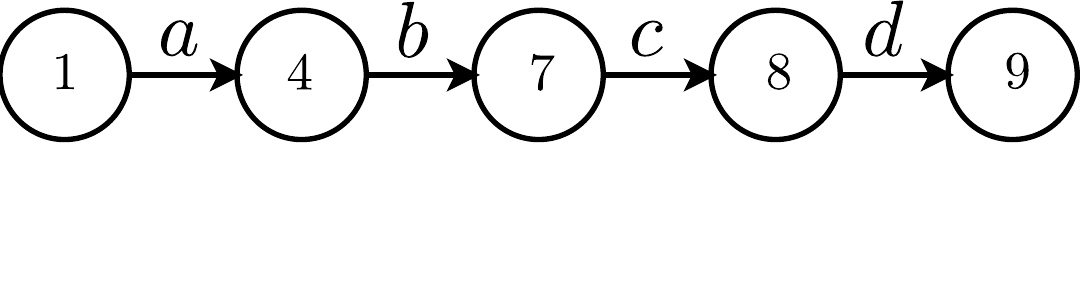}
            \caption{Sequentialized C-D-System}
            \label{figure_CDSequentialized}
        \end{figure}
    \end{subsection}
\end{section} 

    \begin{section}{Client-Server Example}\label{section_exampleCS}
    We have done analysis on client-server scenarios using our lazy implementation. For the example presented here we have used clients and a semaphore of the form shown in Fig.~\ref{figure_exampleCSClient} and~\ref{figure_exampleCSSemaphore}, respectively.

\noindent
    \begin{figure*}[t]
        \centering
        \subfigure[Client]{
            \includegraphics[scale=0.4]{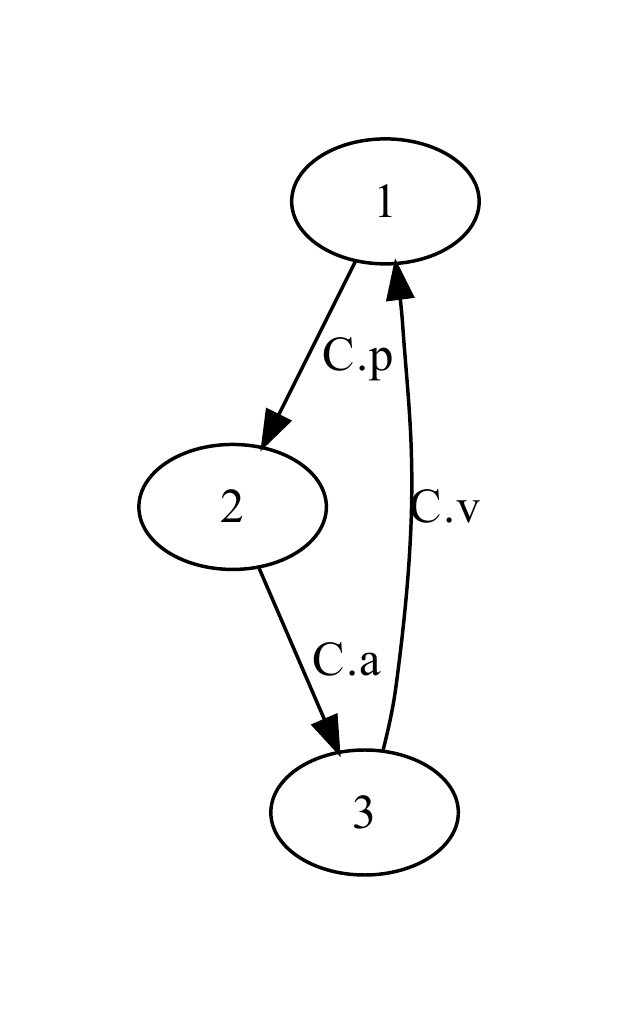}
            \label{figure_exampleCSClient}
        }
        \hspace{2mm}
        \subfigure[Semaphore]{
            \hspace{6mm}\includegraphics[scale=0.4]{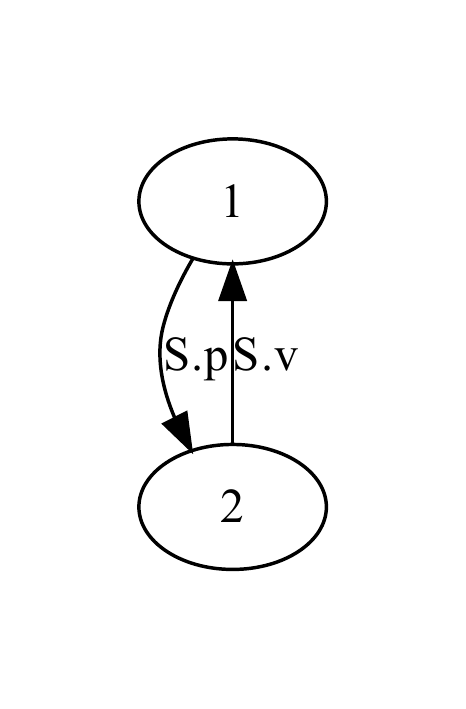}\hspace{6mm}
            \label{figure_exampleCSSemaphore}
        }
        \hspace{2mm}
        \subfigure[Statistics]{
            \begin{tabular}[tb]{| r | r | r | c | r |}
                \hline
                \rule{0in}{3ex} Clients \rule{0in}{3ex} & \rule{0in}{3ex} Nodes \rule{0in}{3ex} & \rule{0in}{3ex} Edges \rule{0in}{3ex} & \rule{0in}{3ex} Exec. Time $[s]$ \rule{0in}{3ex} & \rule{0in}{3ex} Potential Nodes \rule{0in}{3ex}\\[0.5ex] %in the last column 2x3^k where k is the number of clients
                \hline
                      1 \rule{0in}{3ex} &  \rule{0in}{1ex} 3  \rule{0in}{1ex} & \rule{0in}{1ex}  3 \rule{0in}{1ex} & \rule{0in}{1ex} 0.0013 \rule{0in}{1ex} & \rule{0in}{1ex} 6 \rule{0in}{1ex}\\
                      2 \rule{0in}{1ex} &  \rule{0in}{1ex} 5  \rule{0in}{1ex} & \rule{0in}{1ex}  6 \rule{0in}{1ex} & \rule{0in}{1ex} 0.0013 \rule{0in}{1ex} & \rule{0in}{1ex} 18 \rule{0in}{1ex}\\
                      4 \rule{0in}{1ex} &  \rule{0in}{1ex} 9  \rule{0in}{1ex} & \rule{0in}{1ex} 12 \rule{0in}{1ex} & \rule{0in}{1ex} 0.0045 \rule{0in}{1ex} & \rule{0in}{1ex} 162 \rule{0in}{1ex}\\
                      8 \rule{0in}{1ex} &  \rule{0in}{1ex} 17 \rule{0in}{1ex} & \rule{0in}{1ex} 24 \rule{0in}{1ex} & \rule{0in}{1ex} 0.0120  \rule{0in}{1ex} & \rule{0in}{1ex} 13,122 \rule{0in}{1ex}\\
                     16 \rule{0in}{1ex} &  \rule{0in}{1ex} 33 \rule{0in}{1ex} & \rule{0in}{1ex} 48 \rule{0in}{1ex} & \rule{0in}{1ex} 0.0680  \rule{0in}{1ex} & \rule{0in}{1ex} 86,093,422 \rule{0in}{1ex}\\
                     32 \rule{0in}{1ex} &  \rule{0in}{1ex} 65 \rule{0in}{1ex} & \rule{0in}{1ex} 96 \rule{0in}{1ex} & \rule{0in}{1ex} 0.4300   \rule{0in}{1ex} & \rule{0in}{1ex} 3.706 $\times 10^{15}$ \rule{0in}{1ex}\\[0.5ex]
                \hline
            \end{tabular}
            \label{table_exampleCS}
        }\\
        %\hspace{8mm}
        \subfigure[Result for 8 Clients]{
            \includegraphics[scale=0.32]{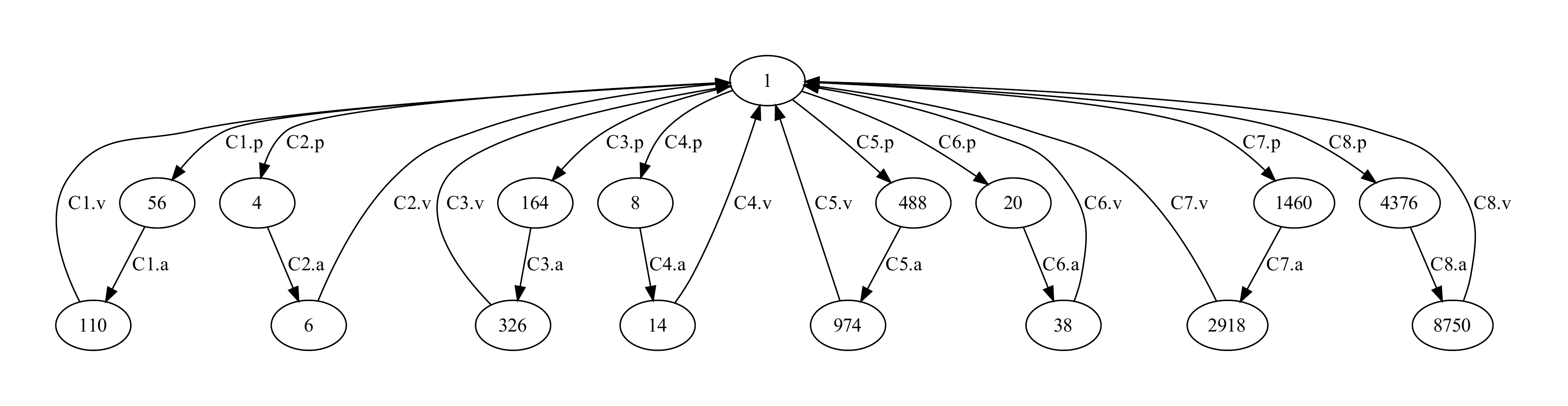}
            \label{figure_exampleCSResultEightClients}
        }
        \caption{Client-Server Example}
    \end{figure*}

    In Table~\ref{table_exampleCS} statistics for 1, 2, 4, 8, 16, and 32 clients are given. Fig.~\ref{figure_exampleCSResultEightClients} shows the resulting graph for 8 clients. The few nodes in the resulting matrix and the node IDs indicate that most nodes in the resulting matrix are superfluous. The case of 32 clients and one semaphore forms a matrix with an order of approx. $3.706 \times 10^{15}$. Our implementation generated only 65 nodes in 0.43s. In fact we observed a linear growth in the number of clients for the number of nodes and edges and for the execution time. We did our analysis on an Intel Xeon 2.8 GHz with 8GB DDR2 RAM. Note that an implementation of the matrix calculus for shared memory concurrent systems has to provide node IDs of a sufficient size. The order of $T \merry S$ can be quite big, although the resulting automaton is small.
\end{section} 

    \begin{section}{Generic Proof of Deadlock Freedom}\label{section_deadlock}
    Let $S_i$ for $i\ge1$ denote binary semaphores and let their operations be denoted by $p_i$ and $v_i$.

    \begin{definition}
        Let $M=(m_{i,j})\in\cM$ denote a square matrix.
        In addition, let ${\cal P}_M=\{(i,j,r)\mid m_{i,j}=p_r\ \text{for some $r\ge1$}\}$ and
        ${\cal V}_M=\{(j,i,r)\mid m_{i,j}=v_r\ \text{for some $r\ge1$}\}$ (note the exchanged indexes $(j,i)$).
        We call $M$ {\em p-v-symmetric} iff ${\cal P}_M={\cal V}_M$.
    \end{definition}
    By definition of Kronecker sum and Kronecker product, it is easy to prove the following lemma.
    \begin{lemma}\label{lemma_p-v-sym}
        Let $M$ and $N$ be p-v-symmetric matrices.
        Then $M\oplus N$, $M\otimes N$, and $M\merry N$ are also p-v-symmetric.\qed
    \end{lemma}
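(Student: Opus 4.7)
The plan is to reformulate p-v-symmetry as invariance under a twisted transpose. Let $\phi:\cL\to\cL$ be the involution defined by $\phi(p_r)=v_r$, $\phi(v_r)=p_r$, and $\phi(l)=l$ on every other label (including $0$). Extended entry-wise to matrices, $M$ is p-v-symmetric precisely when $\phi(M^T)=M$: indeed $(i,j,r)\in\cP_M$ iff $m_{i,j}=p_r$, while $(i,j,r)\in\cV_M$ iff $m_{j,i}=v_r$ iff $\phi(m_{j,i})=p_r$ iff the $(i,j)$-entry of $\phi(M^T)$ equals $p_r$. With this reformulation, the lemma becomes: if $\phi(M^T)=M$ and $\phi(N^T)=N$, then the same identity holds for each of $M\oplus N$, $M\otimes N$, and $M\merry N$.

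First I would handle the Kronecker product. The standard identity $(M\otimes N)^T=M^T\otimes N^T$ reduces the problem to showing that $\phi$ distributes over the Kronecker product, i.e., $\phi(M^T\otimes N^T)=\phi(M^T)\otimes\phi(N^T)$. Entry-wise this amounts to verifying $\phi(a\cdot b)=\phi(a)\cdot\phi(b)$ for every pair of labels that actually occurs as a coordinate product. The only nontrivial semiring multiplications of labels stated in the paper are $p_x\cdot p_x=p_x$ and $v_x\cdot v_x=v_x$ (and the trivial $0\cdot l=l\cdot 0=0$, $1\cdot l=l\cdot 1=l$); each of these identities is exchanged by $\phi$, so $\phi$ is multiplicative on the relevant labels. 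Chaining the identities gives $\phi((M\otimes N)^T)=\phi(M^T)\otimes\phi(N^T)=M\otimes N$, as required.

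The Kronecker sum case follows by Def.~\ref{definition_kroneckerSum}: $M\oplus N=M\otimes I_{o(N)}+I_{o(M)}\otimes N$. The identity matrix has only $0$ and $1$ as entries, hence is trivially p-v-symmetric. Since $\phi$ and transposition are both additive, the twisted transpose distributes over $+$. Combined with the Kronecker-product step, both summands are fixed by $\phi\circ(\cdot)^T$, so the sum is too.

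For the $\merry$-operation I would invoke the efficient form from Lemma~\ref{lemma_merryOperatorCanBeComputedEfficiently}, $M\merry N=M\oslash_{\cL_{\SP}}N+M_{\cL_{\V}}\otimes I_{o(N)}$, and observe that filtering by any label set $L\subseteq\cL$ closed under $\phi$ preserves p-v-symmetry, because $\phi(M_L^T)=\phi(M^T)_L=M_L$. Both $\cL_{\SP}=\cL_{\SP_p}\cup\cL_{\SP_v}$ (which $\phi$ permutes) and $\cL_{\V}$ (on which $\phi$ is the identity) are closed. The selective Kronecker product $\oslash_{\cL_{\SP}}$ is the ordinary Kronecker product composed with a $\cL_{\SP}$-filtering of the entries, so it inherits p-v-symmetry from the Kronecker-product case. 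The second summand is a Kronecker product of the p-v-symmetric matrices $M_{\cL_{\V}}$ and $I_{o(N)}$ and is therefore p-v-symmetric as well; adding the two summands closes the proof. The main obstacle, as hinted above, is making sure $\phi$ is multiplicative on every product that can arise inside a Kronecker product; once restricted to the label combinations permitted by the semiring conventions of this paper, this reduces to the two rules $p_x\cdot p_x=p_x$ and $v_x\cdot v_x=v_x$, which $\phi$ respects.
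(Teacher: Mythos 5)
Your twisted-transpose idea is sound, but the reformulation you build everything on is not an equivalence, and the direction you actually use is the false one. The paper's definition constrains \emph{only} the $p$- and $v$-labelled entries: $\cP_M=\cV_M$ says that $m_{i,j}=p_r$ iff $m_{j,i}=v_r$ and is completely silent about entries from $\cL_{\V}$. Hence $M=\left(\begin{smallmatrix}0&a\\0&0\end{smallmatrix}\right)$ with $a\in\cL_{\V}$ is p-v-symmetric (both index sets are empty) although $\phi(M^T)\neq M$. Your own derivation only shows that $M$ and $\phi(M^T)$ have the same $p_r$-positions (and, symmetrically, the same $v_r$-positions); concluding full equality of the matrices is the gap. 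It matters, because the matrices the lemma must cover are precisely of this kind: thread matrices carry non-synchronization labels in non-symmetric positions, and the paper relies on p-v-symmetry being insensitive to ``additional entries \dots which do not contain $p$s and $v$s'' in the remark after Theorem~\ref{theorem_deadlockFree}. Since every later step of your argument takes $\phi(M^T)=M$ and $\phi(N^T)=N$ as hypotheses, you have proved closure only for the strictly smaller class of matrices whose entire entry pattern is fixed by the twisted transpose, not for p-v-symmetric matrices in general.

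The repair is local. The correct restatement is that $M$ is p-v-symmetric iff $M_{\cL_{\SP}}=\phi\bigl((M_{\cL_{\SP}})^T\bigr)$, i.e.\ only the semaphore-filtered part must be fixed. You then need one extra observation: the $\cL_{\SP}$-filtered part of $M\oplus N$, $M\otimes N$ and $M\merry N$ depends only on the $\cL_{\SP}$-filtered parts of $M$ and $N$ --- for $\oplus$ because the Kronecker sum merely copies entries, and for $\otimes$ and $\merry$ because under the paper's conventions ($p_x\cdot p_x=p_x$, $v_x\cdot v_x=v_x$, no other product yielding a semaphore label) a product lands in $\cL_{\SP}$ only when both factors carry that same label. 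With that in place, your computation goes through unchanged on the filtered matrices. (For the record, the paper supplies no proof here at all --- the lemma is asserted as ``easy to prove'' --- so there is no authorial argument to compare routes with; judged on its own, your proof as written has this genuine gap.)
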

    To be more specific, let $S_i=\begin{pmatrix}0&p_i\\v_i&0\end{pmatrix}$ for $i\ge1$.
    Then $S^{(r)}=\bigoplus_{i=1}^r S_i$ is p-v-symmetric.

    Now, consider the p-v-symmetric matrix
    $$
        M_k=\begin{pmatrix}0&p_1&p_2&\dots&p_k\\v_1&0&0&\dots&0\\\vdots&\vdots&\vdots&\ddots&\vdots\\v_k&0&0&\dots&0\end{pmatrix}.
    $$
    Thus $M_k^{(n)} = \bigoplus_{i=1}^n M_k$ is also p-v-symmetric.

    Now we state a theorem on deadlock freedom.
    \begin{theorem}\label{theorem_deadlockFree}
        Let $P=M_k^{(n)} \circ S^{(k)}$ be the matrix of a $n$-threaded program with $k$ binary semaphores, where $M_k^{(n)}$ and $S^{(k)}$ are defined above.
        Then the program is deadlock free.
    \end{theorem}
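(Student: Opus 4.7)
The plan is to invoke Corollary~\ref{corollary_zeroline}: it suffices to show that, in the reachable part of $P$ starting from the entry node, no row is all zero, so that no deadlock can be structurally present.

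First, I would unfold what a node of $P$ represents. Because both factors of $P = M_k^{(n)} \merry S^{(k)}$ are $k$-fold Kronecker sums/products of small matrices, a node of $P$ is canonically identified with a tuple $(t_1,\ldots,t_n; s_1,\ldots,s_k)$ where $t_i \in \{1,\ldots,k+1\}$ is the local state of thread $i$ in $M_k$ and $s_r \in \{1,2\}$ is the state of semaphore $r$. The entry node corresponds to $t_i=1$ for all $i$ and $s_r=1$ for all $r$. Using Eq.~\eqref{definition_merryOperator}, I would then enumerate the outgoing edges at such a node: a $p_r$-edge is present from thread $i$ in state $1$ exactly when $s_r=1$ (producing $t_i=r+1$, $s_r=2$), and a $v_r$-edge is present from thread $i$ in state $r+1$ exactly when $s_r=2$ (producing $t_i=1$, $s_r=1$); no other edges exist since $M_k$ contains no labels in $\cL_{\V}$.

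Second, I would prove by induction along paths from the entry node the invariant: for every $r$, $s_r=2$ iff there is a (necessarily unique) thread $i$ with $t_i=r+1$. The invariant holds trivially at the entry node, and each of the two edge types above changes $t_i$ and $s_r$ in lockstep (because of the synchronising $\otimes$ with $S_r$), so it is preserved along any path. Now suppose toward contradiction that some reachable node has an all-zero row. If some $t_i=r+1$, the invariant forces $s_r=2$, which enables the $v_r$-edge, contradicting the assumption. Hence every $t_i=1$, and then the invariant forces every $s_r=1$, which enables every $p_r$-edge from thread~$1$, again a contradiction. Therefore every reachable node has an outgoing edge, and by Corollary~\ref{corollary_zeroline} the program is deadlock free.

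The main obstacle will be setting up the bijection between matrix indexes of $P$ and the tuple encoding, and tracing the $\merry$-definition carefully enough to verify exactly which entries in each row are non-zero; the p-v-symmetry guaranteed by Lemma~\ref{lemma_p-v-sym} keeps the bookkeeping symmetric between acquisitions and releases and should make this step essentially mechanical.
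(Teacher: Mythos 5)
Your proof is correct, but it takes a genuinely different route from the paper's. The paper's argument is purely structural and needs only two steps: by Lemma~\ref{lemma_p-v-sym} the matrix $P$ is p-v-symmetric, and since every non-zero entry of $P$ is a $p_r$ or a $v_r$, a zero \emph{row} $\ell$ forces the corresponding \emph{column} $\ell$ to be zero as well ($m_{\ell,j}=0$ for all $j$ rules out any $m_{j,\ell}=p_r$ or $v_r$ by the symmetry ${\cal P}_P={\cal V}_P$); a node with no incoming edges that is not the entry node is unreachable, so by Corollary~\ref{corollary_zeroline} no deadlock is reachable. You instead give an operational argument: you decode the Kronecker indices as thread/semaphore state tuples, establish by induction the mutual-exclusion invariant ($s_r$ is taken iff exactly one thread holds it), and conclude that every reachable node has an enabled $p_r$- or $v_r$-transition. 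Both are sound; your induction is careful (the uniqueness clause is genuinely needed to preserve the invariant across a $v_r$-step, and you include it). What the paper's symmetry argument buys is generality: it applies verbatim to \emph{any} p-v-symmetric $M_k$, to counting semaphores, and to matrices with additional non-synchronization entries, without re-deriving a state invariant for each program shape --- which is exactly the point of the surrounding section. What your argument buys is independence from the p-v-symmetry machinery (you never actually use Lemma~\ref{lemma_p-v-sym}, despite citing it at the end) plus strictly more information, namely the explicit invariant characterizing the reachable state space; but it is tied to the specific shape of $M_k$ and $S^{(k)}$ and would have to be redone for the generalizations the paper discusses after the theorem.
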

    \begin{proof}
        By definition and Lemma~\ref{lemma_p-v-sym} $P$ is p-v-symmetric.
        By Corollary~\ref{corollary_zeroline} a deadlock manifests itself by a zero line, say $\ell$, in matrix $P$.
        Since $P$ is p-v-symmetric, column $\ell$ does only contain zeroes.
        Hence line $\ell$ is unreachable in the underlying automaton.

        This clearly holds for all zero lines in $P$ and thus the program is deadlock free.\qed
    \end{proof}

    For counting semaphores we obtain matrices of the following type
    $$
        \begin{pmatrix}
            0 & p & 0 & \cdots & 0 & 0\\
            v & 0 & p & \cdots & 0 & 0\\
            0 & v & 0 & \cdots & 0 & 0\\
            \vdots & \vdots & \vdots & \ddots & \vdots & \vdots\\
            0 & 0 & 0 & \cdots & 0 & p\\
            0 & 0 & 0 & \cdots & v & 0
        \end{pmatrix}
    $$
    which clearly is p-v-symmetric.
    Thus a similar theorem holds if counting sema\-phores are used instead of binary ones.

    A short reflection shows that if we allow $M_k$ to contain additional entries and non-zero lines and columns
    which do not contain $p$s and $v$s, the system is still deadlock free.
    So, we have derived a very powerful criterion to ensure deadlock freedom for a large class of programs, namely p-v-symmetric programs.

    Concerning the example in Section~\ref{section_exampleCS} we note that if edges labeled $a$ are removed from the clients, we obtain p-v-symmetric matrices.
    Thus this simple client-server system is deadlock free for an arbitrary number of clients.
    If we reinsert edges labeled $a$ into the clients, no zero lines and columns appear (as
    noted above), so that the system is still deadlock free for an arbitrary number of clients.

    Theorem~\ref{theorem_deadlockFree} may be compared to the results of~\cite{EN:95,CTTV:04}, where for homogenous token passing rings it is proved that checking correctness properties can be reduced to rings of small sizes.
\end{section} 

    \begin{section}{A Data Race Example}\label{section_example}
    We give an example, where a programmer is supposed to have used synchronization primitives in a wrong way. The program consisting of two threads, namely $T_1$ and $T_2$, and a semaphore $s$ is given in Fig.~\ref{figure_exampleProgram}. We assume that $sv=0$ at program start. It is supposed that the program delivers $sv=2$ when it terminates. Both threads in the program access the shared variable $sv$. The variables $r$ and $t$ are local to the corresponding threads. The programmer inadvertently has placed line 1 in front of line 2 in $T_2$.

    \begin{figure}[t]
        \centering
        \begin{minipage}{3cm}
            \begin{algo}{$T_1$}{}
                s.p             \text{\qquad \quad \quad \hspace{10pt}  \{edge T1.p\}}
                    r \: sv     \text{\quad \quad \hspace{11pt} \{edge a\}}
                    r \: r + 1  \text{\quad \hspace{9pt} \{edge b\}}
                    sv \: r     \text{\quad \quad \hspace{11pt}  \{edge b\}}
                s.v             \text{\qquad \quad \quad \hspace{10pt} \{edge T1.v\}}
            \end{algo}
            \begin{algo}{$T_2$}{}
                t \: sv         \text{\quad \quad \hspace{12pt} \{edge c\}}
                s.p             \text{\qquad \quad \quad \hspace{10pt} \{edge T2.p\}}
                    t \: t + 1  \text{\quad \hspace{11pt} \{edge d\}}
                    sv \: t     \text{\quad \quad \hspace{12pt} \{edge d\}}
                s.v             \text{\qquad \quad \quad \hspace{10pt} \{edge T2.v\}}
            \end{algo}
        \end{minipage}
        \caption{Example Program}\label{figure_exampleProgram}
    \end{figure}

    \begin{figure}[t]
        \centering
        \subfigure[$T_1$]{
            \includegraphics[scale=0.4]{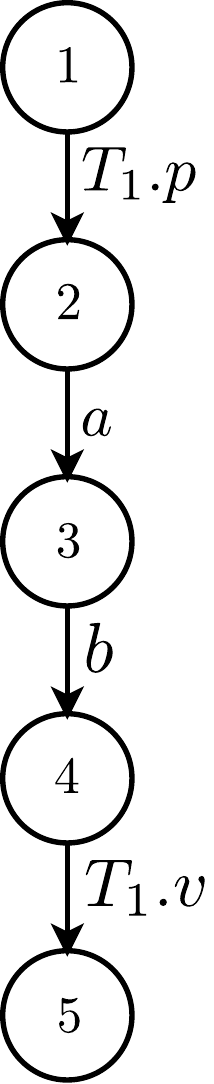}
            \label{figure_exampleRCFGT1}
        }
        \hspace{8mm}
        \subfigure[$T_2$]{
            \includegraphics[scale=0.4]{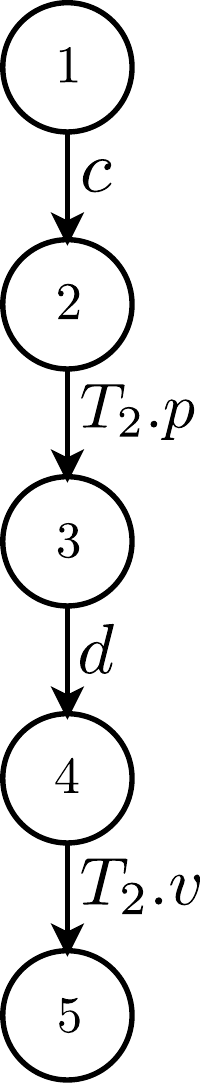}
            \label{figure_exampleRCFGT2}
        }
        \caption{RCFGs after Edge Splitting}
        \label{figure_exampleRCFGs}
    \end{figure}

    After edge splitting we get the RCFGs depicted in Fig.~\ref{figure_exampleRCFGs}. As usual the semaphore looks like Fig.~\ref{figure_semaphoreBinary}.
    The corresponding matrices are
    $$%\begin{eqnarray*}
        T_1=
        \begin{pmatrix}
            0 & T_1.p & 0 & 0 & 0     \\
            0 & 0     & a & 0 & 0     \\
            0 & 0     & 0 & b & 0     \\
            0 & 0     & 0 & 0 & T_1.v \\
            0 & 0     & 0 & 0 & 0     \\
        \end{pmatrix}
        \mbox{ and }
        T_2=
        \begin{pmatrix}
            0 & c & 0     & 0 & 0     \\
            0 & 0 & T_2.p & 0 & 0     \\
            0 & 0 & 0     & d & 0     \\
            0 & 0 & 0     & 0 & T_2.v \\
            0 & 0 & 0     & 0 & 0     \\
        \end{pmatrix}.
    $$%\end{eqnarray*}

    Although the following matrices are not computed by our lazy implementation, we give them here to allow the reader to see a complete example.
    %Let $I^x_n$ denote a matrix similar to the identity matrix of order $n$ with $x$ (instead of $1$) on the main diagonal and zeros elsewhere.
    To enable a concise presentation we define the following submatrices of order five:

    \begin{eqnarray*}
        H&=&
        \begin{pmatrix}
            0 & c & 0     & 0 & 0\\
            0 & 0 & T_2.p & 0 & 0\\
            0 & 0 & 0     & d & 0\\
            0 & 0 & 0     & 0 & T_2.v\\
            0 & 0 & 0     & 0 & 0\\
        \end{pmatrix},
        I=
        \begin{pmatrix}
            T_1.p & 0     & 0     & 0     & 0\\
            0     & T_1.p & 0     & 0     & 0\\
            0     & 0     & T_1.p & 0     & 0\\
            0     & 0     & 0     & T_1.p & 0\\
            0     & 0     & 0     & 0     & T_1.p\\
        \end{pmatrix},\\
        J&=&a \cdot I_{5},
        %\begin{pmatrix}
        %    a & 0 & 0 & 0 & 0\\
        %    0 & a & 0 & 0 & 0\\
        %    0 & 0 & a & 0 & 0\\
        %    0 & 0 & 0 & a & 0\\
        %    0 & 0 & 0 & 0 & a\\
        %\end{pmatrix},
        K=b \cdot I_{5}
        %\begin{pmatrix}
        %    b & 0 & 0 & 0 & 0\\
        %    0 & b & 0 & 0 & 0\\
        %    0 & 0 & b & 0 & 0\\
        %    0 & 0 & 0 & b & 0\\
        %    0 & 0 & 0 & 0 & b\\
        %\end{pmatrix}
        , \mbox{ and }
        L=
        \begin{pmatrix}
            T_1.v & 0     & 0     & 0     & 0\\
            0     & T_1.v & 0     & 0     & 0\\
            0     & 0     & T_1.v & 0     & 0\\
            0     & 0     & 0     & T_1.v & 0\\
            0     & 0     & 0     & 0     & T_1.v\\
        \end{pmatrix}.
    \end{eqnarray*}

    Now, we get $T=T_1 \oplus T_2$, a matrix of order 25, consisting of the submatrices defined above and zero matrices of order five (instead of $Z_5$ simply denoted by $0$).
    $$T=\begin{pmatrix}
            H & I & 0 & 0 & 0\\
            0 & H & J & 0 & 0\\
            0 & 0 & H & K & 0\\
            0 & 0 & 0 & H & L\\
            0 & 0 & 0 & 0 & H\\
        \end{pmatrix}.
    $$
    To shorten the presentation of $P=T \merry S$ we define the following submatrices of order ten:
    { %\scriptsize
    \begin{eqnarray*}
        U&=&
        \begin{pmatrix}
          % 1   2   3   4   5   6       7   8   9       10
            0 & 0 & c & 0 & 0 & 0     & 0 & 0 & 0     & 0\\ %1
            0 & 0 & 0 & c & 0 & 0     & 0 & 0 & 0     & 0\\ %2
            0 & 0 & 0 & 0 & 0 & T_2.p & 0 & 0 & 0     & 0\\ %3
            0 & 0 & 0 & 0 & 0 & 0     & 0 & 0 & 0     & 0\\ %4
            0 & 0 & 0 & 0 & 0 & 0     & d & 0 & 0     & 0\\ %5
            0 & 0 & 0 & 0 & 0 & 0     & 0 & d & 0     & 0\\ %6
            0 & 0 & 0 & 0 & 0 & 0     & 0 & 0 & 0     & 0\\ %7
            0 & 0 & 0 & 0 & 0 & 0     & 0 & 0 & T_2.v & 0\\ %8
            0 & 0 & 0 & 0 & 0 & 0     & 0 & 0 & 0     & 0\\ %9
            0 & 0 & 0 & 0 & 0 & 0     & 0 & 0 & 0     & 0\\ %10
        \end{pmatrix},
        V=
        \begin{pmatrix}
          % 1   2       3   4       5   6   7   8   9   10
            0 & T_1.p & 0 & 0     & 0 & 0     & 0 & 0     & 0 & 0    \\ %1
            0 & 0     & 0 & 0     & 0 & 0     & 0 & 0     & 0 & 0    \\ %2
            0 & 0     & 0 & T_1.p & 0 & 0     & 0 & 0     & 0 & 0    \\ %3
            0 & 0     & 0 & 0     & 0 & 0     & 0 & 0     & 0 & 0    \\ %4
            0 & 0     & 0 & 0     & 0 & T_1.p & 0 & 0     & 0 & 0    \\ %5
            0 & 0     & 0 & 0     & 0 & 0     & 0 & 0     & 0 & 0    \\ %6
            0 & 0     & 0 & 0     & 0 & 0     & 0 & T_1.p & 0 & 0    \\ %7
            0 & 0     & 0 & 0     & 0 & 0     & 0 & 0     & 0 & 0    \\ %8
            0 & 0     & 0 & 0     & 0 & 0     & 0 & 0     & 0 & T_1.p\\ %9
            0 & 0     & 0 & 0     & 0 & 0     & 0 & 0     & 0 & 0    \\ %10
        \end{pmatrix},%\\
        \end{eqnarray*}
        \begin{eqnarray*}
        W&=&a \cdot I_{10},\mbox{ }
        %\begin{pmatrix}
        %  % 1   2   3   4   5   6   7   8   9   10
        %    a & 0 & 0 & 0 & 0 & 0 & 0 & 0 & 0 & 0\\ %1
        %    0 & a & 0 & 0 & 0 & 0 & 0 & 0 & 0 & 0\\ %2
        %    0 & 0 & a & 0 & 0 & 0 & 0 & 0 & 0 & 0\\ %3
        %    0 & 0 & 0 & a & 0 & 0 & 0 & 0 & 0 & 0\\ %4
        %    0 & 0 & 0 & 0 & a & 0 & 0 & 0 & 0 & 0\\ %5
        %    0 & 0 & 0 & 0 & 0 & a & 0 & 0 & 0 & 0\\ %6
        %    0 & 0 & 0 & 0 & 0 & 0 & a & 0 & 0 & 0\\ %7
        %    0 & 0 & 0 & 0 & 0 & 0 & 0 & a & 0 & 0\\ %8
        %    0 & 0 & 0 & 0 & 0 & 0 & 0 & 0 & a & 0\\ %9
        %    0 & 0 & 0 & 0 & 0 & 0 & 0 & 0 & 0 & a\\ %10
        %\end{pmatrix},\\
        X=b \cdot I_{10}
        %\begin{pmatrix}
        %  % 1   2   3   4   5   6   7   8   9   10
        %    b & 0 & 0 & 0 & 0 & 0 & 0 & 0 & 0 & 0\\ %1
        %    0 & b & 0 & 0 & 0 & 0 & 0 & 0 & 0 & 0\\ %2
        %    0 & 0 & b & 0 & 0 & 0 & 0 & 0 & 0 & 0\\ %3
        %    0 & 0 & 0 & b & 0 & 0 & 0 & 0 & 0 & 0\\ %4
        %    0 & 0 & 0 & 0 & b & 0 & 0 & 0 & 0 & 0\\ %5
        %    0 & 0 & 0 & 0 & 0 & b & 0 & 0 & 0 & 0\\ %6
        %    0 & 0 & 0 & 0 & 0 & 0 & b & 0 & 0 & 0\\ %7
        %    0 & 0 & 0 & 0 & 0 & 0 & 0 & b & 0 & 0\\ %8
        %    0 & 0 & 0 & 0 & 0 & 0 & 0 & 0 & b & 0\\ %9
        %    0 & 0 & 0 & 0 & 0 & 0 & 0 & 0 & 0 & b\\ %10
        %\end{pmatrix}
        , \mbox{ and }
        Y=
        \begin{pmatrix}
          % 1       2   3       4   5       6   7       8   9       10
            0     & 0 & 0     & 0 & 0     & 0 & 0     & 0 & 0     & 0\\ %1
            T_1.v & 0 & 0     & 0 & 0     & 0 & 0     & 0 & 0     & 0\\ %2
            0     & 0 & 0     & 0 & 0     & 0 & 0     & 0 & 0     & 0\\ %3
            0     & 0 & T_1.v & 0 & 0     & 0 & 0     & 0 & 0     & 0\\ %4
            0     & 0 & 0     & 0 & 0     & 0 & 0     & 0 & 0     & 0\\ %5
            0     & 0 & 0     & 0 & T_1.v & 0 & 0     & 0 & 0     & 0\\ %6
            0     & 0 & 0     & 0 & 0     & 0 & 0     & 0 & 0     & 0\\ %7
            0     & 0 & 0     & 0 & 0     & 0 & T_1.v & 0 & 0     & 0\\ %8
            0     & 0 & 0     & 0 & 0     & 0 & 0     & 0 & 0     & 0\\ %9
            0     & 0 & 0     & 0 & 0     & 0 & 0     & 0 & T_1.v & 0\\ %10
        \end{pmatrix}.
    \end{eqnarray*}
    }

    With the help of zero matrices of order ten we can state the program's matrix

    \begin{figure}[tbh]
        \centering
        \hrule
        \vspace{4mm}
        \includegraphics[scale=0.6]{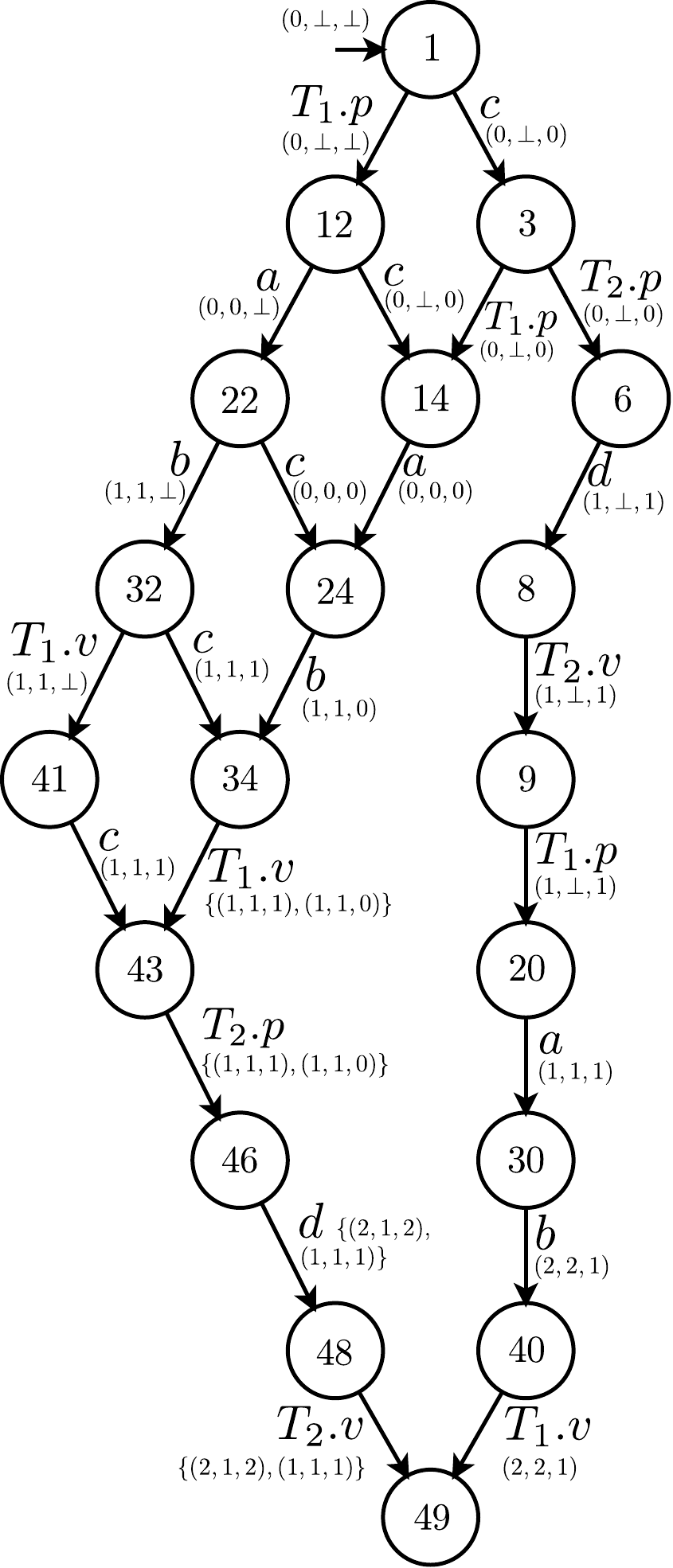}
        \caption{Resulting CPG}
        \label{figure_exampleCPG}
    \end{figure}

    \begin{eqnarray*}
        P=T\merry S=T \merry
        \begin{pmatrix}
            0 & p\\
            v & 0\\
        \end{pmatrix}=
        \begin{pmatrix}
            U & V & 0 & 0 & 0\\
            0 & U & W & 0 & 0\\
            0 & 0 & U & X & 0\\
            0 & 0 & 0 & U & Y\\
            0 & 0 & 0 & 0 & U\\
        \end{pmatrix}.
    \end{eqnarray*}

    Matrix $P$ has order 50. The corresponding CPG is shown in Fig.~\ref{figure_exampleCPG}. The lazy implementation computes only these 19 nodes. Due to synchronization the other parts are not reachable. In addition to the usual labels we have add a set of tuples to each edge in the CPG of Fig.~\ref{figure_exampleCPG}. Tuple $(x,y,z)$ denotes values of variables, such that $sv=x$, $r=y$ and $t=z$. We use $\bot$ to refer to an undefined value. A tuple shows the values after the basic block on the corresponding edge has been evaluated. The entry node of the CPG is Node $1$. At program start we have the variable assignment $(0,\bot,\bot)$. At Node $49$ we result in the set of tuples $\{(1,1,1),(2,1,2),(2,2,1)\}$. Due to the interleavings different tuples may occur at join nodes. This we reflect by a set of tuples. As stated above the program is supposed to deliver $sv=2$. Thus the tuple $(1,1,1)$ shows that the program is erroneous. The error is caused by a data race between the edges $c$ of thread $T_2$ and the edges $a$ and $b$ of thread $T_1$.
\end{section} 

    \begin{section}{Empirical Data}\label{section_empiricalData}
    In Sec.~\ref{section_exampleCS} we already gave some empirical data concerning client-server examples. In this section we give empirical data for ten additional examples.

    Let $o(P)$ and $o(C)$ refer to the order of the adjacency matrix $P$, which is not computed by our lazy implementation, and the order of the adjacency matrix $C$ of the resulting CPG, respectively. In addition $k$ and $r$ refer to the number of threads and the number of semaphores, respectively.

    \begin{table}[th]
        \centering
        \begin{tabular}[t]{| c | r | r | r@{}c@{}l | r | r@{}c@{}l|}
            \hline
            \rule{0in}{3ex} k \rule{0in}{2ex}&\rule{0in}{2ex} r \rule{0in}{2ex}&\rule{0in}{2ex}$o(P)$\rule{0in}{2ex}&\multicolumn{3}{c|}{\rule{0in}{2ex}$\sqrt{(o(P))}$\rule{0in}{2ex}}& \rule{0in}{2ex}$o(C)$\rule{0in}{2ex}&\multicolumn{3}{c|}{\rule{0in}{2ex}Runtime [s]\rule{0in}{3ex}}\\[0.5ex]
            \hline
            2 & 4 & \rule{0in}{3ex}256\rule{0in}{2ex} & \rule{0in}{2ex}16&,&00\rule{0in}{2ex}    & \rule{0in}{2ex}12\rule{0in}{2ex}   & \rule{0in}{2ex}0&,&03\\%bsp81od
            3 & 5 & \rule{0in}{2ex}4800\rule{0in}{2ex} & \rule{0in}{2ex}69&,&28\rule{0in}{2ex}    & \rule{0in}{2ex}30\rule{0in}{2ex}   & \rule{0in}{2ex}0&,&097542\rule{0in}{2ex}\\%bsp55od
            4 & 6 & \rule{0in}{2ex}124416\rule{0in}{2ex} & \rule{0in}{2ex}352&,&73\rule{0in}{2ex}   & \rule{0in}{2ex}98\rule{0in}{2ex}   & \rule{0in}{2ex}0&,&48655\\%bsp89od
            3 & 6 & \rule{0in}{2ex}75264\rule{0in}{2ex} & \rule{0in}{2ex}274&,&34\rule{0in}{2ex}   & \rule{0in}{2ex}221\rule{0in}{2ex}  & \rule{0in}{2ex}1&,&057529\\%Andi
            4 & 7 & \rule{0in}{2ex}614400\rule{0in}{2ex} & \rule{0in}{2ex}783&,&84\rule{0in}{2ex}   & \rule{0in}{2ex}338\rule{0in}{2ex}  & \rule{0in}{2ex}2&,&537082\\%bsp86od
            4 & 8 & \rule{0in}{2ex}1536000\rule{0in}{2ex} & \rule{0in}{2ex}1239&,&35\rule{0in}{2ex}  & \rule{0in}{2ex}277\rule{0in}{2ex}  & \rule{0in}{2ex}2&,&566587\rule{0in}{2ex}\\%bsp61od
            4 & 8 & \rule{0in}{2ex}737280\rule{0in}{2ex} & \rule{0in}{2ex}858&,&65\rule{0in}{2ex}   & \rule{0in}{2ex}380\rule{0in}{2ex}  & \rule{0in}{2ex}3&,&724364\\%bsp168od
            4 & 13 & \rule{0in}{2ex}298721280\rule{0in}{2ex} & \rule{0in}{2ex}17283&,&56\rule{0in}{2ex} & \rule{0in}{2ex}2583\rule{0in}{2ex} & \rule{0in}{2ex}96&,&024073\rule{0in}{2ex}\\%Andi2überhol
            4 & 11 & \rule{0in}{2ex}55050240\rule{0in}{2ex} & \rule{0in}{2ex}7419&,&58\rule{0in}{2ex}  & \rule{0in}{2ex}3908\rule{0in}{2ex} & \rule{0in}{2ex}146&,&81\\%Andi2 (gibt es nicht mehr im Verzeichnis, ist jetzt Andi2überhol)
            5 & 6 & \rule{0in}{2ex}14929920\rule{0in}{2ex} & \rule{0in}{2ex}3863&,&93\rule{0in}{2ex}  & \rule{0in}{2ex}7666\rule{0in}{2ex} & \rule{0in}{2ex}309&,&371395\rule{0in}{2ex}\\[1ex]%bsp76od
            \hline
        \end{tabular}
        \caption{Empirical Data}
        \label{table_empiricalData}
    \end{table}

    In the following we use the data depicted in Table~\ref{table_empiricalData}.\footnote{We did our analysis on an Intel Pentium D 3.0 GHz machine with 1GB DDR RAM running CentOS 5.6.} The numbers in the third column are rounded to two decimal places. As a first observation we note that except for one example all values of $o(C)$ are smaller as the corresponding values of $\sqrt{(o(P))}$.
    In addition, the runtime of our implementation shows a strong correlation to the order $o(C)$ of the adjacency matrix $C$ of the generated CPG with a Pearson product-moment correlation coefficient of
    0,9990277130. %using gnumeric: =correl(D3:D12;F3:F12) %this is exactly the value when applying PEARSON in Open Office
    %0,998056371 %using excel: =BESTIMMTHEITSMASS(D3:D12;F3:F12)
    %0,999027712955541  %using open office: PEARSON(D3:D12;F3:F12) this is the Pearson product-moment correlation coefficient
    In contrast the values of the theoretical order $o(P)$ of the resulting adjacency matrix $P$ correlates to the runtime only with a correlation coefficient of 0.2370050995.\footnote{Both correlation coefficients are rounded to ten decimal places.}

    This observations show that the runtime complexity does not depend on the order $o(P)$ which
    %in the worst-case
    grows exponentially in the number of threads. We conclude this section by stating that the collected data give strong indication that the runtime complexity of our approach is linear in the number of nodes present in the resulting CPG.
\end{section} 

    \begin{section}{Related Work}\label{section_related}
    %With this paper we have established an approach for constructing CPGs for shared memory concurrent programs.
    Probably the closest work to ours was done by Buchholz and Kemper~\cite{BK:02}. It differs from our work as stated in the following. We establish a framework for analyzing multithreaded shared memory concurrent systems which forms a basis for studying various properties of the program. Different techniques including dataflow analysis (e.g.~\cite{RP:86,RP:88,SGL:98,KU:76})
    %, symbolic analysis e.g.~\cite{BSB:08},
    and model checking (e.g.~\cite{CGP:99,GG:08} to name only a few) can be applied to the generated CPGs. In this paper we use our approach in order to prove deadlock freedom. Buchholz and Kemper worked on generating reachability sets in composed automata. Our approach uses CFGs and semaphores to model shared memory concurrent programs. Buchholz and Kemper use it for describing networks of synchronized automata. Both approaches employ Kronecker algebra. An additional difference is that we propose optimizations concerning the handling of edges not accessing shared variables and lazy evaluation of the matrix entries.
    %We do optimizations concerning edges which do not interfere.
%
    %In~\cite{GM:11} Garg and Madhusudan present sequentialization of concurrent programs. It is shown that if for a concurrent program a compositional proof exists, then it can be translated to a sequential program. Nevertheless, a drawback of the approach is that it generates recursive programs, even when the concurrent program contains no recursion. Our CPG can be seen as a sequentialization (cf. Fig.~\ref{figure_CDSequentialized} and Subsect.~\ref{subsection_optimizationForNSV}) of a concurrent program in the sense of~\cite{GM:11}.

    In~\cite{GG:08} Ganai and Gupta studied modeling concurrent systems for bounded model checking (BMC). Somehow similar to our approach the concurrent system is modeled lazily. In contrast our approach does not need temporal logic specifications like LTL for proving deadlock freedom for p-v-symmetric programs but on the other hand our approach may suffer from false positives. Like all BMC approaches~\cite{GG:08} has the drawback that it can only show correctness within a bounded number of $k$ steps.

    Kahlon et al. propose a framework for static analysis of concurrent programs in~\cite{KSG:09}. Partial order reduction and synchronization constraints are used to reduce thread interleavings. In order to gain further reductions abstract interpretation is applied.

    In~\cite{RGG+:95} a model checking tool is presented that builds up a system gradually, at each stage compressing the subsystems to find an equivalent CSP process with many less states. With this approach systems of exponential size ($\geq 10^{20}$) can be model checked successfully. This can be compared to our client-server example in Sect.~\ref{section_exampleCS}, where matrices of exponential size can be handled in linear time.

    Although not closely related we recognize the work done in the field of \emph{stochastic automata networks} (SAN) which is based on the work of Plateau~\cite{Pla:85} and in the field of \emph{generalized stochastic petri nets} (GSPN) (e.g.~\cite{CM:99}) as related work. Compared to ours these fields are completely different. Nevertheless, basic operators are shared and some properties influenced this paper.
    %We note that the approach of~\cite{Pla:85} may profit from a lazy implementation.
\end{section} 
    \begin{section}{Conclusion}\label{section_conclusion}
    We established a framework for analyzing multithreaded shared memory concurrent systems which forms a basis for studying various properties of programs. Different techniques including dataflow analysis
    %e.g.~\cite{RP:86,RP:88,SGL:98,KU:76}
    %symbolic analysis (e.g.~\cite{BSB:08}),
    and model checking
    %e.g.~\cite{CGP:99,GG:08} to name only a few)
    can be applied to CPGs.
    In addition, the structure of the matrices can be used to prove properties of the underlying program for an arbitrary number of threads. In this paper we used CPGs in order to prove deadlock freedom for the large class of p-v-symmetric programs.

    Furthermore, we proved that in general CPGs can be represented by sparse matrices. Hence the number of entries in the matrices is linear in their number of lines. Thus efficient algorithms can be applied to CPGs.

    We proposed two major optimizations. First, if the program contains a lot of synchronization, only a very small part of the CPG is reachable and, due to a lazy implementation of the matrix operations, only this part is computed. Second, if the program has only little synchronization, many edges not accessing shared variables will be present, which are reduced during the output process of the CPG. Both optimizations speed up processing significantly and show that this approach is very promising.

    We gave examples for both, the lazy implementation and how we are able to prove deadlock freedom.

    The first results of our approach (such as Theorem~\ref{theorem_deadlockFree}) and the performance of our prototype implementation are very promising. Further research is needed to generalize Theorem~\ref{theorem_deadlockFree} in order to handle systems similar to the Dining Philosophers problem. In addition, details on how to perform (complete and sound) dataflow analysis on CPGs have to be studied.
\end{section} 
\bibliographystyle{abbrv} %default
%\bibliographystyle{alphaabb}
%\bibliographystyle{plainabb}
%\bibliography{/ABT/USER/staff/blieb/tex/bib/oo,/ABT/USER/staff/blieb/tex/bib/wcp,/ABT/USER/staff/blieb/tex/bib/thi,/ABT/USER/staff/blieb/tex/bib/corr,/ABT/USER/staff/blieb/tex/bib/ada}
%\bibliography{../bib/oo,../bib/wcp,../bib/thi,../bib/corr,../bib/ada,../bib/dataflow,../bib/icaroos,../bib/sym}
%\subsubsection{References}
%\renewcommand\refname{}
%\vspace{-12mm}
\bibliography{ada} %default

\begin{thebibliography}{10}

\bibitem{ASU:86}
A.~Aho, R.~Sethi, and J.~Ullman.
\newblock {\em {Compilers: Principles, Techniques, and Tools}}.
\newblock Addison Wesley, Reading, Massachusetts, 1986.

\bibitem{Bel:97}
R.~Bellman.
\newblock {\em {Introduction to Matrix Analysis}}.
\newblock Classics in Applied Mathematics. Society for Industrial and Applied
  Mathematics, 2nd edition, 1997.

\bibitem{BK:02}
P.~Buchholz and P.~Kemper.
\newblock {Efficient Computation and Representation of Large Reachability Sets
  for Composed Automata}.
\newblock {\em Discrete Event Dynamic Systems}, 12(3):265--286, 2002.

\bibitem{CM:99}
G.~Ciardo and A.~S. Miner.
\newblock {A Data Structure for the Efficient Kronecker Solution of GSPNs}.
\newblock In {\em Proc. 8th Int. Workshop on Petri Nets and Performance Models
  (PNPM99)}, pages 22--31. IEEE Comp. Soc. Press, 1999.

\bibitem{CTTV:04}
E.~Clarke, M.~Talupur, T.~Touili, and H.~Veith.
\newblock {Verification by Network Decomposition}.
\newblock In {\em Proc. 15th CONCUR, LNCS 3170}, pages 276--291. Springer,
  2004.

\bibitem{CGP:99}
E.~M. Clarke, O.~Grumberg, and D.~A. Peled.
\newblock {\em {Model Checking}}.
\newblock MIT Press, 1999.

\bibitem{Dav:81}
M.~Davio.
\newblock {Kronecker Products and Shuffle Algebra}.
\newblock {\em IEEE Trans. Computers}, 30(2):116--125, 1981.

\bibitem{EN:95}
E.~A. Emerson and K.~S. Namjoshi.
\newblock {Reasoning about Rings}.
\newblock In {\em Proc. of the 22nd ACM SIGPLAN-SIGACT Symp. on Principles of
  Prog. Lang.}, POPL '95, pages 85--94, New York, NY, USA, 1995. ACM.

\bibitem{GG:08}
M.~K. Ganai and A.~Gupta.
\newblock {Efficient Modeling of Concurrent Systems in BMC}.
\newblock In {\em Proceedings of the 15th international workshop on Model
  Checking Software}, SPIN '08, pages 114--133, Berlin, Heidelberg, 2008.
  Springer-Verlag.

\bibitem{Gra:81}
A.~Graham.
\newblock {\em {Kronecker Products and Matrix Calculus with Applications}}.
\newblock Ellis Horwood Ltd., New York, 1981.

\bibitem{Hur:1894}
A.~Hurwitz.
\newblock {Zur Invariantentheorie}.
\newblock {\em Math. Annalen}, 45:381--404, 1894.

\bibitem{IKR:08}
W.~Imrich, S.~Klavzar, and D.~F. Rall.
\newblock {\em {Topics in Graph Theory: Graphs and Their Cartesian Product}}.
\newblock A K Peters Ltd, 2008.

\bibitem{KSG:09}
V.~Kahlon, S.~Sankaranarayanan, and A.~Gupta.
\newblock {Semantic Reduction of Thread Interleavings in Concurrent Programs}.
\newblock In {\em TACAS'09: Proceedings of the 15th Int. Conf. on Tools and
  Algorithms for the Construction and Analysis of Systems}, volume 5505, pages
  124--138, Berlin, Heidelberg, 2009. Springer-Verlag.

\bibitem{KU:76}
J.~B. Kam and J.~D. Ullman.
\newblock {Global Data Flow Analysis and Iterative Algorithms}.
\newblock {\em Journal of the ACM}, 23:158--171, January 1976.

\bibitem{Knu:11}
D.~E. Knuth.
\newblock {\em Combinatorial Algorithms}, volume~4A of {\em The Art of Computer
  Programming}.
\newblock Addison-Wesley, 2011.

\bibitem{KS:86}
W.~Kuich and A.~Salomaa.
\newblock {\em {Semirings, Automata, Languages}}.
\newblock Springer-Verlag, 1986.

\bibitem{Kue:91}
G.~K{\"u}ster.
\newblock {On the Hurwitz Product of Formal Power Series and Automata}.
\newblock {\em Theor. Comput. Sci.}, 83(2):261--273, 1991.

\bibitem{EUWM:10}
J.~Miller.
\newblock {Earliest Known Uses of Some of the Words of Mathematics, Rev. Aug.
  1, 2011}.
\newblock Website, 2011.
\newblock Available online at \url{http://jeff560.tripod.com/k.html}; visited
  on Sept. 26th 2011.

\bibitem{Pla:85}
B.~Plateau.
\newblock {On the Stochastic Structure of Parallelism and Synchronization
  Models for Distributed Algorithms}.
\newblock In {\em ACM SIGMETRICS}, volume~13, pages 147--154, 1985.

\bibitem{PA:91}
B.~Plateau and K.~Atif.
\newblock {Stochastic Automata Network For Modeling Parallel Systems}.
\newblock {\em IEEE Trans. Software Eng.}, 17(10):1093--1108, 1991.

\bibitem{Ram:00}
G.~Ramalingam.
\newblock {Context-Sensitive Synchronization-Sensitive Analysis Is
  Undecidable}.
\newblock {\em ACM Trans. Program. Lang. Syst.}, 22(2):416--430, 2000.

\bibitem{RGG+:95}
A.~W. Roscoe, P.~H.~B. Gardiner, M.~H. Goldsmith, J.~R. Hulance, D.~M. Jackson,
  and J.~B. Scattergood.
\newblock {Hierarchical Compression for Model-Checking CSP or How to Check
  10$^{\mbox{20}}$ Dining Philosophers for Deadlock}.
\newblock In {\em Proc. of the First Int. Workshop on Tools and Algorithms for
  Construction and Analysis of Systems (TACAS)}, pages 133--152, London, UK,
  1995. Springer-Verlag.

\bibitem{RP:86}
B.~G. Ryder and M.~C. Paull.
\newblock {Elimination Algorithms for Data Flow Analysis}.
\newblock {\em ACM Comput. Surv.}, 18(3):277--316, 1986.

\bibitem{RP:88}
B.~G. Ryder and M.~C. Paull.
\newblock {Incremental Data-Flow Analysis}.
\newblock {\em ACM Trans. Program. Lang. Syst.}, 10(1):1--50, 1988.

\bibitem{SGL:98}
V.~C. Sreedhar, G.~R. Gao, and Y.-F. Lee.
\newblock {A New Framework for Elimination-Based Data Flow Analysis Using DJ
  Graphs}.
\newblock {\em ACM Trans. Program. Lang. Syst.}, 20(2):388--435, 1998.

\bibitem{Tar:81}
R.~E. Tarjan.
\newblock {A Unified Approach to Path Problems}.
\newblock {\em Journal of the ACM}, 28(3):577--593, 1981.

\bibitem{Zeh:1858}
J.~G. Zehfuss.
\newblock {Ueber eine gewisse Determinante}.
\newblock {\em Zeitschrift f{\"u}r Mathematik und Physik}, 3:298--301, 1858.

\end{thebibliography}
%\bibliographystyleapproach{abbrv}
%\bibliographyapproach{ada}

%\begin{thebibliography}{10}
%\softraggedright
%place entries from bbl file here
%\end{thebibliography} 
\end{document}